\newcommand{\MONE}{$\f{MON}_=$\xspace}
\title{The Boolean Solution Problem\\ from the Perspective of Predicate Logic\\ --
  Extended Version --}
\author{Christoph Wernhard}
\institute{University of Potsdam, Germany}
\newcounter{equivcounter}
\newcounter{entailcounter}
\newcommand{\arity}[1]{\f{arity}(#1)}
\newcommand{\deq}{\Leftrightarrow}
\newcommand{\ndeq}{\not \Leftrightarrow}
\newcommand{\dimp}{\Rightarrow}
\newcommand{\drevimp}{\Leftarrow}
\newcommand{\tup}[1]{\boldsymbol{#1}}
\newcommand{\pps}{\tup{p}}
\newcommand{\qqs}{\tup{q}}
\newcommand{\uus}{\tup{u}}
\renewcommand{\xxs}{\tup{x}}
\newcommand{\XX}{\mathcal{X}}
\newcommand{\yys}{\tup{y}}
\newcommand{\Gs}{\tup{G}}
\newcommand{\Hs}{\tup{H}}
\newcommand{\Ts}{\tup{T}}
\newcommand{\Fs}{\tup{F}}
\newcommand{\Rs}{\tup{R}}
\newcommand{\Ss}{\tup{S}}
\renewcommand{\tts}{\tup{t}}
\newcommand{\ccs}{\tup{c}}
\newcommand{\SSS}{\tup{S}}
\newcommand{\SP}{SP\xspace}
\newcommand{\SPs}{SPs\xspace}
\newcommand{\UNSP}{\mbox{1-SP}\xspace}
\newcommand{\UNSPs}{\mbox{1-SPs}\xspace}
\newcommand{\PSP}{PSP\xspace}
\newcommand{\RSP}{RSP\xspace}
\newcommand{\UNRSP}{\mbox{1-RSP}\xspace}
\newcommand{\UNRSPs}{\mbox{1-RSPs}\xspace}
\newcommand{\emptyseq}{\epsilon}
\renewcommand{\mimp}{\;\Rrightarrow\;}
\renewcommand{\mequi}{\;\Lleftarrow\hspace{-0.75em}\Rrightarrow\;}
\renewcommand{\mimp}{\;\mathit{implies}\;}
\renewcommand{\mequi}{\;\mathit{iff}\;}
\def\squareforxed{$\vartriangleleft$\hspace*{-0.01em}}
\def\xed{\ifmmode\squareforqed\else{\unskip\nobreak\hfil
\penalty50\hskip1em\null\nobreak\hfil\squareforxed
\parfillskip=0pt\finalhyphendemerits=0\endgraf}\fi}
\newcommand{\FC}{\mathcal{F}}
\newcommand{\FCI}{\mathcal{F}}
\newcommand{\PC}{\mathcal{P}}
\newcommand{\tand}{\text{ and }}
\newcommand{\ssubst}[3]{\f{SUBST}(#1,#2,#3)}
\renewcommand{\subst}[3]{\warnSubstIsUndefined}
\newcommand{\clean}[1]{\f{CLEAN}(#1)}
\newcommand{\leftside}{\mathit{Left~side}}
\newcommand{\rightside}{\mathit{Right~side}}
\newcommand{\sol}[2]{#1{:}#2}
\newcommand{\free}[1]{\f{free}(#1)}
\newcommand{\msym}[1]{#1}
\newcommand{\PCOND}{PARA-CONDITION}
\renewcommand{\PCOND}{}
\newcommand{\PCONDX}{}
\newcommand{\subunit}[1]{\subsection{#1}}
\newcommand{\eref}[1]{Sect.~\ref{#1}}
\newcommand{\algoinput}{\noindent\textsc{Input:} }
\newcommand{\algomethod}{\noindent\textsc{Method:} }
\newcommand{\algooutput}{\noindent\textsc{Output:} }
\newcommand{\algoskip}{\vspace{1pt}}
\newcommand{\ELIM}{ELIM}
\newcommand{\WITA}{\mathit{ELIM\hyph WITNESS}}
\newcommand{\ALWIT}{\f{SOLVE\hyph BY\hyph WITNESSES}}
\newcommand{\USPA}{\mathit{1\hyph SOLVE}}
\newcommand{\ALSOLVEQ}{\f{SOLVE\hyph ON\hyph SECOND\hyph ORDER}}
\newcommand{\ALMSE}{\f{SOLVE\hyph SUCC\hyph ELIM}}
\begin{document}

\maketitle

\begin{abstract}
Finding solution values for unknowns in Boolean equations was a principal
reasoning mode in the \name{Algebra of Logic} of the 19th century.  Schröder
investigated it as \name{Auflösungsproblem} (\name{solution problem}). It is
closely related to the modern notion of Boolean unification. Today it is
commonly presented in an algebraic setting, but seems potentially useful also
in knowledge representation based on predicate logic.  We show that it can be
modeled on the basis of first-order logic extended by second-order
quantification.  A wealth of classical results transfers, foundations for
algorithms unfold, and connections with second-order quantifier elimination
and Craig interpolation become apparent.
Although for first-order inputs the set of solutions is recursively
enumerable, the development of constructive methods remains a challenge.  We
identify some cases that allow constructions, most of them based on Craig
interpolation.
\end{abstract}

\noindent
Revision: July 02, 2025 %

\newpage
\setcounter{tocdepth}{3}

\newcommand{\marktocheader}
{\markboth{\small \hfill Table of Contents}%
{\small Table of Contents \hfill}}

\cftsetindents{section}{0em}{1.5em}
\cftsetindents{subsection}{1.5em}{2.3em}
\renewcommand{\cftpartfont}{\bf\normalsize}
\renewcommand{\cftpartpagefont}{\bf\normalsize}
\renewcommand{\cftpartpresnum}{Part \marktocheader}
\renewcommand{\cftpartafterpnum}{\vspace{1.5ex}}
\makeatletter
\renewcommand{\@tocrmarg}{2.55em plus1fil}
\makeatother

\setlength{\cftbeforepartskip}{12pt}
\renewcommand{\cftpartafterpnum}{\vskip6pt}
\setlength{\cftbeforesecskip}{3pt}
\renewcommand{\cftsecafterpnum}{\vskip3pt}

\tableofcontents

\section{Introduction}
\label{sec-intro}

Finding solution values for unknowns in Boolean equations was a principal
reasoning mode in the \name{Algebra of Logic} of the 19th century. Schröder
\cite{schroeder:all} investigated it as \name{Auflösungsproblem}
(\name{solution problem}). It is closely related to the modern notion of
Boolean unification. For a given formula with occurrences of unknowns,
formulas are sought such that after substituting the unknowns with them the
given formula becomes valid or, dually, unsatisfiable. Of interest are also
most general solutions, condensed representations of all solution
substitutions. The \name{method of successive eliminations}, which traces back
to Boole, is a central technique here. Schröder investigated
\name{reproductive solutions} as most general solutions, anticipating the
concept of \name{most general unifier}.
A comprehensive modern formalization based on this material, along with
historic remarks, is presented by Rudeanu \cite{rudeanu:74} in the framework
of Boolean algebra. In automated reasoning, variations of these techniques
have been considered mainly in the late 80s and early 90s with the motivation
to enrich Prolog and constraint processing by Boolean unification with respect
to propositional formulas handled as terms
\cite{martin-nipkow:rings-early,buettner:simonis:87,martin-nipkow:rings-journal,martin:nipkov:boolean:89,kkr:90,kkr:95}.
An early implementation, based on \cite{rudeanu:74}, has been also described
in \cite{Sofronie89}. An implementation with BDDs of the algorithm from
\cite{buettner:simonis:87} is reported in \cite{carlsson:boolean:91}. The
$\mathrm{\Pi}^P_2$-completeness of Boolean unification with constants was
proven only later in \cite{kkr:90,kkr:95} and seemingly independently in
\cite{baader:boolean:98}. Schröder's results were developed further by
Löwenheim \cite{loewenheim:1908,loewenheim:1910}. A generalization of Boole's
method beyond propositional logic to relational monadic formulas
has been presented by
Behmann in the early 1950s
\cite{beh:50:aufloesungs:phil:1,beh:51:aufloesungs:phil:2}. Recently the
complexity of Boolean unification in a predicate logic setting has been
investigated for some formula classes, in particular for quantifier-free
first-order formulas \cite{eberhard}.
A brief discussion of Boolean reasoning in comparison with predicate logic can
be found in~\cite{brown:boolean:03}.

Here we remodel the solution problem formally along with basic classical
results and some new generalizations in the framework of first-order logic
extended by second-order quantification.  The main thesis of this work is that
it is possible and useful to apply second-order quantification consequently
throughout the formalization.  What otherwise would require meta-level
notation is then expressed just with formulas.  As will be shown, classical
results can be reproduced in this framework in a way such that applicability
beyond propositional logic, possible algorithmic variations, as well as
connections with second-order quantifier elimination and Craig interpolation
become visible.
Of course, methods to solve Boolean equations on first-order formulas do not
necessarily terminate.  However, the set of solutions is recursively
enumerable. By the modeling in predicate logic we try to pin down the
essential points of divergence from propositional logic.
Special cases that allow solution construction are identified, most of them
related to definiens computation by Craig interpolation.

The envisaged application scenario is to let solving ``solution problems'', or
Boolean equation solving, on the basis of predicate logic join reasoning modes
such as second-order quantifier elimination (or ``semantic forgetting''), Craig
interpolation and abduction to support the mechanized reasoning about
relationships between theories and the extraction or synthesis of subtheories
with given properties. On the practical side, the aim is to relate it to
reasoning techniques such as Craig interpolation on the basis of first-order
provers, SAT and QBF solving, and second-order quantifier elimination based on
resolution \cite{scan} and the Ackermann approach \cite{dls}.
Numerous applications of Boolean equation solving in various fields are
summarized in \cite[Chap.~14]{rudeanu:01}.  Applications in automated theorem
proving and proof compression are mentioned in \cite[Sect.~7]{eberhard}.  The
prevention of certain redundancies has been described as application of
(concept) unification in description logics \cite{baader:narendran:01}.  Here
the synthesis of definitional equivalences is sketched as an application.

The rest of the paper is structured as follows: Notation, in particular for
substitution in formulas, is introduced in Sect.~\ref{sec-notation}.
In~Sect.~\ref{sec-solprob} a formalization of the solution problem is
presented and related to different points of view. Section~\ref{sec-mse} is
concerned with abstract properties of and algorithmic approaches to solution
problems with several unknowns. Conditions under which solutions exist are
discussed in Sect.~\ref{sec-existence}. Adaptations of classical material on
reproductive solutions are given in Sect.~\ref{sec-reproductive}. In
Sect.~\ref{sec-construction} various techniques for solution construction in
particular cases are discussed. Section~\ref{sec-conclusion} closes the paper
with concluding remarks.

The material in Sect.~\ref{sec-notation}--\ref{sec-existence} has also been
published as \cite{cw:boolean:frocos}.

\section{Notation and Preliminaries}
\label{sec-notation}

\subunit{Notational Conventions}

We consider formulas in first-order logic with equality, extended by
second-order quantification upon predicates.
They are constructed from atoms (including equality atoms), constant operators
$\true$, $\false$, the unary operator $\lnot$, binary operators $\land, \lor$
and quantifiers $\forall, \exists$ with their usual meaning.  Further binary
operators~$\imp, \revimp, \equi$, as well as $n$-ary versions of $\land$ and
$\lor$ can be understood as meta-level notation.  The operators $\land$ and
$\lor$ bind stronger than $\imp$, $\revimp$ and $\equi$. The scope of $\lnot$,
the quantifiers, and the $n$-ary connectives is the immediate subformula to
the right.
A subformula occurrence has in a given formula \defname{positive (negative)
  polarity} if it is in the scope of an even (odd) number of negations.

A \defname{vocabulary} is a set of \defname{symbols}, that is, predicate
symbols (briefly \defname{predicates}), function symbols (briefly
\defname{functions}) and \defname{individual symbols}. (Individual symbols are
not partitioned into variables and constants.  Thus, an individual symbol is
-- like a predicate -- considered as variable if and only if it is bound by a
quantifier.) The arity of a predicate or function $s$ is denoted by
$\arity{s}$.
The set of symbols that occur \defname{free} in a formula~$F$ is denoted by
$\free{F}$.  The property that no member of $\free{F}$ is bound by a
quantifier occurrence in $F$ is expressed as $\clean{F}$.  Symbols not present
in the formulas and other items under discussion are called \defname{fresh}.
The \defname{clean variant of} a given formula~$F$ is the formula~$G$ obtained
from~$F$ by successively replacing all bound symbols with fresh symbols such
that $\clean{G}$. The replacement is done in some way not specified here
further such that each formula has a \emph{unique} clean variant.

We write $F \entails G$ for \name{$F$ entails $G$}; $\valid F$ for \name{$F$
  is valid}; and $F \equiv G$ for \name{$F$ is equivalent to $G$}, that is,
\name{$F \entails G$ and $G \entails F$}.

We write \defname{sequences} of symbols, of terms and of formulas by
juxtaposition.  Their length is assumed to be finite. The empty sequence is
written $\emptyseq$. A sequence with length $1$ is not distinguished from its
sole member.  In contexts where a set is expected, a sequence stands for the
set of its members.
Atoms are written in the form $p(\tts)$, where $\tts$ is a sequence of terms
whose length is the arity of the predicate~$p$.  Atoms of the form
$p(\emptyseq)$, that is, with a nullary predicate $p$, are written also as
$p$.  For a sequence of \defname{fresh} symbols we assume that its members are
distinct.  A sequence $p_1\ldots p_n$ of predicates is said to \defname{match}
another sequence $q_1 \ldots q_m$ if and only if $n=m$ and for all $i \in
\{1,\ldots,n\}$ it holds that $\arity{p_i} = \arity{q_i}$.  If $\sss =
s_1\ldots s_n$ is a sequence of symbols, then $\forall \sss$ stands for
$\forall s_1 \ldots \forall s_n$ and $\exists \sss$ for $\exists s_1 \ldots
\exists s_n$.  

If $\Fs = F_1\ldots F_n$ is a sequence of formulas, then $\clean{\Fs}$ states
$\clean{F_i}$ for all $i \in \{1,\ldots,n\}$, and $\free{\Fs} =
\bigcup_{i=1}^n\free{F_i}$.
If $\Gs = G_1\ldots G_n$ is a second sequence of
formulas, then $\Fs \equiv \Gs$ stands for $F_1 \equiv G_1$ \name{and \ldots
  and} $F_n \equiv G_n$.

As explained below, in certain contexts the individual symbols in the set $\XX
= \{x_i \mid i \geq 1\}$ play a special role. For example, in the following
shorthands for a predicate~$p$, a formula~$F$ and $\xxs = x_1\ldots
x_{\arity{p}}$: $p \deq F$ stands for $\forall \xs\, (p(\xxs) \equi F)$; $p
\ndeq F$ for $\lnot (p \deq F)$; $p \dimp F$ for $\forall \xs\, (p(\xxs) \imp
F)$; and $p \drevimp F$ for $\forall \xs\, (p(\xxs) \revimp F)$.

\subunit{Substitution with Terms and Formulas}
\label{sec-not-form-subst}
To express systematic substitution of individual symbols and predicates
concisely, we use the following notation.

\begin{itemize}

\item \textit{$F(\ccs)$ and $F(\tts)$ 
-- Notational Context for Substitution of Individual Symbols.}  Let $\ccs =
c_1\ldots c_n$ be a sequence of distinct individual symbols.  We write $F$ as
$F(\ccs)$ to declare that for a sequence~$\tts = t_1 \ldots t_n$ of terms the
expression $F(\tts)$ denotes $F$ with, for $i \in \{1,\ldots,n\}$, all free
occurrences of $c_i$ replaced by $t_i$.
\item \textit{$F[\pps]$, $F[\Gs]$ and $F[\qqs]$ --
Notational Context for Substitution of Predicates.}  Let $\pps = p_1\ldots
p_n$ be a sequence of distinct predicates and let $F$ be a formula.  We write
$F$ as $F[\pps]$ to declare the following.
\begin{itemize}

\item \label{item-nota-pg} For a sequence~$\Gs = G_1(x_1 \ldots
  x_{\arity{p_1}}) \ldots G_n(x_1 \ldots x_{\arity{p_n}})$ of formulas the
  expression $F[\Gs]$ denotes $F$ with, for $i \in \{1,\ldots, n\}$, each atom
  occurrence $p_i(t_1 \ldots t_{\arity{p_i}})$ where $p_i$ is free in $F$
  replaced by $G_i(t_1 \ldots t_{\arity{p_i}})$.

\item \label{item-nota-pq} For a sequence $\qqs = q_1 \ldots q_n$ of
  predicates that matches $\pps$
  the expression $F[\qqs]$ denotes $F$ with, for $i \in \{1,\ldots, n\}$, each
  free occurrence of $p_i$ replaced by $q_i$.

\item The above notation $F[\Ss]$, where $\Ss$ is a sequence of formulas or of
  predicates, is generalized to allow also $p_i$ at the $i$th position of
  $\Ss$, for example $F[G_1 \ldots G_{i-1} p_{i} \ldots p_n]$. The formula
  $F[\Ss]$ then denotes $F$ with only those predicates $p_i$ with $i \in
  \{1,\ldots,n\}$ that are not present at the $i$th position in $\SSS$
  replaced by the $i$th component of $\SSS$ as described above (in the example
  only $p_{1},\ldots,p_{i-1}$ would be replaced).

\end{itemize}

\item \textit{$\Fs[\pps]$ -- Notational Context for Substitution in a Sequence
  of Formulas.} If $\Fs = F_1\ldots F_n$ is a sequence of formulas, then
  $\Fs[\pps]$ declares that $\Fs[\Ss]$, where $\Ss$ is a sequence with the
  same length as $\pps$, is to be understood as the sequence $F_1[\Ss] \ldots
  F_n[\Ss]$ with the meaning of the members as described above.
\end{itemize}
In the above notation for substitution of predicates by formulas the members
$x_1,\ldots, x_{\arity{p}}$ of $\XX$ play a special role: $F[\Gs]$ can be
alternatively considered as obtained by replacing predicates $p_i$ with
$\lambda$-expressions $\lambda x_1 \ldots \lambda x_{\arity{p_i}} . G_i$
followed by $\beta$-con\-version.  The shorthand $p \deq F$ can be
correspondingly considered as $p \equi \lambda x_1 \ldots \lambda
x_{\arity{p}} . G$.
The following property \name{substitutible} specifies
preconditions for meaningful simultaneous substitution of formulas for
predicates.
\begin{defn}[$\ssubst{\Gs}{\pps}{F}$ -- Substitutible Sequence of Formulas]
\label{def-subst-seq-new}
A sequence $\Gs = G_1\ldots G_m$ of formulas is called \defname{substitutible
  for} a sequence $\pps = p_1 \ldots p_n$ of distinct predicates \defname{in}
a formula $F$, written $\ssubst{\Gs}{\pps}{F}$, if and only if $m = n$ and for
all $i \in \{1,\ldots,n\}$ it holds that
\begin{enumerateinline}
\iteminline \label{enum-subst-1} No free occurrence of $p_i$ in $F$ is in the
scope of a quantifier occurrence that binds a member of $\free{G_i}$;
\iteminline \label{enum-subst-2} $\free{G_i} \cap \pps = \emptyset$; and
\iteminline $\free{G_i} \cap \{x_j \mid j > \arity{p_i}\} = \emptyset$.
\end{enumerateinline}
\end{defn}
The following propositions demonstrate the introduced notation for formula
substitution. It is well known that terms can be ``pulled out of'' and
``pushed in to'' atoms, justified by the equivalences $p(t_1\ldots t_n)\;
\equiv\; \exists x_1 \ldots \exists x_n\, (p(x_1 \ldots x_n) \land
\bigwedge_{i=1}^n x_i=t_i)\; \equiv\; \forall x_1 \ldots \forall x_n\, (p(x_1
\ldots x_n) \lor \bigvee_{i=1}^n x_i\neq t_i)$, which hold if no member of
$\{x_1,\ldots, x_n\}$ occurs in the terms $t_1,\ldots,t_n$. Analogously,
substitutible subformulas can be ``pulled out of'' and ``pushed in to''
formulas.
\begin{prop}[Pulling-Out and Pushing-In of Subformulas]
Let $\Gs = G_1\ldots G_n$ be a sequence of formulas, let $\pps = p_1\ldots
p_n$ be a sequence of distinct predicates and let $F = F[\pps]$ be a formula
such that $\ssubst{\Gs}{\pps}{F}$. Then

\smallskip

\slab{prop-pullout}
$F[\Gs]\; \equiv\; \exists \pps\, (F \land 
\bigwedge_{i=1}^{n} (p_i \deq G_i))\; \equiv\;
\forall \pps\, (F \lor \bigvee_{i=1}^{n} (p_i \not \deq G_i)).$

\slab{prop-fixing}
$\forall \pps\, F\; \entails\; F[\Gs]\; \entails\; \exists \pps\, F.$
\end{prop}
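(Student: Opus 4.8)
The plan is to reason semantically, using the standard interpretation of the second-order quantifiers $\exists\pps$ and $\forall\pps$ as ranging over all relations of the matching arities. Fix an arbitrary interpretation $\mathcal{I}$ together with an assignment to the free individual symbols, and for each $i \in \{1,\ldots,n\}$ let $P_i$ be the relation defined by $G_i$ in $\mathcal{I}$, namely the set of tuples assigned to $(x_1,\ldots,x_{\arity{p_i}})$ under which $G_i$ holds. By the second condition of Definition~\ref{def-subst-seq-new}, $\free{G_i} \cap \pps = \emptyset$, so $P_i$ does not depend on the interpretation of any $p_j$; by the third condition, $G_i$ has no free occurrence of an $x_j$ with $j > \arity{p_i}$, so $P_i$ is a well-defined relation of arity $\arity{p_i}$. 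Let $\mathcal{I}'$ be the interpretation agreeing with $\mathcal{I}$ except that each $p_i$ is reinterpreted as $P_i$.

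The crux is a semantic substitution lemma: under the hypothesis $\ssubst{\Gs}{\pps}{F}$, the formula $F[\Gs]$ holds in $\mathcal{I}$ if and only if $F$ holds in $\mathcal{I}'$. Informally, replacing each free atom $p_i(\tts)$ by $G_i(\tts)$ has the same effect as leaving $F$ untouched while reinterpreting $p_i$ as the extension $P_i$ of $G_i$. I would prove this by induction on the structure of $F$; the only delicate case is a quantifier $\forall z$ or $\exists z$, where the substituted $G_i$ must not have $z$ captured. This is exactly what the first condition of Definition~\ref{def-subst-seq-new}—that no free occurrence of $p_i$ in $F$ lies within the scope of a quantifier binding a member of $\free{G_i}$—excludes, so the induction goes through occurrence by occurrence.

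Granting the lemma, the first pair of equivalences follows at once. In $\exists\pps\,(F \land \bigwedge_{i=1}^{n}(p_i \deq G_i))$ the conjoined constraints force any witnessing interpretation of $\pps$ to make each $p_i$ coincide with $P_i$; the unique such witness is $\mathcal{I}'$, so the existential formula holds in $\mathcal{I}$ iff $F$ holds in $\mathcal{I}'$, which by the lemma is iff $F[\Gs]$ holds in $\mathcal{I}$. The universal variant is the De Morgan dual: $\forall\pps\,(F \lor \bigvee_{i=1}^{n}(p_i \ndeq G_i))$ asserts that every interpretation of $\pps$ either violates some constraint $p_i \deq G_i$ or satisfies $F$; since $\mathcal{I}'$ is the only interpretation meeting all constraints, the formula again reduces to ``$F$ holds in $\mathcal{I}'$'', i.e.\ to $F[\Gs]$. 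As $\mathcal{I}$ and the assignment were arbitrary, both equivalences hold everywhere.

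The two entailments then follow by monotonicity of the quantifiers, without further work. Dropping the conjunction only weakens the matrix, so the already-established equivalence gives $F[\Gs] \equiv \exists\pps\,(F \land \bigwedge_{i=1}^{n}(p_i \deq G_i)) \entails \exists\pps\,F$. For the other entailment, if $\forall\pps\,F$ holds in $\mathcal{I}$ then $F$ holds under every interpretation of $\pps$, in particular under $\mathcal{I}'$, which by the substitution lemma is precisely $F[\Gs]$; hence $\forall\pps\,F \entails F[\Gs]$. I expect essentially all of the genuine effort to sit in the substitution lemma—in verifying that the three substitutibility conditions are exactly the hypotheses that make the syntactic replacement agree with the semantic reinterpretation and that each $P_i$ is well defined—whereas the quantifier manipulations above are then purely formal.
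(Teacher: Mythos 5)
Your proof is correct. Note that the paper itself gives no proof of this proposition: it is stated in the preliminaries as a notational illustration, presented as the formula-level analogue of the well-known equivalences for pulling terms out of atoms, so there is no argument in the paper to compare against. Your route --- a semantic substitution lemma ($F[\Gs]$ holds in $\mathcal{I}$ iff $F$ holds in the modified interpretation $\mathcal{I}'$ sending each $p_i$ to the extension $P_i$ of $G_i$), followed by the observation that the constraints $\bigwedge_{i=1}^n (p_i \deq G_i)$ pin down $\mathcal{I}'$ as the unique admissible interpretation of $\pps$ --- is the standard and natural way to establish it, and you correctly identify where each clause of $\ssubst{\Gs}{\pps}{F}$ is used: clause~(\ref{enum-subst-1}) to avoid capture in the induction, clause~(\ref{enum-subst-2}) to make each $P_i$ independent of the interpretation of $\pps$ (which is what gives both the well-definedness of $\mathcal{I}'$ and the uniqueness of the witness), and clause~(3) to make $P_i$ a relation of the right arity. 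The one point left implicit, at the same level of informality as the paper, is that evaluating the substituted instance $G_i(t_1\ldots t_{\arity{p_i}})$ at a point agrees with membership of the tuple of values of $t_1,\ldots,t_{\arity{p_i}}$ in $P_i$; this needs the usual term-substitution lemma inside the atom case of your induction, but it is routine. The derivation of \ref{prop-fixing} from \ref{prop-pullout} by weakening the matrix and by specializing the universal quantifier to $\mathcal{I}'$ is exactly right.
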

\name{Ackermann's Lemma} \cite{ackermann:35} can be applied in certain cases
to \defname{eliminate} second-order quantifiers, that is, to compute for a given
second-order formula an equivalent first-order formula.  It plays an important
role in many modern methods for elimination and semantic forgetting -- see,
e.g.,
\cite{dls,sqema,soqe,schmidt:2012:ackermann,ks:2013:frocos,ZhaoSchmidt15b}.
\begin{prop}[Ackermann's Lemma, Positive Version]
\label{prop-ackermann}
Let $F, G$ be formulas and let $p$ be a predicate such that
$\ssubst{G}{p}{F}$, $p \notin \free{G}$ and all free occurrences of $p$ in $F$
have negative polarity.  Then
$\exists p\, ((p \drevimp G) \land F[p])\; \equiv\; F[G]$.
\end{prop}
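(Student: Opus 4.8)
The plan is to establish the two entailments $F[G] \entails \exists p\,((p \drevimp G) \land F[p])$ and $\exists p\,((p \drevimp G) \land F[p]) \entails F[G]$ separately, with the whole weight of the argument resting on the fact that negative polarity of $p$ makes $F$ \emph{antitone} in $p$. Throughout I read the matrix semantically: in a structure a choice of interpretation for the quantified $p$ is a relation $P$ of the appropriate arity, the condition $(p \drevimp G)$ holds under $P$ exactly when the relation denoted by $G$ is contained in $P$, and, by the hypothesis $\ssubst{G}{p}{F}$, the formula $F[G]$ holds precisely when $F[p]$ holds under the interpretation of $p$ by the relation that $G$ denotes. This last equivalence is the usual substitution lemma, and $\ssubst{G}{p}{F}$ is exactly what rules out variable capture and makes it valid.

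For the direction $F[G] \entails \exists p\,((p \drevimp G) \land F[p])$ I would use $G$ itself as the witness for $p$. Writing $H[p]$ for $(p \drevimp G) \land F[p]$, substituting $G$ for $p$ gives $H[G] \equiv (G \drevimp G) \land F[G]$, where $G \drevimp G$ abbreviates $\forall \xxs\,(G \revimp G)$ and is therefore valid; hence $H[G] \equiv F[G]$. The pushing-in entailment $F[\Gs] \entails \exists \pps\, F$ of the Pulling-Out and Pushing-In proposition, instantiated with $H$ for $F$ and $G$ for $\Gs$, then yields $F[G] \equiv H[G] \entails \exists p\, H$, as required. The only side condition to check here is that $G$ remains substitutible for $p$ in $H$, which follows from $\ssubst{G}{p}{F}$ together with $p \notin \free{G}$.

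For the converse I would argue semantically. Fix a structure satisfying $\exists p\,((p \drevimp G) \land F[p])$ and let $P$ be a witnessing interpretation of $p$. From $(p \drevimp G)$ the relation $Q$ denoted by $G$ satisfies $Q \subseteq P$. Because every free occurrence of $p$ in $F$ has negative polarity, $F$ is antitone in $p$, so shrinking the interpretation of $p$ from $P$ down to $Q \subseteq P$ preserves truth of $F$. Hence $F$ holds under the interpretation $Q$ of $p$, and by the substitution lemma this is exactly the satisfaction of $F[G]$. Since $F[G]$ contains no free $p$, because $p \notin \free{G}$, this conclusion does not depend on the discharged witness, giving $\exists p\,((p \drevimp G) \land F[p]) \entails F[G]$.

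The main obstacle, and the one genuinely new piece of work, is the antitonicity claim: I would prove, by induction on the structure of $F$, the standard polarity–monotonicity lemma stating that positive occurrences of $p$ make $F$ monotone and negative occurrences make it antitone in the interpretation of $p$, the two roles being exchanged at each negation and at the antecedent of an implication. The remaining effort is bookkeeping: verifying the substitution lemma under $\ssubst{G}{p}{F}$ so that the syntactic object $F[G]$ and the semantic reinterpretation $p \mapsto Q$ coincide, and confirming the substitutibility side condition used in the first entailment.
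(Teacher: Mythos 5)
Your proof is correct, but there is no proof in the paper to compare it against: Prop.~\ref{prop-ackermann} is stated as a classical result with a citation to Ackermann (1935) and left unproved, so your argument is genuinely new relative to the text. It is the standard argument, and it is sound. The easy direction -- witnessing $\exists p$ by $G$ itself so that the matrix collapses to $(G \drevimp G) \land F[G] \equiv F[G]$ and then invoking the pushing-in entailment of Prop.~\ref{prop-fixing} -- is in fact echoed by the paper's later remark that the right side $F[G]$ is equivalent to $\forall x_1 \ldots \forall x_{\arity{p}}\, (G \revimp G) \land F[G]$. The substantive direction rests, as you correctly isolate, on the polarity--monotonicity lemma: since every free occurrence of $p$ in $F$ is negative, a witness $P \supseteq Q$ (with $Q$ the relation defined by $G$, independent of $p$ because $p \notin \free{G}$) can be shrunk to $Q$ while preserving $F$, and $\ssubst{G}{p}{F}$ licenses the substitution lemma identifying $F$ under $p \mapsto Q$ with $F[G]$. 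Two details are worth spelling out if you write this up in the paper's setting: the monotonicity induction must also cover second-order quantifiers, which the paper's formulas admit (it goes through unchanged, as quantifiers of either order preserve monotonicity), and the substitutibility of $G$ for $p$ in the conjunct $p \drevimp G = \forall \xxs\, (p(\xxs) \revimp G)$ is unproblematic even though $x_1,\ldots,x_{\arity{p}}$ are bound there, because the atom's arguments are exactly $\xxs$ and the replacement therefore yields $G$ verbatim -- the same convention the paper already relies on in Prop.~\ref{prop-pullout}.
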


\section{The Solution Problem from Different Angles}
\label{sec-solprob}

\subsection{Basic Formal Modeling}

Our formal modeling of the Boolean solution problem is based on two concepts,
\name{solution problem} and \name{particular solution}.
\begin{defn}[{$F[\pps]$} -- Solution Problem (\SP), Unary Solution Problem (\UNSP)]
\label{def-sp}
A \defname{solution problem (\SP)} $F[\pps]$ is a pair of a formula $F$ and a
sequence~$\pps$ of distinct predicates.  The members of $\pps$ are called the
\name{unknowns} of the \SP. The length of $\pps$ is called the \defname{arity}
of the \SP.  A \SP with arity $1$ is also called \defname{unary solution
  problem (\UNSP)}.
\end{defn}
The notation $F[\pps]$ for solution problems establishes as a ``side effect''
a context for specifying substitutions of $\pps$ in $F$ by formulas as
specified in Sect.~\ref{sec-not-form-subst}.
\begin{defn}[Particular Solution]
\label{def-sol-particular} A \defname{particular solution} (briefly
\name{solution}) \defname{of} a \SP $F[\pps]$ is defined as a sequence
$\Gs$ of formulas such that
$\ssubst{\Gs}{\pps}{F}$ and \label{item-def-sol-particular-valid}
$\valid F[\Gs]$.
\end{defn}
The property $\ssubst{\Gs}{\pps}{F}$ in this definition implies that no member
of $\pps$ occurs free in a solution.
Of course, \name{particular solution} can also be defined on the basis of
unsatisfiability instead of validity, justified by the equivalence of $\valid
F[\Gs]$ and $\lnot F[\Gs] \entails \false$.  The variation based on validity has
been chosen here because then the associated second-order quantifications are
existential, matching the usual presentation of elimination techniques.

Being a \name{solution} is aside of the substitutibility condition a
\emph{semantic} property, that is, applying to formulas modulo equivalence: If
$\Gs$ is a solution of~$F[\pps]$, then all sequences $\Hs$ of formulas such
that $\ssubst{\Hs}{\pps}{F}$ and $\Gs \equiv \Hs$ (that is, if $\Gs =
G_1\ldots G_n$ and $\Hs = H_1\ldots H_n$, then $H_i \equiv G_i$ holds for all
$i \in \{1,\ldots,n\}$) are also solutions of $F[\pps]$.

\name{Solution problem} and \name{solution} as defined here provide
abstractions of computational problems in a technical sense that would be
suitable, e.g., for complexity analysis.  Problems in the latter sense can be
obtained by fixing involved formula and predicate classes.
The abstract notions are adequate to develop much of the material on the
``Boolean solution problem'' shown here.  
On occasion, however, we consider restrictions, in particular to propositional
and to first-order formulas, as well as to nullary predicates.  As shown in
\eref{sec-reproductive}, further variations of \name{solution}, general
representations of several particular solutions, can be introduced on the
basis of the notions defined here.

\begin{examp}[A Solution Problem and its Particular Solutions]
\label{examp-sp}
As an example of a solution problem consider $F[p_1p_2]$ where
\[
\begin{array}{rcl}
F & = & 
\forall \msym{x} \, (\msym{a}(\mathit{x}) \imp  \msym{b}(\mathit{x}))\; \imp \\
&& (\forall \msym{x} \, (\msym{p_{1}}(\mathit{x}) \imp  \msym{p_{2}}(\mathit{x})) \;\land\;
\forall \msym{x} \, (\msym{a}(\mathit{x}) \imp  \msym{p_{2}}(\mathit{x}))\;
 \land\;
\forall \msym{x} \, (\msym{p_{2}}(\mathit{x}) \imp  \msym{b}(\mathit{x}))).
\end{array}
\]
The intuition is that the antecedent $\forall \mathit{x}\,
(\msym{a}(\mathit{x}) \imp \msym{b}(\mathit{x}))$ specifies the ``background
theory'', and w.r.t. that theory the unknown $p_1$ is ``stronger'' than the
other unknown $p_2$, which, in addition, is ``between'' $a$ and $b$.  Examples
of solutions are: $a(x_1)a(x_1)$; $a(x_1)b(x_1)$; $\false a(x_1)$;
$b(x_1)b(x_1)$; and $(a(x_1) \land b(x_1))(a(x_1) \lor b(x_1))$. No solutions
are, for example: $b(x_1)a(x_1)$; $a(x_1)\false$; and all members of $\{\true,
\false\} \times \{\true, \false\}$.
\end{examp}
Assuming a countable vocabulary, the set of valid first-order formulas is
recursively enumerable. It follows that for an $n$-ary \SP $F[\pps]$ where $F$
is first-order the set of those of its particular solutions that are sequences
of first-order formulas is also recursively enumerable: An $n$-ary sequence
$\Gs$ of well-formed first-order formulas that satisfies the syntactic
restriction $\ssubst{\Gs}{\pps}{F}$ is a solution of $F[\pps]$ if and only if
$F[\Gs]$ is valid.

In the following subsections further views on the solution problem will be
discussed: as unification or equation solving, as a special case of
second-order quantifier elimination, and as related to determining
definientia and interpolants.

\subsection{View as Unification}
\label{sec-unification}

Because $\valid F[\Gs]$ if and only if $F[\Gs] \equiv \true$, a particular
solution of $F[\pps]$ can be seen as a unifier of the two formulas $F[\pps]$
and $\true$ modulo logical equivalence as equational theory. From the
perspective of unification, the two formulas appear as terms, the members of
$\pps$ play the role of variables, and the other predicates play the role of
constants. 

Vice versa, a unifier of two formulas can be seen as a particular solution,
justified by the equivalence of $L[\Gs] \equiv R[\Gs]$ and $\valid (L \equi
R)[\Gs]$, which holds for sequences $\Gs$ and $\pps$ of formulas and
predicates, respectively, and formulas $L = L[\pps], R = R[\pps]$, $(L \equi
R) = (L \equi R)[\pps]$ such that $\ssubst{\Gs}{\pps}{L}$ and
$\ssubst{\Gs}{\pps}{R}$.
This view of formula unification can be generalized to sets with a finite
cardinality $k$ of equivalences, since \name{for all $i \in \{1,\ldots,k\}$ it
  holds that $L_i \equiv R_i$} can be expressed as $\valid \bigwedge_{i =
  1}^{k}(L_i \equi R_i)$.

An exact correspondence between solving a solution problem $F[p_1\ldots p_n]$
where $F$ is a propositional formula with $\lor,\land,\lnot,\false,\true$ as
logic operators and \mbox{E-uni}\-fi\-cation with constants in the theory of
Boolean algebra (with the mentioned logic operators as signature) applied to
$F =_E \true$ can be established: Unknowns $p_1,\ldots,p_n$ correspond to
variables and propositional atoms in $F$ correspond to constants.  A
particular solution $G_1\ldots G_n$ corresponds to a unifier $\{p_1 \leftarrow
G_1, \ldots, p_n \leftarrow G_n\}$ that is a ground substitution.  The
restriction to ground substitutions is due to the requirement that unknowns do
not occur in solutions.
General solutions (see \eref{sec-reproductive}) are expressed with further special
parameter atoms, different from the unknowns. These correspond to fresh
variables in unifiers.

A generalization of Boolean unification to predicate logic with various
specific problems characterized by the involved formula classes has been
investigated in \cite{eberhard}. The material presented here 
is largely orthogonal to that work, but a technique from \cite{eberhard} has
been adapted to more general cases in \eref{sec-ehw}.

\subsection{View as Construction of Elimination Witnesses}

Another view on the solution problem is related to eliminating second-order
quantifiers by replacing the quantified predicates with ``witness formulas''.
\begin{defn}[\ELIM-Witness]
\label{def-so-witness}
Let $\pps = p_1 \ldots p_n$ be a sequence of distinct predicates.  An
\defname{\ELIM-witness of} $\pps$ \defname{in} a formula $\exists \pps\,
F[\pps]$ is defined as a sequence $\Gs$ of formulas such that
$\ssubst{\Gs}{\pps}{F}$ and $\exists \pps\, F[\pps] \equiv F[\Gs]$.
\end{defn}
The condition $\exists \pps\, F[\pps] \equiv F[\Gs]$ in this definition is
equivalent to $\valid \lnot F[\pps] \lor F[\Gs]$.
If $F[\pps]$ and the considered $\Gs$ are first-order, then finding an
\ELIM-witness is second-order quantifier elimination on a first-order argument
formula, restricted by the condition that the result is of the form $F[\Gs]$.
Differently from the general case of second-order quantifier elimination on
first-order arguments, the set of formulas for which elimination succeeds
and, for a given formula, the set of its elimination results, are then
recursively enumerable.
Some well-known elimination methods yield \ELIM-witnesses, for example rewriting
a formula that matches the left side of Ackermann's Lemma
(Prop.~\ref{prop-ackermann}) with its right side, which becomes evident when
considering that the right side $F[G]$ is equivalent to $\forall x_1\ldots
\forall x_{\arity{p}}\, (G \revimp G) \land F[G]$.
Finding particular solutions and finding \ELIM-witnesses can be expressed in
terms of each other.
\begin{prop}[Solutions and \ELIM-Witnesses]
\label{prop-wit-sol}
Let $F[\pps]$ be a \SP and let $\Gs$ be a sequence of formulas. Then

\slab{prop-wit-ito-sol} $\Gs$ is an \ELIM-witness of $\pps$ in $\exists
\pps\, F$ if and only if $\Gs$ is a solution of the \SP $(\lnot F[\qqs] \lor
F)[\pps]$, where $\qqs$ is a sequence of fresh predicates matching $\pps$.

\slab{prop-sol-ito-wit} $\Gs$ is a solution of $F[\pps]$ if and only if
$\Gs$ is an \ELIM-witness  of $\pps$ in 
$\exists \pps\, F$ and it holds that
$\valid \exists \pps\, F$.
\end{prop}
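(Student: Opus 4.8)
The plan is to unfold both biconditionals directly from Definition~\ref{def-sol-particular} (particular solution) and Definition~\ref{def-so-witness} (\ELIM-witness), and to lean on two facts already recorded above: the pushing-in entailment $F[\Gs] \entails \exists \pps\, F$, which holds whenever $\ssubst{\Gs}{\pps}{F}$, and the remark that $\exists \pps\, F \equiv F[\Gs]$ is equivalent to $\valid \lnot F[\pps] \lor F[\Gs]$. A recurring observation is that all the properties involved -- being a solution, being an \ELIM-witness, and the substitutibility side condition -- are insensitive to replacing a formula by an equivalent one and to renaming predicates that do not occur free, so most of the work reduces to rewriting validity and entailment statements.

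For the first claim I would begin by noting that, since $\qqs$ is fresh and matches $\pps$, the formula $\lnot F[\qqs] \lor F$ has free occurrences of the unknowns $\pps$ only in its second disjunct $F$, and in exactly the quantifier contexts they have in $F$ itself. Hence $\ssubst{\Gs}{\pps}{\lnot F[\qqs] \lor F}$ holds iff $\ssubst{\Gs}{\pps}{F}$ holds (the second and third substitutibility conditions of Definition~\ref{def-subst-seq-new} do not mention $F$, while the first depends only on the quantifier context of the free $p_i$, which is unchanged), and the substitution yields $(\lnot F[\qqs] \lor F)[\Gs] = \lnot F[\qqs] \lor F[\Gs]$. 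It then remains to match the two validity requirements. The key step here is that, because $\qqs$ is fresh and hence absent from $F[\Gs]$, validity universally closes over $\qqs$, so $\valid \lnot F[\qqs] \lor F[\Gs]$ is equivalent to $\exists \qqs\, F[\qqs] \entails F[\Gs]$; since $\exists \qqs\, F[\qqs]$ is an alphabetic variant of $\exists \pps\, F$, this is $\exists \pps\, F \entails F[\Gs]$, which by the recorded remark is exactly the \ELIM-witness condition $\exists \pps\, F \equiv F[\Gs]$. Combining the substitutibility equivalence with this validity equivalence gives the biconditional.

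For the second claim I would argue by a collapse to $\true$. For the forward direction, assume $\Gs$ is a solution, so $\ssubst{\Gs}{\pps}{F}$ and $\valid F[\Gs]$. Pushing-in gives $F[\Gs] \entails \exists \pps\, F$, whence $\valid \exists \pps\, F$; and since both $F[\Gs]$ and $\exists \pps\, F$ are then valid, i.e. each equivalent to $\true$, we obtain $\exists \pps\, F \equiv F[\Gs]$, so $\Gs$ is an \ELIM-witness and $\valid \exists \pps\, F$ holds. For the converse, assume $\Gs$ is an \ELIM-witness with $\valid \exists \pps\, F$; from $\exists \pps\, F \equiv F[\Gs]$ and validity of the left-hand side we get $\valid F[\Gs]$, which together with the given $\ssubst{\Gs}{\pps}{F}$ makes $\Gs$ a solution.

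The routine parts are the Boolean and entailment rewritings; the one step that needs genuine care is the treatment of the fresh predicates $\qqs$ in the first claim -- specifically, justifying that $\valid \lnot F[\qqs] \lor F[\Gs]$ may be read as the entailment $\exists \pps\, F \entails F[\Gs]$ via universal closure over $\qqs$ and the alphabetic equivalence of $\exists \qqs\, F[\qqs]$ with $\exists \pps\, F$. The other bookkeeping step I would make explicit is that the substitutibility side conditions transfer unchanged between $F$ and $\lnot F[\qqs] \lor F$, which is what lets the two notions share a single side condition.
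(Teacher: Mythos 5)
Your proof is correct and follows essentially the same route as the paper's own sketch: part~(\ref{prop-wit-ito-sol}) via the chain $\exists \pps\, F \equiv F[\Gs]$ iff $\exists \pps\, F \entails F[\Gs]$ (using Prop.~\ref{prop-fixing}) iff $\valid \lnot F[\qqs] \lor F[\Gs]$, and part~(\ref{prop-sol-ito-wit}) via the collapse of both sides to $\true$. You are somewhat more explicit than the paper about the transfer of the substitutibility side condition and the role of the fresh predicates $\qqs$, which is a welcome addition but not a different argument.
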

\begin{proof}[Sketch]%
Assume $\ssubst{\Gs}{\pps}{F}$.  (\ref{prop-wit-ito-sol}) Follows since
$\exists \pps\, F[\pps] \equiv F[\Gs]$ iff $\exists \pps\, F[\pps]
\entails F[\Gs]$ iff $F[\pps] \entails F[\Gs]$ iff $\valid \lnot
F[\qqs] \lor F[\Gs]$.
(\ref{prop-sol-ito-wit}) Left-To-Right: Follows since $\valid F[\Gs]$ implies
$\valid \exists \pps\, F[\pps]$ and $\valid F[\Gs]$, which implies $\exists
\pps\, F[\pps] \equiv \true \equiv F[\Gs]$.  Right-to-left: Follows since
$\exists \pps\, F[\pps] \equiv F[\Gs]$ and $\valid \exists \pps\, F[\pps]$
together imply $\valid F[\Gs]$.  \qed
\end{proof}

\subsection{View as Related to Definientia and Interpolants}
\label{view-definientia}

The following proposition shows a further view on the solution problem that
relates it to definitions of the unknown predicates.
\begin{prop}[Solution as Entailed by a Definition]
\label{prop-char-sol-def}
A sequence $\Gs = G_1\ldots G_n$ of formulas is a particular solution of a \SP
$F[\pps = p_1\ldots p_n]$ if and only if $\ssubst{\Gs}{\pps}{F}$ and
$\bigwedge_{i=1}^n (p_i \deq G_i) \entails F$.
\end{prop}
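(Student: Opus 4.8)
The plan is to notice that both directions of the biconditional share the hypothesis $\ssubst{\Gs}{\pps}{F}$, so the entire content reduces to showing that, \emph{under this substitutibility assumption}, the validity $\valid F[\Gs]$ holds if and only if $\bigwedge_{i=1}^n (p_i \deq G_i) \entails F$. The natural tool is the Pushing-In equivalence already established for substitutible sequences, namely $F[\Gs] \equiv \forall \pps\,(F \lor \bigvee_{i=1}^n (p_i \ndeq G_i))$, which applies precisely because $\ssubst{\Gs}{\pps}{F}$ is given.

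First I would rewrite the matrix of that equivalence. Since $p_i \ndeq G_i$ abbreviates $\lnot(p_i \deq G_i)$, the disjunction $\bigvee_{i=1}^n (p_i \ndeq G_i)$ is $\lnot \bigwedge_{i=1}^n (p_i \deq G_i)$, and hence $F \lor \bigvee_{i=1}^n (p_i \ndeq G_i)$ is equivalent to $\bigwedge_{i=1}^n (p_i \deq G_i) \imp F$. Combining this with the Pushing-In equivalence gives that $\valid F[\Gs]$ holds if and only if $\valid \forall \pps\,\bigl(\bigwedge_{i=1}^n (p_i \deq G_i) \imp F\bigr)$.

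The last step is to strip the second-order prefix $\forall \pps$. This rests on the general fact that a formula is valid if and only if its universal closure — here with respect to the predicates $\pps$ — is valid: truth of $\forall \pps\,\psi$ in every interpretation is by definition truth of $\psi$ under every reinterpretation of $\pps$, which is just validity of $\psi$. Applying this with $\psi = \bigwedge_{i=1}^n (p_i \deq G_i) \imp F$ yields that $\valid F[\Gs]$ is equivalent to $\valid \bigl(\bigwedge_{i=1}^n (p_i \deq G_i) \imp F\bigr)$, which by the definition of entailment is exactly $\bigwedge_{i=1}^n (p_i \deq G_i) \entails F$. Chaining these equivalences closes both directions at once.

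I expect the only genuinely delicate point to be the bookkeeping around the substitutibility conditions, since these are what legitimize the Pushing-In step and guarantee that quantifying over $\pps$ does not interfere with the $G_i$: the condition $\free{G_i} \cap \pps = \emptyset$ ensures the $p_i$ do not occur free in any $G_i$, so the definitional equivalences $p_i \deq G_i$ genuinely pin the $p_i$ down without circularity, and the condition on the $x_j$ keeps the implicit $\lambda$-abstraction well formed. Should a more self-contained argument be preferred to invoking Pushing-In, the same equivalence can be proved directly by a semantic substitution lemma: for the forward direction one evaluates the valid formula $F[\Gs]$ in an arbitrary model of $\bigwedge_{i=1}^n (p_i \deq G_i)$, where $p_i$ and $G_i$ are extensionally equal so that $F$ and $F[\Gs]$ receive the same truth value; for the converse one takes an arbitrary model, reinterprets each $p_i$ as the (well-defined, since $\pps$-free) extension of $G_i$ to obtain a model of the definitions, applies the entailment to get $F$, and transfers back to $F[\Gs]$.
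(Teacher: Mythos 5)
Your proof is correct and follows essentially the same route as the paper, whose entire proof is the one-liner ``Follows from the definition of \emph{particular solution} and Prop.~\ref{prop-pullout}'': you have simply spelled out how the pushing-in equivalence $F[\Gs] \equiv \forall \pps\,(F \lor \bigvee_{i=1}^n (p_i \ndeq G_i))$, rewritten as an implication and stripped of its universal second-order prefix, turns $\valid F[\Gs]$ into the stated entailment. The details you supply, including the remark on why substitutibility legitimizes each step, are accurate.
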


\begin{proof}%
Follows from the definition of \name{particular solution} and
Prop.~\ref{prop-pullout}.
\qed
\end{proof}
In the special case where $F[p]$ is a \UNSP with a \emph{nullary}
unknown $p$, the
characterization of a solution~$G$ according to Prop.~\ref{prop-char-sol-def}
can be expressed with an entailment where a definition of the unknown~$p$
appears on the right instead of the left side: If $p$ is nullary, then $\lnot
(p \deq G) \equiv p \deq \lnot G$. Thus, the statement $p \deq G \entails F$
is for nullary~$p$ equivalent to
\begin{equation}
\label{eq-char-definiens}
\lnot F \entails p \deq \lnot G.
\end{equation}
The second condition of the characterization of \name{solution} according to
Prop.~\ref{prop-char-sol-def}, that is, $\ssubst{G}{p}{F}$, holds if it is
assumed that $p$ is not in $\free{G}$, that $\free{G} \subseteq \free{F}$ and
that no member of $\free{F}$ is bound by a quantifier occurrence in $F$. A
solution is then characterized as negated definiens of $p$ in the negation of
$F$.
Another way to express (\ref{eq-char-definiens}) along with the condition that
$G$ is semantically independent from $p$ is as follows.
\begin{equation}
\exists p\, (\lnot F \land \lnot p)\; \entails\;
G\; \entails\; \lnot \exists p\, (\lnot F \land p).
\end{equation}
The second-order quantifiers upon the nullary~$p$ can be
eliminated, yielding the following equivalent statement.
\begin{equation}
\label{eq-char-ipol}
\lnot F[\false]\; \entails\; G\; \entails\; F[\true].
\end{equation}
Solutions~$G$ then appear as the formulas in a range, between $\lnot
F[\false]$ and $F[\true]$. This view is reflected in
\cite[Thm.~2.2]{rudeanu:74}, which goes back to work by Schröder. If $F$ is
first-order, then Craig interpolation can be applied to compute formulas~$G$
that also meet the requirements $\free{G} \subseteq \free{F}$ and $p \notin
\free{F}$ to ensure $\ssubst{G}{p}{F}$.  Further connections to Craig
interpolation are discussed in \eref{sec-construction}.

\section{The Method of Successive Eliminations -- Abstracted}
\label{sec-mse}

\subsection{Reducing $n$-ary to $1$-ary Solution Problems}

The \name{method of successive eliminations} to solve an $n$-ary solution
problem by reducing it to unary solution problems is attributed to Boole and
has been formally described in a modern algebraic setting in \cite[Chapter~2,
  \S~4]{rudeanu:74}. It has been rediscovered in the context of Boolean
unification in the late 1980s, notably with~\cite{buettner:simonis:87}.
Rudeanu notes in \cite[p.~72]{rudeanu:74} that variations described by several
authors in the 19th century are discussed by Schröder
\cite[vol.~1, \S\S~26,27]{schroeder:all}.  
To research and compare all variations up to now seems to be a major undertaking
on its own. Our aim is here to provide a foundation to derive and analyze
related methods.  The following proposition formally states the core property
underlying the method in a way that, compared to the Boolean algebra version
in \cite[Chapter~2, \S~4]{rudeanu:74}, is more abstract in several aspects:
Second-order quantification upon predicates that represent unknowns plays the
role of meta-level shorthands that encode expansions; no commitment to a
particular formula class is made, thus the proposition applies to second-order
formulas with first-order and propositional formulas as special cases; it is
not specified how solutions of the arising unary solution problems are
constructed; and it is not specified how intermediate second-order formulas
(that occur also for inputs without second-order quantifiers) are handled. The
algorithm descriptions in the following subsections show different
possibilities to instantiate these abstracted aspects.
\begin{prop}[Characterization of Solution Underlying the Method of
Successive Eliminations] 
\label{prop-mse}
Let $F[\pps = p_1\ldots p_n]$ be a \SP and let $\Gs = G_1\ldots G_n$ be a
sequence of formulas. Then the following statements are \mbox{equivalent.}

\begin{enumequi}
\item \label{prop-char-mse-sol} $\Gs$ is a solution of $F[\pps]$.
\item \label{prop-char-mse-inward}
For $i \in \{1, \ldots, n\}$:
$G_i$ is a solution of the \UNSP \[(\exists p_{i+1} \ldots \exists p_n\,
F[G_1\ldots G_{i-1} p_i \ldots p_n])[p_i]\]
such that $\free{G_i} \cap \pps = \emptyset$.
\end{enumequi}
\end{prop}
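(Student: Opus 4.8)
The plan is to prove the two implications simultaneously by separating, in each of the two statements, the syntactic substitutibility side-conditions from the semantic validity requirements, and reducing each part to a per-index form. Throughout I would abbreviate the formula of the $i$th \UNSP by $H_i = \exists p_{i+1}\ldots\exists p_n\, F[G_1\ldots G_{i-1}\,p_i\ldots p_n]$, so that $H_i[G_i] = \exists p_{i+1}\ldots\exists p_n\, F[G_1\ldots G_i\,p_{i+1}\ldots p_n]$; call this last formula $W_i$. Then $W_n = F[\Gs]$, and $\valid H_i[G_i]$ is just $\valid W_i$. With this notation statement~(\ref{prop-char-mse-sol}) says exactly $\ssubst{\Gs}{\pps}{F}$ together with $\valid W_n$, while statement~(\ref{prop-char-mse-inward}) says: for every $i$, $\ssubst{G_i}{p_i}{H_i}$, $\valid W_i$, and $\free{G_i}\cap\pps=\emptyset$. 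The proof then consists in matching the side-conditions and matching the validity requirements.

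For the semantic part I would reduce everything to the single chain $W_n \entails W_{n-1} \entails \cdots \entails W_1$. Each step $W_i \entails W_{i-1}$ is obtained by applying the pushing-in entailment $F[G_i] \entails \exists p_i\, F$ of Prop.~\ref{prop-fixing} to the single predicate $p_i$ in the formula $F[G_1\ldots G_{i-1}\,p_i\ldots p_n]$ — which is legitimate because $G_i$ is substitutible there — and then prefixing $\exists p_{i+1}\ldots\exists p_n$ and using monotonicity of $\exists$ under entailment. Note that this chain uses only substitutibility of the $G_i$, not that they are solutions. Consequently $\valid W_n$ implies $\valid W_i$ for all $i$, whereas conversely the conjunction of all $\valid W_i$ trivially contains $\valid W_n$. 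Hence, once the side-conditions are in force, $\valid W_n$ is equivalent to ``$\valid W_i$ for all $i$'', which is precisely the equivalence of the validity parts of the two statements.

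For the syntactic part I would verify that $\ssubst{\Gs}{\pps}{F}$ holds if and only if, for every $i$, both $\ssubst{G_i}{p_i}{H_i}$ and $\free{G_i}\cap\pps=\emptyset$ hold. The point is that, assuming $\free{G_j}\cap\pps=\emptyset$ for all $j$, substituting $G_1\ldots G_{i-1}$ for $p_1\ldots p_{i-1}$ in $F$ creates no new free occurrence of $p_i$ (since $p_i\notin\free{G_j}$ for $j<i$) and does not alter which quantifiers of $F$ have a given free $p_i$-occurrence in their scope (the quantifiers imported from the $G_j$ sit at former $p_j$-positions, which are not ancestors of a $p_i$-occurrence); moreover the prepended predicate quantifiers $\exists p_{i+1}\ldots\exists p_n$ are outermost and, again by $\free{G_i}\cap\pps=\emptyset$, bind no member of $\free{G_i}$. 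Clause by clause, $\ssubst{G_i}{p_i}{H_i}$ then coincides with the clauses of Definition~\ref{def-subst-seq-new} for index $i$, once the explicit requirement $\free{G_i}\cap\pps=\emptyset$ is added to upgrade the weaker clause $p_i\notin\free{G_i}$ to the full clause $\free{G_i}\cap\pps=\emptyset$. Combining this with the validity equivalence of the previous paragraph yields the desired equivalence of (\ref{prop-char-mse-sol}) and (\ref{prop-char-mse-inward}).

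The step I expect to be the main obstacle is exactly this substitutibility bookkeeping: one must argue rigorously that passing from $F$ to $H_i$ — which mixes formula substitution for $p_1\ldots p_{i-1}$ with the second-order quantifiers $\exists p_{i+1}\ldots\exists p_n$ — preserves the free occurrences of $p_i$ and their quantifier scopes as they are relevant to clause~\ref{enum-subst-1} of Definition~\ref{def-subst-seq-new}. The hypothesis $\free{G_i}\cap\pps=\emptyset$ appearing in (\ref{prop-char-mse-inward}) is precisely what keeps those occurrences and scopes stable; everything genuinely semantic is then a short consequence of Prop.~\ref{prop-fixing}.
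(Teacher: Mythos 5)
Your proposal is correct and follows essentially the same route as the paper's proof: the semantic content is handled by the pushing-in entailment of Prop.~\ref{prop-fixing} (your chain $W_n \entails \cdots \entails W_1$ is just the paper's one-step application of that proposition, iterated), the validity of $F[\Gs]$ is recovered from the $i=n$ instance, and the substitutibility conditions are matched clause by clause against Definition~\ref{def-subst-seq-new}. The only difference is presentational — you separate the syntactic from the semantic part and spell out the scope/occurrence bookkeeping that the paper merely asserts, which is a reasonable elaboration rather than a new idea.
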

\begin{proof}%
Left-to-right: From~\ref{prop-char-mse-sol} it follows that $\valid
F[\Gs]$. Hence, for all $i \in \{1,\ldots,n\}$ by Prop.~\ref{prop-fixing} it
follows that \[\valid \exists p_{i+1} \ldots \exists p_n\, F[G_1\ldots G_i
  p_{i+1} \ldots p_n].\] From~\ref{prop-char-mse-sol}
it also follows
that $\ssubst{\Gs}{\pps}{F}$. This implies that for all $i \in \{1,\ldots,n\}$
it holds that \[\ssubst{G_i}{p_i}{\exists p_{i+1} \ldots \exists p_n\,
  F[G_1\ldots G_{i-1} p_{i} \ldots p_n]} \text{ and } \free{G_i} \cap \pps =
\emptyset.\] We thus have derived for all $i \in \{1,\ldots,n\}$ the two
properties that characterize $G_i$ as a solution of the \UNSP as stated 
in~\ref{prop-char-mse-inward}.

Right-to-left: From~\ref{prop-char-mse-inward} it follows that $G_n$ is a
solution of the \UNSP \[(F[G_1\ldots G_{n-1} p_n])[p_n].\] Hence, by the
characteristics of \name{solution} it follows that $\valid F[G_1\ldots
  G_{n}]$.  The property $\ssubst{\Gs}{\pps}{F}$ can be derived from
$\free{\Gs} \cap \pps = \emptyset$ and the fact that for all $i \in
\{1,\ldots,n\}$ it holds that $\ssubst{G_i}{p_i}{(\exists p_{i+1} \ldots
  \exists p_n\, F[G_1\ldots G_{i-1} p_i \ldots p_n])}$.
The properties $\valid F[G_1\ldots G_{n}]$ and $\ssubst{\Gs}{\pps}{F}$
characterize $\Gs$ as a solution of the \SP $F[\pps]$.  \qed
\end{proof}

\noindent
This proposition states an equivalence between the solutions of an $n$-ary \SP
and the solutions of $n$ \UNSPs.  These \UNSPs are on formulas with an
existential second-order prefix. The following gives an example of this
decomposition.
\begin{examp}[Reducing an $n$-ary Solution Problem to Unary Solution Problems]
Consider the \SP $F[p_1p_2]$ of Examp.~\ref{examp-sp}.  The \UNSP with
unknown~$p_1$ according to Prop.~\ref{prop-mse} is \[(\exists p_2\,
F[p_1p_2])[p_1],\] whose formula is, by second-order quantifier elimination,
equivalent to
$\forall \mathit{x} \, (\msym{a}(\mathit{x}) \imp
\msym{b}(\mathit{x})) \imp \forall \mathit{x}\, (\msym{p_{1}}(\mathit{x}) 
\imp \msym{b}(\mathit{x}))$.
Take $\msym{a}(x_1)$ as solution $G_1$ of that \UNSP.  The \UNSP with unknown
$p_2$ according to Prop.~\ref{prop-mse} is
 \[(F[G_1p_2])[p_2].\] Its formula is then, by replacing 
in formula $F$ from Examp.~\ref{examp-sp} predicate $\msym{p_{1}}$
with~$\msym{a}$, followed by removing the resulting duplicate conjunct,
equivalent to
\[
\forall \msym{x} \, (\msym{a}(\mathit{x}) \imp  \msym{b}(\mathit{x}))\; \imp\;
 (\forall \msym{x} \, (\msym{a}(\mathit{x}) \imp  \msym{p_{2}}(\mathit{x}))\;
 \land\;
\forall \msym{x} \, (\msym{p_{2}}(\mathit{x}) \imp  \msym{b}(\mathit{x}))).
\]
A solution of that second \UNSP is, for example, $\msym{b}(\mathit{x_1})$,
yielding the pair $\msym{a}(\mathit{x_1})\msym{b}(\mathit{x_1})$ as solution
of the originally considered \SP $F[p_1p_2]$.
\end{examp}

\subsection{Solving on the Basis of Second-Order Formulas}

The following algorithm to compute particular solutions is an immediate
transfer of Prop.~\ref{prop-mse}. Actually, it is more an ``algorithm
template'', since it is parameterized with a method to compute \UNSPs and
covers a nondeterministic as well as a deterministic variation.
\begin{algo}[$\ALSOLVEQ$]
\label{algo-solveq}
Let $\FC$ be a class of formulas and let $\USPA$ be a nondeterministic or a
deterministic algorithm that outputs for \UNSPs of the form $(\exists
p_1\ldots \exists p_n\,F[p])[p]$ with $F \in \FC$ solutions $G$ such that
$\free{G} \cap \{p_1,\ldots,p_n\} = \emptyset$ and $F[G] \in \FC$.

\algoskip

\algoinput A \SP $F[p_1\ldots p_n]$, where $F \in \FC$, that has a
solution.

\algoskip

\algomethod
For $i := 1$ to $n$ do: Assign to $G_i$ an output of $\USPA$ applied to the
\UNSP
$(\exists p_{i+1} \ldots \exists p_n\,
F[G_1\ldots G_{i-1} p_i \ldots p_n])[p_i].$

\algoskip

\algooutput The sequence $G_1\ldots G_n$ of formulas, which is a
particular solution of $F[p_1\ldots p_n]$.
\end{algo}
The solution components $G_i$ are successively assigned to some solution of
the \UNSP given in Prop.~\ref{prop-mse}, on the basis of the previously
assigned components $G_1\ldots G_{i-1}$.  Even if the formula $F$ of the input
problem does not involve second-order quantification, these \UNSPs are on
second-order formulas with an existential prefix $\exists p_{i+1} \ldots
\exists p_n$ upon the yet ``unprocessed'' unknowns.

The algorithm comes in a nondeterministic and a deterministic variation, just
depending on whether $\USPA$ is instantiated by a nondeterministic or a
deterministic algorithm. Thus, in the nondeterministic variation the
nondeterminism of $\USPA$ is the only source of nondeterminism.  With
Prop.~\ref{prop-mse} it can be verified that if a nondeterministic $\USPA$ is
``complete'' in the sense that for each solution there is an execution path
that leads to the output of that solution, then also $\ALSOLVEQ$ based on it
enjoys that property, with respect to the $n$-ary solutions $G_1\ldots G_n$.

For the deterministic variation, from Prop.~\ref{prop-mse} it follows that if
$\USPA$ is ``complete'' in the sense that it outputs some solution whenever a
solution exists, then, given that $F[p_1\ldots p_n]$ has a solution, which is
ensured by the specification of the input, also $\ALSOLVEQ$ outputs some
solution $G_1\ldots G_n$.

This method applies $\USPA$ to existential second-order formulas, which
prompts some issues for future research: As indicated in
Sect.~\ref{view-definientia} (and elaborated in \eref{sec-construction}) Craig
interpolation can in certain cases be applied to compute solutions of \UNSPs.
Can QBF solvers, perhaps those that encode QBF into predicate logic
\cite{qbf:pl}, be utilized to compute Craig interpolants?  Can it be useful to
allow second-order quantifiers in solution formulas because they make these
smaller and can be passed between different calls to $\USPA$?

As shown in \eref{sec-reproductive}, if $\USPA$ is a method that outputs
so-called \name{reproductive} solutions, that is, most general solutions that
represent all particular solutions, then also $\ALSOLVEQ$ outputs reproductive
solutions. Thus, there are two ways to obtain representations of all
particular solutions whose comparison might be potentially interesting: A
deterministic method that outputs a single reproductive solution and the
nondeterministic method with an execution path to each particular solution.

\subsection{Solving with the Method of Successive Eliminations}

The \name{method of successive eliminations} in a narrower sense is applied in
a Boolean algebra setting that corresponds to propositional logic and outputs
reproductive solutions. The consideration of reproductive solutions belongs to
the classical material on Boolean reasoning
\cite{schroeder:all,loewenheim:1910,rudeanu:74} and is modeled in the present
framework in \eref{sec-reproductive}.
Compared to $\ALSOLVEQ$, the method handles the second-order quantification by
eliminating quantifiers one-by-one, inside-out, with a specific method and
applies a specific method to solve \UNSPs, which actually
yields reproductive solutions. These incorporated methods apply to
propositional input formulas (and to first-order input formulas if the
unknowns are nullary).  Second-order quantifiers are eliminated by rewriting
with the equivalence $\exists p\, F[p] \equiv F[\true] \lor F[\false]$. As
solution of a \UNSP $F[p]$ the formula $(\lnot F[\false] \land t) \lor
(F[\true] \land \lnot t)$ is taken,
where $t$ is a fresh nullary predicate that is considered specially. The
intuition is that particular solutions are obtained by replacing~$t$ with
arbitrary formulas in which $p$ does not occur (see \eref{sec-reproductive}
for a more in-depth discussion).

The following algorithm is an iterative presentation of the \name{method of
  successive eliminations}, also called \name{Boole's method}, in the variation
due to \cite{buettner:simonis:87}. The presentation in
\cite[Sect.~3.1]{martin:nipkov:boolean:89}, where apparently minor corrections
compared to \cite{buettner:simonis:87} have been made, has been taken here as
technical basis.  We stay in the validity-based setting, whereas
\cite{rudeanu:74,buettner:simonis:87,martin:nipkov:boolean:89} use the
unsatisfiability-based setting.  Also differently from
\cite{buettner:simonis:87,martin:nipkov:boolean:89} we do not make use of the
\name{xor} operator.
\begin{algo}[$\ALMSE$]

\algoinput A \SP $F[p_1\ldots p_n]$, where $F$ is propositional, that has a
solution and a sequence $t_1\ldots t_n$ of fresh nullary predicates.

\algoskip

\algomethod 
\begin{enumerate}
\item Initialize $F_n[p_1\ldots p_n]$ with $F$.
\item \label{algo-mse-iter-1} For $i := n$ to $1$ do: Assign to
  $F_{i-1}[p_1\ldots p_{i-1}]$ the formula $F_i[p_1\ldots p_{i-1}\true] \lor
  F_i[p_1\ldots p_{i-1}\false]$.

\item \label{algo-mse-iter-2}
For $i := 1$ to $n$ do: Assign to $G_i$ the formula $(\lnot F_i[G_1\ldots
  G_{i-1}\false] \land t_i) \lor (F_i[G_1\ldots G_{i-1}\true] \land \lnot
t_i).$
\end{enumerate}

\algoskip

\algooutput The sequence $G_1\ldots G_n$ of formulas, which is a
reproductive solution of $F[p_1\ldots p_n]$ with respect to
the special predicates $t_1\ldots t_n$.
\end{algo}
The formula assigned to $F_{i-1}$ in step~(\ref{algo-mse-iter-1}.)  is the
result of eliminating $\exists p_i$ in $\exists p_i\, F_{i}[p_1\ldots p_i]$
and the formula assigned to $G_i$ in step~(\ref{algo-mse-iter-2}.) is the
reproductive solution of the \UNSP $(F_i[G_1\ldots G_{i-1}p_i])[p_i]$,
obtained with the respective incorporated methods indicated above.
The recursion in the presentations of
\cite{buettner:simonis:87,martin:nipkov:boolean:89} is translated here into
two iterations that proceed in opposite directions: First, existential
quantifiers of $\exists p_1 \ldots \exists p_n\ F$ are eliminated inside-out
and the intermediate results, which do not involve second-order quantifiers,
are stored.  Solutions of \UNSPs are computed in the second phase on the basis
of the stored formulas.

In this presentation it is easy to identify two ``hooks'' where it is possible
to plug-in alternate methods that produce other outputs or apply to further
formula classes: In step~(\ref{algo-mse-iter-1}.)  the elimination method and
in step~(\ref{algo-mse-iter-2}.) the method to determine solutions of \UNSPs.
If the plugged-in method to compute \UNSPs outputs particular solutions, then
$\ALMSE$ computes particular instead of reproductive solutions.

\subsection{Solving by Inside-Out Witness Construction}

Like $\ALMSE$, the following algorithm eliminates second-order quantifiers
one-by-one, inside-out, avoiding intermediate formulas with existential
second-order prefixes of length greater than $1$, which arise with
$\f{SOLVE\hyph ON\hyph}$ $\f{SECOND\hyph ORDER}$.
In contrast to $\ALMSE$, it performs elimination by the computation of
\ELIM-witnesses.
\begin{algo}[$\ALWIT$] 
Let $\FC$ be a class of formulas and $\WITA$ be an algorithm that computes for
formulas $F \in \FC$ and predicates $p$ an \ELIM-witness~$G$ of $p$ in
$\exists p\, F[p]$ such that $F[G] \in \FC$.

\algoskip
\algoinput A \SP $F[p_1\ldots p_n]$, where $F \in \FC$, that has a solution.

\algoskip
\algomethod
For $i := n$ to $1$ do:
\begin{enumerate}
\item \label{step-algo-assign-gi} Assign to $G_i[p_1\ldots p_{i-1}]$ 
the output of $\WITA$ applied to
 \[\exists p_i\, F[p_1\ldots p_{i} G_{i+1}\ldots G_n].\]
\item \label{step-algo-reassign-gj} For $j := n$ to $i+1$ do: Re-assign to
  $G_j[p_1\ldots p_{i-1}]$ the formula $G_j[p_1\ldots p_{i-1}G_i]$.
\end{enumerate}

\algoskip
\algooutput
The sequence $G_1\ldots G_n$ of formulas, which
provides a particular solution of $F[p_1\ldots p_n]$.
\end{algo}
Step (\ref{step-algo-reassign-gj}.) in the algorithm expresses that a new
value is assigned to $G_j$ and that $G_j$ can be designated by $G_j[p_1\ldots
  p_{i-1}]$, justified because the new value does not contain free occurrences
of $p_{i}, \ldots, p_{n}$.  In step~(\ref{step-algo-assign-gi}.) the
respective current values of $G_{i+1}\ldots G_n$ are used to instantiate $F$.
It is not hard to see from the specification of the algorithm that for input
$F[\pps]$ and output $\Gs$ it holds that $\exists \pps\, F \equiv F[\Gs]$ and
that $\ssubst{\Gs}{\pps}{F}$. By Prop.~\ref{prop-sol-ito-wit}, $\Gs$ is then a
solution if $\valid \exists \pps\, F$. This holds indeed if $F[\pps]$ has a
solution, as shown below with Prop.~\ref{prop-exists-necessary}.

If $\WITA$ is ``complete'' in the sense that it computes an elimination
witness for all input formulas in $\FC$, then $\ALWIT$ outputs a solution.
Whether all solutions of the input \SP can be obtained as outputs for
different execution paths of a nondeterministic version of $\ALWIT$ obtained
through a nondeterministic $\WITA$, in analogy to the nondeterministic
variation of $\ALSOLVEQ$, appears to be an open problem.

\section{Existence of Solutions}
\label{sec-existence}

\subsection{Conditions for the Existence of Solutions}

We now turn to the question for the conditions under which there exists a
solution of a given \SP, or, in the terminology of \cite{rudeanu:74}, the \SP
is \name{consistent}. A necessary condition is easy to see.
\begin{prop}[Necessary Condition for the Existence of a Solution]
\label{prop-exists-necessary}
If a \SP $F[\pps]$ has a solution, then it holds that $\valid \exists \pps\, F$.
\end{prop}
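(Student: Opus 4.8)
The plan is to show the contrapositive in a direct, semantic way, but actually the cleanest route is to invoke the already-established machinery of Proposition on pulling-out/pushing-in (the entailment part, \ref{prop-fixing}). First I would unfold the definition of \emph{particular solution}: if $F[\pps]$ has a solution, there is a sequence $\Gs$ of formulas with $\ssubst{\Gs}{\pps}{F}$ and $\valid F[\Gs]$. The whole claim then reduces to the single observation that a valid instance forces the existential second-order closure to be valid as well.

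The key step is the right-hand entailment of \ref{prop-fixing}, namely $F[\Gs] \entails \exists \pps\, F$, which holds precisely because $\ssubst{\Gs}{\pps}{F}$ is assumed. Since $\valid F[\Gs]$ means $\true \entails F[\Gs]$, chaining gives $\true \entails \exists \pps\, F$, i.e. $\valid \exists \pps\, F$, which is exactly the conclusion. So the argument is essentially a one-line composition: validity of the instance plus the ``pushing-in'' entailment yields validity of the existential closure.

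I would write it as follows. Assume $F[\pps]$ has a solution $\Gs$. By Definition of \name{particular solution} we have $\ssubst{\Gs}{\pps}{F}$ and $\valid F[\Gs]$. By \ref{prop-fixing} (applicable since $\ssubst{\Gs}{\pps}{F}$) it holds that $F[\Gs] \entails \exists \pps\, F$. Together with $\valid F[\Gs]$ this gives $\valid \exists \pps\, F$, as required.

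There is no real obstacle here; the statement is a necessary condition and its proof is immediate once the substitutibility bookkeeping is in place. The only point requiring care is that \ref{prop-fixing} presupposes $\ssubst{\Gs}{\pps}{F}$, but this is part of the definition of a solution, so it is automatically available. The dual unsatisfiability-based reading (a solution makes $\lnot F[\Gs]$ unsatisfiable, hence $\forall \pps\, \lnot F$ unsatisfiable, hence $\exists \pps\, F$ valid) is an equally short alternative but adds nothing, so I would keep the validity-based phrasing consistent with the rest of the paper.
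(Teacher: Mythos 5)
Your proof is correct and follows exactly the route the paper takes: unfold the definition of \emph{particular solution} to get $\ssubst{\Gs}{\pps}{F}$ and $\valid F[\Gs]$, then apply the entailment $F[\Gs] \entails \exists \pps\, F$ from Prop.~\ref{prop-fixing}. The paper's own proof is just the one-line citation of these two ingredients, so your write-up is the same argument made explicit.
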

\begin{proof}%
Follows from the definition of \name{particular solution} and
Prop.~\ref{prop-fixing}.  \qed
\end{proof}
Under certain presumptions that hold for
propositional logic this condition is also sufficient.  To express these
abstractly we use the following concept.

\begin{defn}[SOL-Witnessed Formula Class]
\label{def-witnessed-class}
A formula class $\FCI$ is called \defname{SOL-witnessed for} a predicate class
$\PC$ if and only if for all $p \in \PC$ and $F[p] \in \FCI$ the following
statements are equivalent.
\begin{enumequi}
\item $\valid \exists p\, F$.
\item \label{def-dwc-two} 
There exists a solution
$G$ of the \UNSP $F[p]$ such that $F[G] \in \FCI$.
\end{enumequi}
\end{defn}
Since the right-to-left direction of that equivalence holds in general, the
left-to-right direction alone would provide an alternate characterization.
The class of propositional formulas is SOL-witnessed (for the class of nullary
predicates). This follows since in propositional logic it holds
that 
\begin{equation}
\exists p\, F[p] \equiv F[F[\true]],
\end{equation} which can be derived in the following steps: 
$F[F[\true]] \;\equiv\; \exists
p\, (F[p] \land (p \equi F[\true])) \;\equiv\; (F[\true] \land (\true \equi
F[\true])) \lor (F[\false] \land (\false \equi F[\true])) \;\equiv\; F[\true]
\lor F[\false] \;\equiv\; \exists p\, F[p]$.

The following definition adds closedness under existential second-order
quantification and dropping of void second-order quantification to the notion
of \name{SOL-witnessed}, to allow the application on \UNSPs matching with
item~(b) in Prop.~\ref{prop-mse}.\hspace*{-1pt}%
\begin{defn}[MSE-SOL-Witnessed Formula Class]
\label{def-mwitnessed-class}
A formula class $\FCI$ is called \defname{MSE-SOL-witnessed} for a predicate
class $\PC$ if and only if it is SOL-witnessed for $\PC$, for all
$p \in \PC$ and $F \in \FCI$ it holds that $\exists p\, F \in
\FCI$, and, if $\exists p\, F \in \FCI$ and
$p \notin \free{F}$, then $F \in \FCI$.
\end{defn}
The class of existential QBFs (formulas of the form $\exists \pps\, F$ where
$F$ is propositional) is MSE-SOL-witnessed (like the more general class of
QBFs -- second-order formulas with only nullary predicates). 
Another example is the class of first-order formulas extended by second-order
quantification upon nullary predicates, which is MSE-SOL-witnessed for the
class of nullary predicates. The following proposition can be seen as
expressing an invariant of the method of successive eliminations that holds
for formulas in an MSE-SOL-witnessed class.
\begin{prop}[Solution Existence Lemma]
\label{prop-sol-existence-lemma}
Let $\FCI$ be a formula class that is MSE-SOL-witnessed for predicate class
$\PC$. Let $F[\pps = p_1\ldots p_n] \in \FCI$ with $\pps \in \PC^n$. If
$\valid \exists \pps\, F[\pps]$, then for all $i \in \{0,\ldots, n\}$ there
exists a sequence $G_1\ldots G_{i}$ of formulas such that $\free{G_1\ldots
  G_{i}} \cap \pps = \emptyset$, $\ssubst{G_1\ldots G_{i}}{p_1\ldots
  p_{i}}{F}$, $\valid \exists p_{i+1}\ldots \exists p_n F[G_1\ldots
  G_{i}p_{i+1}\ldots p_n]$ and \mbox{$\exists p_{i+1}\ldots \exists p_n
  F[G_1\ldots G_{i}p_{i+1}\ldots p_n]\! \in\! \FCI$.}
\end{prop}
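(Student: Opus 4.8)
The plan is to prove the statement by induction on $i \in \{0,\ldots,n\}$, mirroring the inside-out schema of the method of successive eliminations and invoking at each step the left-to-right direction of the SOL-witnessed property together with the closure conditions bundled into MSE-SOL-witnessedness. For the base case $i=0$ the sequence $G_1\ldots G_0$ is the empty sequence $\emptyseq$, so $\free{\emptyseq}\cap\pps=\emptyset$ and $\ssubst{\emptyseq}{\emptyseq}{F}$ hold vacuously, $\valid \exists p_1\ldots\exists p_n\, F$ is the hypothesis $\valid\exists\pps\,F$, and $\exists p_1\ldots\exists p_n\, F\in\FCI$ follows from $F\in\FCI$ by applying the closure of $\FCI$ under existential second-order quantification (Def.~\ref{def-mwitnessed-class}, using $p_1,\ldots,p_n\in\PC$) $n$ times.

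For the inductive step $i\to i+1$ with $0\le i<n$, assume the four properties for $G_1\ldots G_i$ and set $K := \exists p_{i+2}\ldots\exists p_n\, F[G_1\ldots G_i\, p_{i+1}\ldots p_n]$, so that $p_{i+1}$ is free in $K$ and the formula occurring in the validity and membership clauses at index $i$ is $\exists p_{i+1}\,K$. I would view $K[p_{i+1}]$ as a \UNSP. From the index-$i$ validity clause we have $\valid\exists p_{i+1}\,K$, which is statement (a) of SOL-witnessedness (Def.~\ref{def-witnessed-class}) for $K[p_{i+1}]$; its left-to-right direction then yields a solution $G_{i+1}$ of $K[p_{i+1}]$ with $K[G_{i+1}]\in\FCI$. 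Being a solution gives $\ssubst{G_{i+1}}{p_{i+1}}{K}$ and $\valid K[G_{i+1}]$, and since $K[G_{i+1}]=\exists p_{i+2}\ldots\exists p_n\, F[G_1\ldots G_{i+1}\, p_{i+2}\ldots p_n]$ these are exactly properties (c) and (d) at index $i+1$.

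The remaining syntactic bookkeeping is routine. The condition $\ssubst{G_{i+1}}{p_{i+1}}{K}$ forces $p_{i+1}\notin\free{G_{i+1}}$ and, because $p_{i+2},\ldots,p_n$ are bound in $K$ while the free occurrences of $p_{i+1}$ lie within their scope, also $\free{G_{i+1}}\cap\{p_{i+2},\ldots,p_n\}=\emptyset$; the predicates $p_1,\ldots,p_i$ do not occur in $K$ at all, so by renaming $G_{i+1}$ away from them if necessary (which preserves being a solution, as they are not free in $K$) one may assume $\free{G_{i+1}}\cap\pps=\emptyset$. Together with $\free{G_1\ldots G_i}\cap\pps=\emptyset$ this gives property (a) at $i+1$, and $\ssubst{G_1\ldots G_{i+1}}{p_1\ldots p_{i+1}}{F}$ is assembled from the induction hypothesis and $\ssubst{G_{i+1}}{p_{i+1}}{K}$ exactly as in the proof of Prop.~\ref{prop-mse}.

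The step I expect to be the main obstacle is the applicability of SOL-witnessedness itself, which demands $K\in\FCI$, i.e. that the partial existential closure with $G_1\ldots G_i$ already substituted and the outermost quantifier $\exists p_{i+1}$ stripped off belongs to the class. The induction hypothesis only supplies $\exists p_{i+1}\,K\in\FCI$, and the drop-of-void-quantifier clause of Def.~\ref{def-mwitnessed-class} does not help here, since $p_{i+1}$ is in general free in $K$. This is where the closure conditions of MSE-SOL-witnessedness must be used with care: one has to ensure that the class is stable under forming these intermediate formulas (as the concrete examples — existential QBFs, QBFs, and first-order logic extended by nullary second-order quantification — indeed are), so that every \UNSP passed to the SOL-witnessed property lies in $\FCI$ and the returned $K[G_{i+1}]$ is again in $\FCI$, thereby closing the induction.
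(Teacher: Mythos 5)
Your induction has the same skeleton as the paper's proof (empty base case, one application of SOL-witnessedness per step), but it diverges in the choice of the \UNSP and, more importantly, it stops exactly at the point that has to be argued. The paper takes the \UNSP at step $i+1$ to be $(\exists p_1 \ldots \exists p_i \exists p_{i+2} \ldots \exists p_n\, F[G_1\ldots G_i p_{i+1}\ldots p_n])[p_{i+1}]$, i.e.\ with \emph{all} unknowns other than $p_{i+1}$ (re-)quantified, the first $i$ of them vacuously. This is not cosmetic: with that prefix, condition~(\ref{enum-subst-1}) of Def.~\ref{def-subst-seq-new}, applied to the free occurrences of $p_{i+1}$ (which all lie in the scope of every $\exists p_j$ with $j \neq i+1$), immediately forces $\free{G_{i+1}} \cap \pps = \emptyset$. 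Your variant $K = \exists p_{i+2}\ldots\exists p_n\, F[\ldots]$ only yields $\free{G_{i+1}} \cap \{p_{i+1},\ldots,p_n\} = \emptyset$, and your patch -- renaming free occurrences of $p_1,\ldots,p_i$ in $G_{i+1}$ away -- preserves validity and substitutibility but not necessarily the fourth invariant, membership of $K[G_{i+1}]$ in $\FCI$, since nothing in Def.~\ref{def-mwitnessed-class} makes $\FCI$ closed under renaming free predicates.

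The step you flag as the main obstacle is indeed the crux, and leaving it at ``one has to ensure that the class is stable under forming these intermediate formulas'' means the proof is not closed: SOL-witnessedness is only applicable once the \UNSP's formula is shown to lie in $\FCI$, and delivering that membership is precisely what the induction is for. The paper discharges it by invoking the closure clauses of Def.~\ref{def-mwitnessed-class} together with $p_1,\ldots,p_i \in \PC$: from the induction hypothesis $\exists p_{i+1}\ldots\exists p_n\, F[\ldots] \in \FCI$ one adds the (void) quantifiers $\exists p_1\ldots\exists p_i$, so that the \UNSP formula differs from a member of $\FCI$ only by the position and presence of $\exists p_{i+1}$ in the prefix. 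You are right that even then the literal clauses of Def.~\ref{def-mwitnessed-class} do not hand you the prefix-stripped formula, because $p_{i+1}$ is free in it -- so your diagnosis of where the definition is being stretched is accurate -- but a proof of the proposition must commit to reading the closure conditions as covering this step (as the paper's proof does, and as the listed example classes all satisfy), rather than record it as an open obligation. As written, your argument establishes the inductive step only modulo an unproved closure property of $\FCI$, and in addition carries the renaming defect noted above.
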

\begin{proof}%
By induction on the length $i$ of the sequence $G_1\ldots G_i$.  
The
conclusion of the proposition holds for the base case $i=0$: The statement
$\ssubst{\emptyseq}{\emptyseq}{F}$ holds trivially, $\valid \exists \pps\, F$
is given as precondition, and $\exists \pps\, F \in \FCI$ follows from $F \in
\FCI$.
For the induction step, assume that the conclusion of the proposition holds
for some $i \in \{0, \ldots, n-1\}$. That is, there exists a sequence
$G_1\ldots G_{i}$ of formulas such that $\free{G_1\ldots G_{i}} \cap \pps =
\emptyset$, $\ssubst{G_1\ldots G_{i}}{p_1\ldots p_{i}}{F}$, $\valid \exists
p_{i+1} \ldots \exists p_n\, F[G_1\ldots G_{i}p_{i+1}\ldots p_n]$ and $\exists
p_{i+1} \ldots \exists p_n\, F[G_1\ldots G_{i}p_{i+1}\ldots p_n] \in \FCI$.
Since $\FCI$ is MSE-SOL-witnessed for $\PC$ and $p_{1}, \ldots, p_{i} \in \PC$
it follows that there exists a solution $G_{i+1}$ of the \UNSP 
\[(\exists p_1
\ldots \exists p_i \exists p_{i+2} \ldots \exists p_n \, F[G_1\ldots
  G_{i}p_{i+1}\ldots p_n])[p_{i+1}]\] such that
$\exists p_1 \ldots \exists p_i \exists p_{i+2} \ldots \exists p_n\, 
F[G_1\ldots G_{i+1}p_{i+2}\ldots p_n] \in \FCI$.  
From the characteristics of \name{solution} it follows that
\[\ssubst{G_{i+1}}{p_{i+1}}{\exists p_1 \ldots \exists p_i \exists p_{i+2}
  \ldots \exists p_n \, F[G_1\ldots G_{i}p_{i+1}\ldots p_n]},\] which implies
(since all members of $\pps$ with exception of $p_{i+1}$ are in the quantifier
prefix of the problem formula) that $\free{G_{i+1}} \cap \pps = \emptyset$,
hence \[\free{G_1\ldots G_{i+1}} \cap \pps = \emptyset.\] Given the induction
hypothesis $\ssubst{G_1\ldots G_{i}}{p_1\ldots p_{i}}{F}$, it also implies
\[\ssubst{G_1\ldots G_{i+1}}{p_1\ldots p_{i+1}}{F}.\]
From the characteristics of \name{solution} it follows in addition that
\[\valid \exists p_1 \ldots \exists p_i \exists p_{i+2} \ldots \exists p_n\,
F[G_1\ldots G_{i+1}p_{i+2}\ldots p_n],\] which, since
$\free{G_1\ldots G_{i+1}} \cap \pps = \emptyset$, is equivalent to
\[\valid \exists p_{i+2} \ldots \exists p_n\,
F[G_1\ldots G_{i+1}p_{i+2}\ldots p_n].\] Finally, we conclude from $\exists
p_1 \ldots \exists p_i \exists p_{i+2} \ldots \exists p_n\, F[G_1\ldots
  G_{i+1}p_{i+2}\ldots p_n] \in \FCI$, established above, and the definition
of \name{MSE-SOL-witnessed} that \[\exists p_{i+2} \ldots \exists p_n\,
F[G_1\ldots G_{i+1}p_{i+2}\ldots p_n] \in \FCI,\] which completes the proof of
the induction step.
\qed
\end{proof}
A sufficient and necessary condition for the existence of a solution of
formulas in MSE-SOL-witnessed classes now follows from
Prop.~\ref{prop-sol-existence-lemma} and Prop.~\ref{prop-exists-necessary}.
\begin{prop}[Existence of a Solution]
\label{prop-sol-ex}
Let $\FCI$ be a formula class that is MSE-SOL-witnessed on predicate class
$\PC$.  Then for all $F[\pps] \in \FCI$ where the members of $\pps$ are in
$\PC$ the following statements are equivalent.
\begin{enumequi}
\item $\valid \exists \pps\, F$.
\item %
There exists a solution
$\Gs$ of the \SP $F[\pps]$ such that $F[\Gs] \in \FCI$.
\end{enumequi}
\end{prop}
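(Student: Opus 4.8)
The plan is to prove the equivalence by treating its two directions separately, each drawing on one of the two immediately preceding results. First I would dispatch the direction from (b) to (a): if there is a solution $\Gs$ of the \SP $F[\pps]$, then $F[\pps]$ has a solution, so Prop.~\ref{prop-exists-necessary} directly yields $\valid \exists \pps\, F$. The side condition $F[\Gs] \in \FCI$ carried along in statement~(b) plays no role here and is simply discarded. This half needs neither the MSE-SOL-witnessedness of $\FCI$ nor the membership of the unknowns in $\PC$.

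For the converse, from (a) to (b), I would invoke the Solution Existence Lemma (Prop.~\ref{prop-sol-existence-lemma}). Its hypotheses are met: $\FCI$ is MSE-SOL-witnessed for $\PC$, we have $F[\pps] \in \FCI$, the members of $\pps$ lie in $\PC$, and by assumption $\valid \exists \pps\, F$. I would then instantiate the lemma at the maximal index $i = n$. The crucial observation is that at $i = n$ the existential prefix $\exists p_{n+1}\ldots\exists p_n$ is empty, so the two conclusions $\valid \exists p_{n+1}\ldots\exists p_n\, F[G_1\ldots G_n p_{n+1}\ldots p_n]$ and $\exists p_{n+1}\ldots\exists p_n\, F[G_1\ldots G_n p_{n+1}\ldots p_n] \in \FCI$ collapse to $\valid F[\Gs]$ and $F[\Gs] \in \FCI$ respectively, writing $\Gs = G_1\ldots G_n$.

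Finally I would collect the remaining conclusions of the lemma at $i = n$, namely $\ssubst{\Gs}{\pps}{F}$ (which comes with $\free{\Gs} \cap \pps = \emptyset$) together with $\valid F[\Gs]$, and read off from the definition of \name{particular solution} (Def.~\ref{def-sol-particular}) that $\Gs$ is a solution of $F[\pps]$; combined with $F[\Gs] \in \FCI$ this is exactly statement~(b). I expect essentially no obstacle here beyond bookkeeping, since all the genuine difficulty — the iterated use of the SOL-witnessed property and the closure conditions packaged into MSE-SOL-witnessedness — has already been absorbed into the Solution Existence Lemma. The only point requiring care is the degeneration of the quantifier prefixes at the endpoint $i = n$, which is what turns the lemma's ``invariant'' into the desired statement about a full solution.
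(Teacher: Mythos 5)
Your proposal is correct and follows exactly the route the paper intends: the paper's own proof consists of the single line ``Follows from Prop.~\ref{prop-sol-existence-lemma} and Prop.~\ref{prop-exists-necessary}'', and your elaboration --- (b)$\Rightarrow$(a) via Prop.~\ref{prop-exists-necessary}, and (a)$\Rightarrow$(b) by instantiating the Solution Existence Lemma at $i=n$ so that the empty quantifier prefix collapses its conclusions to $\valid F[\Gs]$ and $F[\Gs]\in\FCI$ --- fills in precisely the bookkeeping the paper leaves implicit.
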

\begin{proof}%
Follows from Prop.~\ref{prop-sol-existence-lemma} and
Prop.~\ref{prop-exists-necessary}.
\qed
\end{proof}
From that proposition it is easy to see that for \SPs with propositional
formulas the complexity of determining the existence of a solution is the same
as the complexity of deciding validity of existential QBFs, as proven in
\cite{kkr:90,kkr:95,baader:boolean:98}, that is,
$\mathrm{\Pi}^P_2$-completeness: By Prop.~\ref{prop-sol-ex}, a \SP $F[\pps]$
where $F$ is propositional has a solution if and only if the existential QBF
$\exists \pps\, F[\pps]$ is valid and, vice versa, an arbitrary existential
QBF $\exists \pps\, F[\pps]$ (where $F$ is quantifier-free) is valid if and
only if the \SP $F[\pps]$ has a solution.

\subsection{Characterization of SOL-Witnessed in Terms of \ELIM-Witness}

The following proposition shows that under a minor syntactic precondition on
formula classes, \name{SOL-witnessed} can also be characterized in terms of
\name{\ELIM-witness} instead of \name{solution} as in
Def.~\ref{def-witnessed-class}.
\begin{prop}[SOL-Witnessed in Terms of \ELIM-Witness]
\label{prop-witness-val-equi}
Let $\FCI$ be a class of formulas that satisfies the following properties: For
all $F[p] \in \FCI$ and predicates $q$ with the same arity of $p$ it holds
that $F[p] \lor \lnot F[q] \in \FCI$, and for all $F \lor G \in \FCI$ it holds
that $F \in \FCI$.  The class $\FCI$ is SOL-witnessed for a predicate class
$\PC$ if and only if for all $p \in \PC$ and $F[p] \in \FCI$ there exists an
\ELIM-witness~$G$ of $p$ in $F[p]$ such that $F[G] \in \FCI$.
\end{prop}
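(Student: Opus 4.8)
The plan is to prove the two directions of the biconditional separately, in both cases reducing everything to the correspondence between solutions and \ELIM-witnesses recorded in Prop.~\ref{prop-wit-sol}.

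For the direction from right to left, assume that every $F[p]\in\FCI$ with $p\in\PC$ admits an \ELIM-witness $G$ of $p$ in $\exists p\, F$ with $F[G]\in\FCI$, and fix such $F[p]$ and $p$. I would verify the equivalence of Def.~\ref{def-witnessed-class} directly. Its right-to-left implication holds unconditionally: if $G$ is a solution of $F[p]$, then $\ssubst{G}{p}{F}$ and $\valid F[G]$, whence $\valid\exists p\, F$ by Prop.~\ref{prop-fixing} (equivalently Prop.~\ref{prop-exists-necessary}). For the other implication I assume $\valid\exists p\, F$ and take the hypothesised \ELIM-witness $G$ with $F[G]\in\FCI$; since $\exists p\, F\equiv F[G]$ and $\exists p\, F$ is valid, $F[G]$ is valid as well, so $G$ is a solution of $F[p]$ with $F[G]\in\FCI$. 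This is precisely the content of Prop.~\ref{prop-sol-ito-wit}, which I would simply invoke.

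For the direction from left to right, assume $\FCI$ is SOL-witnessed for $\PC$ and fix $F[p]\in\FCI$ with $p\in\PC$; the goal is to exhibit an \ELIM-witness $G$ of $p$ in $\exists p\, F$ with $F[G]\in\FCI$. I would choose a fresh predicate $q$ matching $p$ and form $H = F[p]\lor\lnot F[q]$, which lies in $\FCI$ by the first closure hypothesis. The formula $\exists p\, H$ is valid: instantiating the bound $p$ by $q$ turns $F[p]$ into $F[q]$, and $F[q]\lor\lnot F[q]$ is valid. Applying the left-to-right implication of SOL-witnessedness to $H$ then yields a solution $G$ of the \UNSP $H[p]$ with $H[G]\in\FCI$. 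Since $H[G]$ is $F[G]\lor\lnot F[q]$, the second closure hypothesis gives $F[G]\in\FCI$, which is one of the two things to be shown. It remains to argue that $G$ is an \ELIM-witness of $p$ in $\exists p\, F$, and by Prop.~\ref{prop-wit-ito-sol} this is equivalent to $G$ being a solution of $(\lnot F[q]\lor F)[p]$, i.e.\ of $H[p]$ up to the order of the disjuncts, which we have.

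The step I expect to be the crux is this last one. Prop.~\ref{prop-wit-ito-sol} certifies a solution of $(\lnot F[q]\lor F)[p]$ as an \ELIM-witness only when $q$ is genuinely fresh \emph{for the witness}, that is, $q\notin\free{G}$: its justification passes through the equivalence of $F[p]\entails F[G]$ with $\valid\lnot F[q]\lor F[G]$, which fails once $G$ mentions $q$. The solution returned by SOL-witnessedness, however, is only guaranteed to satisfy $\free{G}\cap\{p\}=\emptyset$ and may well contain $q$ (for $F[p]=p$ the formula $G=q$ solves $H[p]$ yet is not an \ELIM-witness). I would therefore split on whether $\valid\exists p\, F$: in that case SOL-witnessedness applied to $F$ \emph{itself} returns a solution $G$ of $F[p]$, and $\valid F[G]$ together with validity of $\exists p\, F$ gives $F[G]\equiv\true\equiv\exists p\, F$, so $G$ is an \ELIM-witness with no auxiliary predicate present. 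The genuinely delicate case is when $\exists p\, F$ is not valid: there one must secure a $q$-free solution of $H[p]$, equivalently establish $\valid F[G]\lor\lnot F[q']$ for a truly fresh $q'$ rather than merely for the $q$ built into $H$. I would attempt to do this by normalising the obtained solution so as to remove $q$ while preserving both membership in $\FCI$ (through the two closure hypotheses) and the witnessing property; making this elimination of $q$ rigorous, or otherwise arguing that a $q$-free solution must exist, is the main obstacle to a fully formal argument.
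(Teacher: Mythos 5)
Your proposal retraces the paper's own proof: the right-to-left direction is delegated to Prop.~\ref{prop-sol-ito-wit}, and for left-to-right you build the same auxiliary formula $F[p]\lor\lnot F[q]$, note that $\valid \exists p\,(F[p]\lor\lnot F[q])$, invoke SOL-witnessedness to obtain a solution $G$ with $F[G]\lor\lnot F[q]\in\FCI$, and peel off $F[G]\in\FCI$ with the second closure property. Up to this point the two arguments coincide.

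The ``crux'' you isolate is precisely the step at which the paper's proof is itself unguarded. The paper passes from $\valid F[G]\lor\lnot F[q]$ to ``$\exists p\,F[p]\equiv F[G]$'' without comment; as you observe, this amounts to generalizing on $q$ in $F[q]\entails F[G]$, which is sound only if $q\notin\free{G}$. Neither Def.~\ref{def-witnessed-class} nor the definition of \name{particular solution} (Def.~\ref{def-sol-particular}) prevents the solution supplied by SOL-witnessedness from mentioning $q$ --- your example $F[p]=p$, $G=q$ is a correct illustration --- so you have not created a gap, you have located the one the paper glosses over. Your treatment of the case $\valid\exists p\,F$ is correct and self-contained (apply SOL-witnessedness to $F$ itself and use $F[G]\equiv\true\equiv\exists p\,F$). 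In the remaining case neither your write-up nor the paper establishes that a $q$-free solution of $(F[p]\lor\lnot F[q])[p]$ exists; closing it would require either tightening the freshness convention so that auxiliary fresh predicates are barred from solutions, or an explicit argument that a solution containing $q$ can be converted into one without $q$ while staying inside $\FCI$. So your diagnosis is accurate, and your proof is incomplete at exactly the point where the paper's is.
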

\begin{proof}%
Left-to-right: Assume that $\FCI$ is meets the specified closedness conditions
and is SOL-witnessed for $\PC$, $p \in \PC$ and $F[p] \in \FCI$.
Let $q$ be a fresh predicate with the arity of $p$.  The obviously true
statement $\valid \exists p\, F[p] \lor \lnot \exists p\, F[p]$ is equivalent
to $\valid \exists p\, F[p] \lor \lnot F[q]$ and thus to $\valid \exists p\,
(F[p] \lor \lnot F[q]).$ By the closedness properties of $\FCI$ it holds that
$F[p] \lor \lnot F[q] \in \FCI$. Since $\FCI$ is SOL-witnessed for $\PC$ it
thus follows from Def.~\ref{def-witnessed-class} that there exists a solution
$G$ of the \SP $(F[p] \lor \lnot F[q])[p]$ such that $(F[G] \lor \lnot F[q])
\in \FCI$, and, by the closedness properties, also $F[G] \in \FCI$.  From the
definition of \name{solution} it follows that $\valid F[G] \lor \lnot F[q]$,
which is equivalent to $\exists p\, F[p] \equiv F[G]$, and also that
$\ssubst{G}{p}{F[G] \lor \lnot F[q]}$, which implies $\ssubst{G}{p}{F[G]}$.
Thus $G$ is an SO-witness of $p$ in $F[p]$ such that $F[G] \in \FCI$.
Right-to-left: Easy to see from Prop.~\ref{prop-sol-ito-wit}.  \qed
\end{proof}

\subsection{The Elimination Result as Precondition of Solution Existence}

Proposition~\ref{prop-sol-ex} makes an interesting relationship between the
existence of a solution and second-order quantifier elimination apparent that
has been pointed out by Schröder \cite[vol.~1, \S~21]{schroeder:all} and
Behmann \cite{beh:50:aufloesungs:phil:1}, and is briefly reflected in
\cite[p.~62]{rudeanu:74}: The formula $\exists \pps\, F$ is valid if and only
if the result of eliminating the existential second-order prefix (called
\name{Resultante} by Schröder \cite[vol.~1, \S~21]{schroeder:all}) is
valid. If it is not valid, then, by Prop.~\ref{prop-sol-ex}, the \SP $F[\pps]$
has no solution, however, in that case the elimination result represents the
\emph{unique (modulo equivalence) weakest precondition under which the \SP
  would have a solution.}  The following proposition shows a way to make this
precise.
\begin{prop}[The Elimination Result is the Unique Weakest Precondition of Solution\
Existence] 
\label{prop-wps}
Let $\FC$ be a formula class and let $\PC$ be a predicate class
  such that $\FC$ is MSE-SOL-witnessed on $\PC$. Let $F[\pps]$ be a solution
  problem where $F \in \FC$ and all members of $\pps$ are in $\PC$.  Let $A$
  be a formula such that $(A \imp F) \in \FC$, $A \equiv \exists \pps\, F$, and
  no member of $\pps$ occurs in $A$.  Then

\slab{prop-wps-one} The \SP $(A \imp F)[\pps]$ has a solution.

\slab{prop-wps-mid} If $B$ is a formula such that $(B \imp F) \in \FC$, no
member of $\pps$ occurs in $B$, and the \SP $(B \imp F)[\pps]$ has a solution,
then $B \entails A$.
\end{prop}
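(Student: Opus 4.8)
The plan is to derive both claims from the existence results already established, the only genuine ingredient being the observation that an existential predicate quantifier distributes across an implication whose antecedent is free of the quantified predicates. Concretely, since no member of $\pps$ occurs in $A$ (respectively in $B$), the standard equivalence $\exists p\,(C \lor D) \equiv C \lor \exists p\, D$ for $p \notin \free{C}$ -- which holds for second-order predicate quantification just as for first-order quantification -- yields $\exists \pps\,(A \imp F) \equiv A \imp \exists \pps\, F$ and likewise $\exists \pps\,(B \imp F) \equiv B \imp \exists \pps\, F$. This quantifier-pushing step is the engine of the whole argument.

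For (\ref{prop-wps-one}), I would first rewrite $\exists \pps\,(A \imp F)$ as $A \imp \exists \pps\, F$ using the equivalence above. Since $A \equiv \exists \pps\, F$ by hypothesis, this formula is equivalent to $A \imp A$ and hence valid, so $\valid \exists \pps\,(A \imp F)$. Now the hypotheses $(A \imp F) \in \FC$ and ``all members of $\pps$ lie in $\PC$'', together with the assumption that $\FC$ is MSE-SOL-witnessed on $\PC$, put us exactly in the situation of Prop.~\ref{prop-sol-ex}, whose left-to-right direction delivers a solution $\Gs$ of the \SP $(A \imp F)[\pps]$ (indeed one with $(A \imp F)[\Gs] \in \FC$). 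This proves the first claim.

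For (\ref{prop-wps-mid}), I would run the necessary direction in reverse. By assumption the \SP $(B \imp F)[\pps]$ has a solution, so Prop.~\ref{prop-exists-necessary} gives $\valid \exists \pps\,(B \imp F)$. Pushing the quantifier past $B$ as above, this is $\valid\, (B \imp \exists \pps\, F)$, and replacing the equivalent $\exists \pps\, F$ by $A$ yields $\valid\, (B \imp A)$, that is, $B \entails A$. Note that this direction uses only the necessary condition for solution existence, not the MSE-SOL-witnessed hypothesis.

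There is no serious obstacle here: both parts reduce to the already-proven existence results once the quantifier is moved across the implication. The one point to handle with care is the justification of that distribution law for second-order quantifiers, together with the bookkeeping that $A$ and $B$ are genuinely free of $\pps$ so that the equivalence applies; everything else is direct substitution using $A \equiv \exists \pps\, F$.
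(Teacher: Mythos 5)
Your proof is correct and follows essentially the same route as the paper: part~(a) derives $\valid \exists \pps\, (A \imp F)$ from $A \equiv \exists \pps\, F$ and invokes Prop.~\ref{prop-sol-ex}, and part~(b) uses the necessary condition for solution existence to get $B \entails \exists \pps\, F \equiv A$. Your explicit justification of the quantifier-distribution step (pushing $\exists \pps$ past the $\pps$-free antecedent) is a detail the paper's proof leaves implicit, but it is the same argument.
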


\begin{proof}
(\ref{prop-wps-one}) From the specification of $A$ it follows that $\valid A
  \imp \exists \pps F$ and thus $\valid \exists \pps\, (A \imp F)$.  Hence, by
  Prop.~\ref{prop-sol-ex}, the \SP $(A \imp F)[\pps]$ has a solution.
(\ref{prop-wps-mid}) Let $B$ be a formula such that the left side of
  holds. With Prop.~\ref{prop-sol-ex} it follows that $\valid B \imp \exists
  \pps F$. Hence $B \entails \exists \pps F$.  Hence $B \entails A$.
\qed
\end{proof}
The following example illustrates Prop.~\ref{prop-wps}.
\begin{examp}[Elimination Result as Precondition for Solvability]
\label{examp-condition}
Consider the \SP $F[p_1p_2]$ where
\[
\begin{array}{rcl}
F & = & 
 \forall \msym{x} \, (\msym{p_{1}}(\mathit{x}) \imp  \msym{p_{2}}(\mathit{x})) \;\land\;
\forall \msym{x} \, (\msym{a}(\mathit{x}) \imp  \msym{p_{2}}(\mathit{x}))\;
 \land\;
\forall \msym{x} \, (\msym{p_{2}}(\mathit{x}) \imp  \msym{b}(\mathit{x})).
\end{array}
\]
Its formula is the consequent of the \SP considered in Examp.~\ref{examp-sp}.
Since $\exists p_1 \exists p_2\, F \equiv \forall \msym{x} \,
(\msym{a}(\mathit{x}) \imp \msym{b}(\mathit{x})) \not \equiv \true$, from
Prop.~\ref{prop-sol-ex} it follows that $F[p_1p_2]$ has no solution.  If,
however, the elimination result $\forall \msym{x} \, (\msym{a}(\mathit{x})
\imp \msym{b}(\mathit{x}))$ is added as an antecedent to~$F$, then the
resulting \SP, which is the \SP of Examp.~\ref{examp-sp}, has a solution.
\end{examp}

\section{Reproductive Solutions as Most General Solutions}
\label{sec-reproductive}

Traditionally, concise representations of \emph{all} particular solutions have
been central to investigations of the solution problem.  This section presents
adaptations of classic material to this end, due in particular to Schröder and
Löwenheim, and presented in a modern algebraic formalization by Rudeanu
\cite{rudeanu:74}.
The idea is that a general solution $\Gs[\tts]$ has parameter predicates
$\tts$ such that each instantiation $G[\Ts]$ with a sequence~$\Ts$ of formulas
is a particular solution and that for all particular solutions $\Hs$ there
exists a sequence $\Ts$ of formulas such that $\Hs \equiv \Gs[\Ts]$. In this
way, a general solution represents all solutions. A remaining difficulty is to
determine for a given particular solution $\Hs$ the associated $\Ts$. This is
remedied with so-called \name{reproductive solutions}, for which $\Hs$ itself
can be taken as~$\Ts$, that is, it holds that $\Gs[\Hs] \equiv \Hs$.

We give formal adaptations in the framework of predicate logic that center
around the notion of \name{reproductive solution}.  This includes precise
specifications of \name{reproductive solution} and two further auxiliary types
of solution.  A technique to construct a reproductive solution from a given
particular solution, known as Schröder's \name{rigorous solution} or
\name{Löwenheim's theorem} and a construction of reproductive solutions due to
Schröder, which succeeds on propositional formulas in general, is adapted.
Finally, a way to express reproductive solutions of $n$-ary \SPs in terms of
reproductive solutions of \UNSPs in the manner of the method of successive
eliminations is shown.

\subunit{Parametric, General and Reproductive Solutions}
\label{sec-repro}

The following definitions give adaptations of the notions of \name{parametric},
\name{general} and \name{reproductive solution} for predicate logic, based on
the modern algebraic notions in \cite{rudeanu:74,deschamps:72} as starting
point. 
\begin{defn}[Parametric and Reproductive Solution Problem (\PSP,
    \RSP, \UNRSP)]
\label{def-psp} 
A \defname{parametric solution problem (\PSP)} $\sol{F[\pps]}{\tts}$ is a pair
of a solution problem~$F[\pps]$ and a sequence $\tts$ of distinct predicates
such that $(\free{F} \cup \pps) \cap \tts = \emptyset$. The members of $\tts$
are called the \defname{solution parameters} of the \PSP.
If the sequences of predicates $\pps$ and $\tts$ are matching, then the \PSP
is called a \defname{reproductive solution problem (\RSP)}.  A \RSP with arity $1$ is also called \defname{unary
  reproductive solution problem (\UNRSP)}.
\end{defn}

\newcommand{\xskip}{\smallskip}

\begin{defn}[Parametric, General and Reproductive Solution]
\label{def-sol-general-various} Define the following notions.

\xskip

\sdlab{def-sol-parametric} A \defname{parametric solution} of a \PSP
$\sol{F[\pps]}{\tts}$ is a sequence $\Gs[\tts]$ of formulas such that
$\clean{\Gs}$, $\ssubst{\Gs}{\pps}{F}$ and for all sequences of formulas $\Hs$
such that $\ssubst{\Hs}{\tts}{\Gs}$ and $\ssubst{\Hs}{\pps}{F}$ it holds that
if there exists a sequence $\Ts$ of formulas such that
$\ssubst{\Ts}{\tts}{\Gs}$, $\ssubst{\Gs[\Ts]}{\pps}{F}$ and \[\Hs \equiv
\Gs[\Ts],\] then \[\valid F[\Hs].\]

\xskip

\sdlab{def-sol-general} A \defname{general solution} of a \PSP
$\sol{F[\pps]}{\tts}$ is a sequence $\Gs[\tts]$ of formulas such that the
characterization of \name{parametric solution} (Def.~\ref{def-sol-parametric})
applies, with the \name{if-then} implication supplemented by its converse.

\sdlab{def-sol-reproductive} A \defname{reproductive solution} of a \RSP
$\sol{F[\pps]}{\tts}$ is a sequence $\Gs[\tts]$ of formulas such that
\begin{enumerate}
\item \label{stmt-def-repro-para} $\Gs$ is a parametric solution of
  $\sol{F[\pps]}{\tts}$ and
\item \label{stmt-def-repro-main} For all sequences $\Hs$ of formulas such
  that $\ssubst{\Hs}{\tts}{\Gs}$ \PCONDX and\linebreak $\ssubst{\Hs}{\pps}{F}$ it holds
  that if \[\valid F[\Hs],\] then \[\Hs \equiv \Gs[\Hs].\]
\end{enumerate}
\end{defn}

\noindent
\name{Parametric solution} can be characterized more concisely than in
Def.~\ref{def-sol-parametric}, however, not showing the syntactic correspondence to
the characterization of \name{general solution} in Def.~\ref{def-sol-general}:

\begin{prop}[Compacted Characterization of Parametric Solution]
\label{prop-par-altchar} 
A parametric solution of a \PSP $\sol{F[\pps]}{\tts}$ is a sequence $G[\tts]$
of formulas such that $\clean{\Gs}$, $\ssubst{\Gs}{\pps}{F}$ and for all
sequences $\Ts$ of formulas such that $\ssubst{\Ts}{\tts}{\Gs}$,
$\ssubst{\Gs[\Ts]}{\pps}{F}$ it holds that \[\valid F[\Gs[\Ts]].\]
\end{prop}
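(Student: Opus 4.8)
The plan is to prove the claimed equivalence by comparing the compacted characterization with Def.~\ref{def-sol-parametric} clause by clause. The first two requirements, $\clean{\Gs}$ and $\ssubst{\Gs}{\pps}{F}$, appear verbatim in both formulations, so the whole argument reduces to showing that, under these two standing conditions, the single universally quantified clause over sequences $\Ts$ in the compacted statement is equivalent to the nested $\forall\Hs\,(\cdots\Rightarrow\cdots)$ clause of Def.~\ref{def-sol-parametric}. I would establish the two implications separately.

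For the direction from Def.~\ref{def-sol-parametric} to the compacted form I would fix a sequence $\Ts$ satisfying $\ssubst{\Ts}{\tts}{\Gs}$ and $\ssubst{\Gs[\Ts]}{\pps}{F}$, and aim to derive $\valid F[\Gs[\Ts]]$. The idea is to instantiate the outer quantifier of Def.~\ref{def-sol-parametric} at $\Hs := \Gs[\Ts]$ and to use $\Ts$ itself as the inner witness: then $\ssubst{\Ts}{\tts}{\Gs}$ and $\ssubst{\Gs[\Ts]}{\pps}{F}$ hold by assumption and $\Hs \equiv \Gs[\Ts]$ holds trivially, so the premise of the inner implication is met and $\valid F[\Hs]$, that is $\valid F[\Gs[\Ts]]$, follows. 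The genuine proof obligation, and what I expect to be the main obstacle, is that $\Hs = \Gs[\Ts]$ must meet the outer hypothesis $\ssubst{\Gs[\Ts]}{\tts}{\Gs}$ (its companion $\ssubst{\Hs}{\pps}{F}$ is literally the assumed $\ssubst{\Gs[\Ts]}{\pps}{F}$). I would discharge this by checking the three conditions of Def.~\ref{def-subst-seq-new}: the condition $\free{\Gs[\Ts]} \cap \tts = \emptyset$ follows because $\ssubst{\Ts}{\tts}{\Gs}$ forces $\free{T_k} \cap \tts = \emptyset$, so no parameter is reintroduced; the capture condition is controlled by $\clean{\Gs}$, which prevents a free symbol of $\Gs$ from lying under a binder of $\Gs$; and the $\XX$-variable clause is inherited from the corresponding clauses of $\ssubst{\Ts}{\tts}{\Gs}$ and $\ssubst{\Gs}{\pps}{F}$. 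This variable bookkeeping is routine but is precisely where cleanness is genuinely used, and it is the delicate point of the proof.

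For the converse I would assume the compacted condition and recover the clause of Def.~\ref{def-sol-parametric}. Fixing $\Hs$ with $\ssubst{\Hs}{\tts}{\Gs}$ and $\ssubst{\Hs}{\pps}{F}$, together with a witness $\Ts$ satisfying $\ssubst{\Ts}{\tts}{\Gs}$, $\ssubst{\Gs[\Ts]}{\pps}{F}$ and $\Hs \equiv \Gs[\Ts]$, the compacted condition applied to $\Ts$ immediately gives $\valid F[\Gs[\Ts]]$. It then only remains to transport validity across $\Hs \equiv \Gs[\Ts]$, which is exactly the semantic character of \emph{solution} observed after Def.~\ref{def-sol-particular}: since both $\ssubst{\Hs}{\pps}{F}$ and $\ssubst{\Gs[\Ts]}{\pps}{F}$ hold and $\Hs \equiv \Gs[\Ts]$ componentwise, applying Prop.~\ref{prop-pullout} to each side yields $F[\Hs] \equiv F[\Gs[\Ts]]$, whence $\valid F[\Hs]$. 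This direction carries no substitutibility difficulties beyond those already assumed and is essentially immediate.
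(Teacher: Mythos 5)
Your proposal is correct and follows essentially the same route as the paper's own proof: both directions are handled by noting the first two conditions coincide, instantiating $\Hs := \Gs[\Ts]$ (with $\Ts$ as its own witness) for one direction, and transporting validity across $\Hs \equiv \Gs[\Ts]$ via the substitutibility conditions for the converse. Your explicit verification of $\ssubst{\Gs[\Ts]}{\tts}{\Gs}$ is somewhat more detailed than the paper's one-line justification, but it is the same step.
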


\begin{proof}%
The left side of the proposition can be expressed as
\begin{center}
\begin{tabular}{L{5em}@{\hspace{1em}}R{2em}@{\hspace{1em}}L{15em}}
& (1) & $\clean{\Gs}$,\\
& (2) & $\ssubst{\Gs}{\pps}{F}$,\\
\multicolumn{3}{l}{and for all sequences $H$, $T$ of formulas it holds that}\\
if & (3) & $\ssubst{\Hs}{\tts}{\Gs}$, (III)\\
& (4) & $\ssubst{\Hs}{\pps}{F}$,\\
& (5) & $\ssubst{\Ts}{\tts}{\Gs}$,\\
& (6) & $\ssubst{\Gs[\Ts]}{\pps}{F}$ \tand\\
& (7) & $\Hs \equiv \Gs[\Ts]$,\\
then & (8) & $\valid F[\Hs]$.
\end{tabular}
\end{center}
The right side of the proposition can be expressed as
\begin{center}
\begin{tabular}{L{5em}@{\hspace{1em}}R{2em}@{\hspace{1em}}L{15em}}
& (9) & $\clean{\Gs}$,\\
& (10) & $\ssubst{\Gs}{\pps}{F}$,\\
\multicolumn{3}{l}{and for all sequences $T$ of formulas it holds that}\\
if & (11) & $\ssubst{\Ts}{\tts}{\Gs}$ \tand\\
& (12) & $\ssubst{\Gs[\Ts]}{\pps}{F}$,\\
then & (13) & $\valid F[\Gs[\Ts]]$.
\end{tabular}
\end{center}
Left-to-right: If $\Hs = \Gs[\Ts]$, then $\Hs \equiv \Gs[\Ts]$.  Thus, this
direction of the proposition follows if statements~(9)--(12) imply (1)--(6),
with $\Hs$ instantiated to $\Gs[\Ts]$.  Statements~(1), (2), (5) and~(6) are
(9), (10), (11) and~(12), respectively.  The instantiation of~(3), that is,
$\ssubst{\Gs[\Ts]}{\tts}{\Gs}$, follows from (10) and (11). The instantiation
of~(4) is $\ssubst{\Gs[\Ts]}{\pps}{F}$, which is, like~(6), identical to~(12).
Right-to-left: Statements~(1)--(7) imply (9)--(12). This holds since (1), (2),
(5) and~(6) are (9), (10), (11) and (12), respectively. Hence, assuming the
right side of the proposition, statements~(1)--(7) then imply (13), that is,
$\valid F[\Gs[\Ts]]$.  Statement~(13), (7) and~(6) imply~(8), that is
$\valid F[\Hs]$, which concludes the proof.
\qed
\end{proof}

\noindent
The essential relationships between particular, parametric, general and
reproductive solutions, as well as an alternate characterization of
\name{reproductive solution} implied by these, are gathered in the following
proposition.

\begin{prop}[Relationships Between the Solution Types]
\label{prop-sol-relships}
Let $\Gs = \Gs[\tts]$ be a sequence of formulas. Then

\xskip
\slab{prop-para-is-sol-both} $\Gs$ is a parametric solution of the \PSP
$\sol{F[\pps]}{\tts}$ if and only if $\clean{\Gs}$ and
$\Gs$ is a particular solution of the \SP $F[\pps]$.

\xskip

\slab{prop-par-all-sol} If $\Gs$ is a parametric solution of the \PSP
$\sol{F[\pps]}{\tts}$ and $\Ts$ is sequence of formulas such that
$\ssubst{\Ts}{\tts}{\Gs}$, $\ssubst{\Gs[\Ts]}{\pps}{F}$, then $\Gs[\Ts]$ is a
particular solution of the \SP $F[\pps]$.

\xskip 

\slab{prop-general-is-para} A general solution of a \PSP is also a parametric
solution of that \PSP.

\xskip

\slab{prop-particular-to-general} If $\Gs$ is a general solution of the \PSP
$\sol{F[\pps]}{\tts}$ and $\Hs$ is a particular solution of the \SP $F[\pps]$
such that \PCOND $\ssubst{\Hs}{\tts}{\Gs}$, then there exists a sequence~$\Ts$
of formulas such that $\ssubst{\Ts}{\tts}{\Gs}$, $\ssubst{\Gs[\Ts]}{\pps}{F}$
and \[\Hs \equiv \Gs[\Ts].\]

\xskip

\slab{prop-repro-is-general} A reproductive solution of a \RSP is also a
general solution of that \RSP.

\xskip

\slab{prop-par-halfrepro} If $\Gs$ is a parametric solution of the \RSP
$\sol{F[\pps]}{\tts}$, then for all sequences $\Hs$ of formulas such that
$\ssubst{\Hs}{\tts}{\Gs}$ \PCONDX and $\ssubst{\Hs}{\pps}{F}$ it holds that
if \[\Hs \equiv \Gs[\Hs],\] then \[\valid F[\Hs].\]

\slab{prop-repro-bidir} $\Gs$ is a reproductive solution of the \RSP
$\sol{F[\pps]}{\tts}$ if and only if
\begin{enumerate}
\item $\Gs$ is a parametric solution of $\sol{F[\pps]}{\tts}$ and
\item For all sequences $\Hs$ of formulas such that
$\ssubst{\Hs}{\tts}{\Gs}$ \PCONDX and\linebreak $\ssubst{\Hs}{\pps}{F}$ it holds that
\[\valid F[\Hs]\] if and only if \[\Hs \equiv \Gs[\Hs].\]
\end{enumerate}
\end{prop}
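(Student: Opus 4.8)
The plan is to prove the seven claims largely by unfolding the definitions of \emph{parametric}, \emph{general} and \emph{reproductive solution}, leaning on two main tools: the compacted characterization of parametric solutions (Prop.~\ref{prop-par-altchar}), which reduces the parametric property to ``$\clean{\Gs}$, $\ssubst{\Gs}{\pps}{F}$, and $\valid F[\Gs[\Ts]]$ for every substitutible $\Ts$''; and the standard fact that validity is preserved under substitution of substitutible formulas for predicates, i.e.\ $\valid \phi$ implies $\valid \phi[\Ts]$. The recurring device throughout the reproductive claims is to instantiate the universally quantified parameter sequence $\Ts$ by the particular solution $\Hs$ at hand, which is exactly what the reproductive property licenses.

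For the equivalence of parametric and particular solutions, the left-to-right direction follows by instantiating the compacted characterization with $\Ts = \tts$, which yields $\valid F[\Gs]$ together with the shared conditions $\ssubst{\Gs}{\pps}{F}$ and $\clean{\Gs}$; conversely, from a clean particular solution I would derive $\valid F[\Gs[\Ts]]$ for arbitrary substitutible $\Ts$ by observing that, since no member of $\tts$ occurs in $F$, the formula $F[\Gs[\Ts]]$ equals $(F[\Gs])[\Ts]$, and then applying substitution-preservation of validity to $\valid F[\Gs]$. The claim that $\Gs[\Ts]$ is itself a particular solution is then immediate from the compacted characterization ($\valid F[\Gs[\Ts]]$ plus the assumed $\ssubst{\Gs[\Ts]}{\pps}{F}$). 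The two claims relating general to parametric solutions, and general solutions to particular ones, I would settle by definition unfolding: a general solution satisfies the parametric biconditional and hence in particular its forward (``if'') half, so it is parametric; and the backward half of that biconditional is precisely the assertion that every particular solution $\Hs$ with $\ssubst{\Hs}{\tts}{\Gs}$ arises as some $\Gs[\Ts]$.

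The heart of the argument is the half-reproductive claim together with the reproductive claims that build on it. For the half-reproductive claim I would take a parametric solution $\Gs$ of the \RSP\ and any substitutible $\Hs$ with $\Hs \equiv \Gs[\Hs]$, and instantiate the parametric characterization with the witness $\Ts = \Hs$: since $\pps$ and $\tts$ match in an \RSP, $\Gs[\Hs]$ is well formed, and the hypotheses $\ssubst{\Hs}{\tts}{\Gs}$ and $\Hs \equiv \Gs[\Hs]$ supply exactly what the parametric property needs to conclude $\valid F[\Hs]$. Granting this, that a reproductive solution is a general solution reduces to supplying the missing (backward) direction of the general-solution biconditional: given $\valid F[\Hs]$, the reproductive property yields $\Hs \equiv \Gs[\Hs]$, so $\Ts = \Hs$ serves as the required witness. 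Finally, the biconditional characterization of reproductive solutions follows by noting that one direction of its clause~(2) is the definitional reproductive condition while the other direction is exactly the half-reproductive claim, which holds for \emph{any} parametric solution.

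The step I expect to cost the most care is not logical but bookkeeping: each instantiation $\Ts = \Hs$ (or $\Ts = \tts$) requires verifying the syntactic substitutibility side conditions, in particular $\ssubst{\Gs[\Hs]}{\pps}{F}$, which must be derived from $\ssubst{\Gs}{\pps}{F}$, $\ssubst{\Hs}{\pps}{F}$ and $\ssubst{\Hs}{\tts}{\Gs}$ by tracking free and bound symbols through the nested substitution. The cleanness requirement $\clean{\Gs}$ built into parametric solutions is what keeps these conditions manageable, since it prevents bound symbols of $\Gs$ from capturing free symbols introduced by $\Hs$; I would isolate the needed closure property of $\ssubst{\cdot}{\cdot}{\cdot}$ under composition as a small syntactic lemma and apply it uniformly across all parts.
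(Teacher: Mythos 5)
Your overall route coincides with the paper's: both rest on the compacted characterization (Prop.~\ref{prop-par-altchar}), settle (\ref{prop-par-all-sol})--(\ref{prop-particular-to-general}) by unfolding definitions, prove (\ref{prop-par-halfrepro}) by instantiating the parametric property at $\Ts = \Hs$ and transferring validity across $\Hs \equiv \Gs[\Hs]$, derive (\ref{prop-repro-is-general}) by using the reproductive condition to produce the witness $\Ts = \Hs$, and obtain (\ref{prop-repro-bidir}) as a corollary of (\ref{prop-par-halfrepro}). Your right-to-left argument for (\ref{prop-para-is-sol-both}) -- commuting the two substitutions because $\tts$ is disjoint from $\free{F} \cup \pps$ and then appealing to preservation of validity -- is also the paper's argument (its steps (7)--(9), via Prop.~\ref{prop-fixing}).

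There is, however, one step that fails as literally stated: in the left-to-right direction of (\ref{prop-para-is-sol-both}) you instantiate the compacted characterization with $\Ts = \tts$, i.e.\ with the formulas $t_i(x_1\ldots x_{\arity{t_i}})$. This instantiation is not licensed, because condition~(\ref{enum-subst-2}) of Def.~\ref{def-subst-seq-new} requires $\free{T_i} \cap \tts = \emptyset$, and here $t_i \in \free{T_i}$; so $\ssubst{\Ts}{\tts}{\Gs}$ does not hold and this $\Ts$ lies outside the universal quantifier of Prop.~\ref{prop-par-altchar}. The paper instead instantiates with a sequence $\qqs$ of \emph{fresh} predicates matching $\tts$, obtains $\valid F[\Gs[\qqs]]$, and then infers $\valid F[\Gs]$ by renaming the fresh symbols back -- and it explicitly remarks that this direction therefore depends on the availability of fresh predicates in the vocabulary. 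Your closing paragraph correctly anticipates that the substitutibility side conditions are where the care is needed, but this is precisely the one instantiation where the side condition is not merely tedious but actually violated, and the fix (fresh $\qqs$ plus renaming) is a genuine extra idea rather than bookkeeping. The remaining instantiations $\Ts = \Hs$ are fine, since $\ssubst{\Hs}{\tts}{\Gs}$ is available as a hypothesis in those parts.
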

Before we come to the proof of Prop.~\ref{prop-sol-relships}, let us observe
that the conclusion of Prop.~\ref{prop-par-halfrepro} is
item~(\ref{stmt-def-repro-main}.) of the definiens of \name{reproductive
  solution} (Def.~\ref{def-sol-reproductive}) after replacing the
\name{if-then} implication there by its converse, and that
Prop.~\ref{prop-repro-bidir} characterizes \name{reproductive solution} like
its definition (Def.~\ref{def-sol-reproductive}), except that the definiens is
strengthened by turning the \name{if-then} implication in
item~(\ref{stmt-def-repro-main}.)  into an equivalence.

\begin{proof}[Proposition~\ref{prop-sol-relships}]
\ 

(\ref{prop-para-is-sol-both}) 
Left-to-right: Let $\qqs$ be a sequence of fresh predicates that matches
$\tts$ and assume that $\Gs[\tts]$ is a parametric solution of
$\sol{F[\pps]}{\tts}$.  Hence $\ssubst{\Gs}{\pps}{F}$ and $\valid
F[\Gs[\qqs]]$, which implies $\valid F[\Gs]$.  Thus $\Gs$ is a particular
solution of $F[\pps]$. Note that this direction of the proposition
requires the availability of fresh predicates in the vocabulary.
Right-to-left: Can be derived in the following steps explained below.
\[
\begin{array}{r@{\hspace{1em}}l@{\hspace{0.5em}}l}
(1) & \Gs[\tts] \text{ is a particular solution of } F[\pps].\\
(2) & \clean{\Gs}\\
(3) & \ssubst{\Gs}{\pps}{F}\\
(4) & \valid F[\Gs].\\
(5) & \ssubst{\Ts}{\tts}{\Gs}.\\
(6) & \ssubst{\Gs[\Ts]}{\pps}{F}.\\
(7) & \ssubst{\Ts}{\tts}{F[\Gs]}.\\
(8) & \valid \forall \tts\, F[\Gs].\\
(9) & \valid F[\Gs[\Ts]].\\
(10) & \Gs \text{ is a parametric solution of } \sol{F[\pps]}{\tts}.
\end{array}
\]
Step~(1) and~(2), where $\tts$ is some sequence of distinct predicates such
that $(\free{F} \cup \pps) \cap \tts = \emptyset$, form the left side of the
proposition.  Steps~(3) and (4) follow from (1) and the characteristics of
\name{particular solution}.  Let $\Ts$ be a sequence of formulas such that (5)
and (6) hold, conditions on the left side of Prop.~\ref{prop-par-altchar}.
Step~(7) follows from (5) and (6).
Step (8) follows from (4).  Step~(9) follows from~(7) and (8) by
Prop.~\ref{prop-fixing}. Finally, step~(10), the right side of the proposition,
follows from Prop.~\ref{prop-par-altchar} with (9), (2) and~(3).

\smallskip

(\ref{prop-par-all-sol}) The left side of the proposition includes
$\ssubst{\Gs[\Ts]}{\pps}{F}$ and, by Prop.~\ref{prop-par-altchar}, implies
$\valid F[\Gs[\Ts]]$, from which the right side follows.

\smallskip

(\ref{prop-general-is-para}) Immediate from the definition of \name{general
 solution} (Def.~\ref{def-sol-general}).

\smallskip

(\ref{prop-particular-to-general}) The left side of the proposition implies
$\ssubst{\Hs}{\tts}{\Gs}$,\linebreak $\ssubst{\Hs}{\pps}{F}$ and $\valid F[\Hs]$. The
right side then follows from the definition of \name{general solution}
(Def.~\ref{def-sol-general}).

\smallskip

(\ref{prop-repro-is-general}) By definition, a reproductive solution is also a
parametric solution.  Let $\Gs$ be a reproductive solution of
$\sol{F[\pps]}{\tts}$.  Let $\f{COND}$ stand for the following conjunction of
three statements:
\[\ssubst{\Hs}{\tts}{\Gs},\;
  \ssubst{\Hs}{\pps}{F}\; \mathit{and}\; \valid F[\Hs].\]
From the definition of \name{reproductive solution} it
immediately follows that for all sequences $\Hs$ of formulas such that
$\f{COND}$ it holds that $\Hs \equiv \Gs[\Hs]$.
From this it follows that for all sequences $\Hs$ of formulas such that
$\f{COND}$ it holds that $\ssubst{\Hs}{\tts}{\Gs}$,
$\ssubst{\Gs[\Hs]}{\pps}{F}$ and $\Hs \equiv \Gs[\Hs]$, which can be derived
as follows. The first of the statements on the right,
$\ssubst{\Hs}{\tts}{\Gs}$, is included directly in the left side, that is,
$\f{COND}$.  The second one, $\ssubst{\Gs[\Hs]}{\pps}{F}$, follows from
$\ssubst{\Hs}{\tts}{\Gs}$ and $\ssubst{\Hs}{\pps}{F}$ that are in $\f{COND}$
together with $\ssubst{\Gs}{\pps}{F}$, which holds since $\Gs$ is a parametric
solution.
The above implication also holds if $\Hs$ on its right side is replaced by a
supposedly existing~$\Ts$. It then forms the remaining requirement to show
that $\Gs$ is a general solution: For all sequences $\Hs$ of formulas such
that $\f{COND}$ there exists a sequence~$\Ts$ of formulas such that
$\ssubst{\Ts}{\tts}{\Gs}$, $\ssubst{\Gs[\Ts]}{\pps}{F}$ and $\Hs \equiv
\Gs[\Ts]$.

\smallskip

(\ref{prop-par-halfrepro}) 
Can be shown in the following steps, explained below.
\[
\begin{array}{r@{\hspace{1em}}l@{\hspace{0.5em}}l}
(1) & \ssubst{\Gs}{\pps}{F}.\\ 
(2) & \ssubst{\Hs}{\tts}{\Gs}.\\
(3) & \ssubst{\Hs}{\pps}{F}.\\
(4) & \Hs \equiv \Gs[\Hs].\\
(5) & \ssubst{\Gs[\Hs]}{\pps}{F}.\\  
(6) & F[\Gs[\Hs]] \entails \false.\\
(7) & F[\Hs] \entails \false.\\
\end{array}
\]
Assume that $\Gs$ is a parametric solution of the \RSP $\sol{F[\pps]}{\tts}$,
which implies~(1).  Let $\Hs$ be a sequence of formulas such that (2) and (3),
the preconditions of the converse of (as well as the unmodified)
item~(\ref{stmt-def-repro-main}.)  in the definition of \name{reproductive
  solution} (Def.~\ref{def-sol-reproductive}), hold.  Further assume (4), the
right side of item~(\ref{stmt-def-repro-main}.).  We prove the proposition by
deriving the left side of item~(\ref{stmt-def-repro-main}.).  Step~(5) follows
from~(1), (2) and~(3).  Step~(6) follows from~(2) and (5) by
Prop.~\ref{prop-par-altchar} since $\Gs$ is a parametric solution.  Finally,
step~(7), the left side of item~(\ref{stmt-def-repro-main}.), follows from (6)
and (4) with (3) and (5).

\smallskip

(\ref{prop-repro-bidir}) Follows from Prop.~\ref{prop-par-halfrepro}.
\qed
\end{proof}

Rudeanu \cite{rudeanu:74} notes that the concept of reproductive solution
seems to have been introduced by Schröder \cite{schroeder:all}, while the term
\name{reproductive} is due to Löwenheim \cite{loewenheim:1910}.  Schröder
calls the additional requirement that a reproductive solution must satisfy in
comparison with general solution \name{Adventivforderung} (\name{adventitious
  requirement}) and discusses it at length in \cite[vol.~3,
  \S~12]{schroeder:all}, describing it with \name{reproduzirt} %
\cite[vol.~3, p.~171]{schroeder:all}.

\subunit{The Rigorous Solution}

From any given particular solution~$\Gs$, a reproductive solution can be
constructed, called here, following Schröder's terminology \cite[vol.~3,
  \S~12]{schroeder:all}, the \name{rigorous solution associated with} $\Gs$.
In the framework of Boolean algebra, the analogous construction is
\cite[Theorem~2.11]{rudeanu:74}.

\begin{prop}[The Rigorous Solution]
\label{prop-sol-rigorous}
Let $\sol{F[\pps]}{\tts = t_1\ldots t_n}$ be a \RSP.  For $i \in \{1,\ldots,
n\}$ let $\xs_i$ stand for $x_1\ldots x_{\arity{t_i}}$.  Assume $\free{F} \cap
\XX = \emptyset$, $\ssubst{t_1(\xs_1)\ldots t_n(\xs_n)}{\pps}{F}$ and
$\ssubst{F[\tts]}{\pps}{F}$. If $\Gs = G_1 \ldots G_n$ is a particular
solution of that $\RSP$, then the sequence $\Rs = R_1\ldots R_n$ of formulas
defined as follows is a reproductive solution of that \RSP.
\[R_i \text{ is the clean variant of } (G_i(\xs) \land \lnot F[\tts]) \lor 
(t_i(\xs_i) \land F[\tts]).\]
\end{prop}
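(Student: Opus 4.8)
The plan is to verify directly the two defining conditions of a reproductive solution (Def.~\ref{def-sol-reproductive}) for $\Rs$: that $\Rs$ is a parametric solution of $\sol{F[\pps]}{\tts}$, and that every $\Hs$ with $\ssubst{\Hs}{\tts}{\Rs}$, $\ssubst{\Hs}{\pps}{F}$ and $\valid F[\Hs]$ satisfies $\Hs \equiv \Rs[\Hs]$. Throughout I would abbreviate the ``flag'' $F[\tts]$ by $\phi$. The point of the hypotheses is recorded at the outset: $\ssubst{t_1(\xs_1)\ldots t_n(\xs_n)}{\pps}{F}$ makes $F[t_1(\xs_1)\ldots t_n(\xs_n)] \equiv \phi$ meaningful, and $\free{F}\cap\XX = \emptyset$ guarantees that $\phi$ contains no member of $\XX$ free, so that $\phi$ is untouched when the argument variables $\xs_i$ of an atom are instantiated.

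For the parametric-solution part I would invoke Prop.~\ref{prop-para-is-sol-both} and reduce it to showing $\clean{\Rs}$ together with the fact that $\Rs$ is a particular solution of $F[\pps]$. The condition $\clean{\Rs}$ holds by construction, since each $R_i$ is taken to be a clean variant. Substitutibility $\ssubst{\Rs}{\pps}{F}$ is a syntactic check against Def.~\ref{def-subst-seq-new}: the clean variants prevent capture for condition~(a); $\free{R_i}\cap\pps=\emptyset$ follows because $\free{G_i}\cap\pps=\emptyset$ (as $\Gs$ is a solution), $\phi$ has no member of $\pps$ free (it mentions $\tts$, not $\pps$), and $t_i$ is a parameter; and the arity condition holds because $\arity{t_i}=\arity{p_i}$ and $\phi$ has no member of $\XX$ free. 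The crux is $\valid F[\Rs]$, which I would settle by a semantic case split. Fix an interpretation $\mathcal I$. Because $\phi$ contains no $\xs_i$ and is not itself altered by the predicate substitution, one and the same formula $\phi$ sits as the flag at every atom occurrence of $F[\Rs]$, so $\phi$ has a single truth value in $\mathcal I$. If $\phi$ is true in $\mathcal I$, then the formula substituted for each atom of $p_i$ agrees in $\mathcal I$ with $t_i$ applied to the same arguments, whence $F[\Rs]$ agrees in $\mathcal I$ with $F[\tts]=\phi$, which is true; if $\phi$ is false in $\mathcal I$, then the substituted formula agrees with the corresponding instance of $G_i$, whence $F[\Rs]$ agrees with $F[\Gs]$, which is valid since $\Gs$ is a particular solution. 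Either way $\mathcal I \models F[\Rs]$, so $\valid F[\Rs]$; equivalently $F[\Rs] \equiv F[\tts] \lor F[\Gs] \equiv \true$. With $\clean{\Rs}$ this makes $\Rs$ a parametric solution by Prop.~\ref{prop-para-is-sol-both}.

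For the reproductive condition, let $\Hs$ satisfy $\ssubst{\Hs}{\tts}{\Rs}$, $\ssubst{\Hs}{\pps}{F}$ and $\valid F[\Hs]$. The decisive computation is that of $R_i[\Hs]$: under the instantiation $\tts \mapsto \Hs$ the subformula $G_i$ is unchanged (it contains no member of $\tts$), the atom $t_i(\xs_i)$ becomes $H_i$, and the flag $\phi = F[\tts]$ becomes $F[\Hs]$; hence $R_i[\Hs] \equiv (G_i \land \lnot F[\Hs]) \lor (H_i \land F[\Hs])$. Since $\valid F[\Hs]$, the first disjunct drops and $R_i[\Hs] \equiv H_i$, that is $\Rs[\Hs] \equiv \Hs$. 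This is exactly item~(2) of Def.~\ref{def-sol-reproductive}, so $\Rs$ is a reproductive solution. (The stronger biconditional form in Prop.~\ref{prop-repro-bidir} would then follow immediately via Prop.~\ref{prop-par-halfrepro}, but it is not needed here.)

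The main obstacle I anticipate is bookkeeping rather than ideas: justifying rigorously that $\phi$ behaves as a single, argument-independent flag inside $F[\Rs]$, and that the nested substitutions compose as claimed (e.g.\ $F[t_1(\xs_1)\ldots t_n(\xs_n)] \equiv F[\tts]$ and $F[\tts][\Hs] \equiv F[\Hs]$). This is precisely what the three hypotheses $\free{F}\cap\XX=\emptyset$, $\ssubst{t_1(\xs_1)\ldots t_n(\xs_n)}{\pps}{F}$ and $\ssubst{F[\tts]}{\pps}{F}$ are there to guarantee, together with the clean-variant definition of the $R_i$, which blocks variable capture both when the atoms' arguments are substituted into $R_i$ and when the parameters $\tts$ are later instantiated by $\Hs$.
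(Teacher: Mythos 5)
Your proof is correct, but it routes the first half differently from the paper. For the parametric-solution clause the paper verifies the compacted characterization (Prop.~\ref{prop-par-altchar}) directly, i.e.\ it shows $\valid F[\Rs[\Ts]]$ for \emph{every} admissible parameter instantiation $\Ts$, by establishing the two entailments $\lnot F[\Ts] \land \lnot F[\Rs[\Ts]] \entails \lnot F[\Gs]$ and $F[\Ts] \land \lnot F[\Rs[\Ts]] \entails \lnot F[\Ts]$ through an induction whose steps are instances of the Subformula Distribution Lemma (Prop.~\ref{prop-rigo-dist-lemma}) with flag $F[\Ts]$. You instead prove only $\valid F[\Rs]$ -- via a semantic case split on the truth value of the flag $F[\tts]$, which is the model-theoretic content of that same lemma applied at the single instantiation $\Ts = \tts$, giving $F[\Rs] \equiv F[\tts] \lor F[\Gs] \equiv \true$ -- and then upgrade ``clean particular solution'' to ``parametric solution'' by Prop.~\ref{prop-para-is-sol-both}, whose right-to-left direction (already proven via Prop.~\ref{prop-fixing}) absorbs the quantification over all $\Ts$. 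Your route is shorter and leans on the machinery the paper set up for exactly this purpose; the paper's route is self-contained at this point, exercises the distribution lemma it states as the analogue of Rudeanu's Lemma~2.3, and yields the slightly stronger intermediate fact $F[\Gs] \entails F[\Rs[\Ts]]$ for arbitrary $\Ts$. The bookkeeping you flag -- that $F[\tts]$ is a single argument-independent flag not captured by quantifiers of $F$ at the $p_i$-positions -- is indeed exactly what the hypotheses $\free{F} \cap \XX = \emptyset$ and $\ssubst{F[\tts]}{\pps}{F}$ secure, so no gap there. The reproductive clause is handled identically in both proofs: $R_i[\Hs] \equiv (G_i \land \lnot F[\Hs]) \lor (H_i \land F[\Hs]) \equiv H_i$ once $\valid F[\Hs]$.
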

In the specification of $R_i$ the formula $G_i$ is written as $G_i(\xs)$ to
indicate that members of $\XX$ may occur there literally without being
replaced.  In the unsatisfiability-based setting, the $R_i$ would be
characterized as the clean variant of
\begin{equation}
(G_i(\xs) \land F[\tts]) \lor (t_i(\xs_i) \land \lnot
F[\tts]).
\end{equation}
 The proof of this proposition is based on the following lemma, a predicate
 logic analog to \cite[Lemma~2.3]{rudeanu:74} for the special case $n=1$,
 which is sufficient to prove Prop.~\ref{prop-sol-rigorous}: The effect of the
 lemma for arbitrary $n$ is achieved by an application of
 Prop.~\ref{prop-sol-rigorous} within an induction.
\begin{prop}[Subformula Distribution Lemma]
\label{prop-rigo-dist-lemma}
Let $p$ be a predicate (with arbitrary arity $\geq 0$), let $F[p]$ be a
formula and let $V, W, A$ be formulas such that $\ssubst{V}{p}{F}$,
$\ssubst{W}{p}{F}$,  $\ssubst{A}{p}{F}$ and, in addition, $\free{A} \cap \XX
 = \emptyset$.  It then holds that
\[F[(A \land V) \lor (\lnot A \land W)]\; \equiv\;
 (A \land F[V]) \lor (\lnot A \land F[W]).\]
\end{prop}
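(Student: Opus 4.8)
The plan is to argue semantically, by a case distinction on the truth value of $A$. Fix an arbitrary interpretation $\mathcal{I}$ together with an assignment $\beta$ to the individual variables occurring free, and show that the two sides of the claimed equivalence receive the same truth value under $(\mathcal{I},\beta)$; since $\mathcal{I}$ and $\beta$ are arbitrary, the equivalence follows. The guiding intuition is that, because $\free{A} \cap \XX = \emptyset$, the subformula $A$ does not depend on the argument variables $x_1, x_2, \ldots$ that parameterize the atoms $p(\tts)$, so every copy of $A$ inserted by the substitution behaves like a single global truth value rather than varying with the arguments at which $p$ was applied.

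First I would make this precise as a uniformity claim: under $(\mathcal{I},\beta)$ every copy of $A$ introduced by the substitution in $F[(A \land V) \lor (\lnot A \land W)]$ evaluates to the truth value of $A$ under $(\mathcal{I},\beta)$. Such a copy sits where a free occurrence of $p$ stood, hence within the scope of certain quantifiers of $F$; by $\ssubst{A}{p}{F}$ (no free occurrence of $p$ in $F$ lies in the scope of a quantifier binding a member of $\free{A}$) none of these quantifiers binds a member of $\free{A}$, and the accompanying argument replacement $x_i \mapsto t_i$ leaves $A$ untouched since $\free{A} \cap \XX = \emptyset$. Descending through those quantifiers therefore modifies $\beta$ only on variables outside $\free{A}$, so each inserted copy of $A$ has the same value as $A$ under $(\mathcal{I},\beta)$, as claimed.

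With the uniformity claim in hand the case distinction is immediate. If $A$ is true under $(\mathcal{I},\beta)$, then at every substitution site the inserted formula $(A \land V) \lor (\lnot A \land W)$ has the same value as $V$, so $F[(A \land V) \lor (\lnot A \land W)]$ agrees with $F[V]$; on the right-hand side $(A \land F[V]) \lor (\lnot A \land F[W])$ reduces to $F[V]$ as well, so both sides agree. The case where $A$ is false is symmetric, with $W$ and $F[W]$ in the roles of $V$ and $F[V]$.

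I expect the only real obstacle to be the uniformity claim, i.e.\ verifying that the inserted copies of $A$ cannot change their truth value across the quantifiers of $F$. This is exactly where both hypotheses are used in tandem: $\free{A} \cap \XX = \emptyset$ rules out dependence on the atom arguments, and $\ssubst{A}{p}{F}$ rules out capture of a free variable of $A$ by a quantifier of $F$. An alternative, entirely syntactic route is structural induction on $F$: the connective cases reduce to the propositional ``if--then--else'' tautologies such as $\lnot((A \land X) \lor (\lnot A \land Y)) \equiv (A \land \lnot X) \lor (\lnot A \land \lnot Y)$ and the analogous distributions of $(A \land {\cdot}) \lor (\lnot A \land {\cdot})$ over $\land$ and $\lor$, while the quantifier case needs precisely that the bound variable does not occur free in $A$ (supplied by $\ssubst{A}{p}{F}$ whenever $p$ actually occurs under the quantifier, the remaining subcase being trivial since the substitution then does nothing). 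The base case for an atom $p(\tts)$ is where $\free{A} \cap \XX = \emptyset$ enters, ensuring $A$ survives the replacement $x_i \mapsto t_i$ unchanged, and the remaining base cases are immediate.
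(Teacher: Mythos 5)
Your proof is correct, but it takes a genuinely different route from the paper's. The paper does not reason about interpretations at all: it first applies Prop.~\ref{prop-pullout} to rewrite $F[(A \land V) \lor (\lnot A \land W)]$ as $\exists p\, (F[p] \land (p \deq ((A \land V) \lor (\lnot A \land W))))$, then performs the case distinction on $A$ equationally via the tautology $X \equiv (A \land X) \lor (\lnot A \land X)$, rewrites $A$ to $\true$ (resp.\ $\false$) inside each disjunct, simplifies, and pulls the result back in with Prop.~\ref{prop-pullout} again. In that derivation the two side conditions are consumed at a single step: $\free{A} \cap \XX = \emptyset$ lets $A$ pass unchanged under the $\forall x_1 \ldots x_{\arity{p}}$ hidden in $p \deq (\cdot)$, and $p \notin \free{A}$ (from $\ssubst{A}{p}{F}$) lets it pass under $\exists p$. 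What the paper's route buys is brevity and reuse: by packaging substitution as second-order quantification plus a definition conjunct, it never has to track where the copies of $A$ land inside $F$, so no substitution-semantics lemma is needed. What your route buys is self-containedness and transparency: the uniformity claim makes explicit exactly why each hypothesis is needed, at the cost of essentially re-proving a substitution lemma (your ``uniformity claim'' plus the implicit replacement-of-equivalents step, or equivalently the structural induction you sketch as the alternative). Both arguments are sound; just be aware that your semantic case split silently relies on a pointwise-replacement principle for subformula occurrences, which itself is an induction on $F$ — so your two ``alternative'' routes are closer to one another than they might appear.
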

\begin{proof}%
Assume the preconditions of the proposition.  It follows that $\ssubst{(A
  \land V) \lor (\lnot A \land W)}{p}{F}$.  Making use of
Prop.~\ref{prop-pullout}, the conclusion of the proposition can be then be
shown in the following steps.
\[
\begin{array}{c@{\hspace{1em}}l}
& \leftside\\
\equiv & \exists p\, (F[p] \land (p \deq ((A \land V) \lor (\lnot A \land
W))))\\
\equiv &
(A \land \exists p\, (F[p] \land (p \deq ((A \land V) \lor (\lnot A \land
W)))))\; \lor\\
& (\lnot A \land \exists p\, 
(F[p] \land (p \deq ((A \land V) \lor (\lnot A \land W)))))\\
\equiv &
(A \land \exists p\, (F[p] \land (p \deq ((\true \land V) \lor (\false \land
W)))))\; \lor\\
& (\lnot A \land \exists p\, (F[p] \land (p \deq ((\false \land V) \lor (\true \land W)))))\\
\equiv & (A \land \exists p\, (F[p] \land (p \deq V)))
\lor (\lnot A \land \exists p\, (F[p] \land (p \deq W)))\\
\equiv & \rightside.
\end{array}
\vspace{-16pt}%
\]
\qed
\end{proof}
The preconditions in Prop.~\ref{prop-rigo-dist-lemma} permit that $x_1,\ldots,
x_{\arity{p}}$ may occur free in $V$ and $W$, whereas in $A$ no member of
$\XX$ is allowed to occur free. We are now ready to prove
Prop.~\ref{prop-sol-rigorous}.

\begin{proof}[Proposition~\ref{prop-sol-rigorous}]
By item~(\ref{stmt-def-repro-para}.) of the definition of \name{reproductive
  solution} (Def.~\ref{def-sol-reproductive}), $\Rs[\tts] = R_1[\tts]\ldots
R_n[\tts]$ is required to be a parametric solution for which by
Prop.~\ref{prop-par-altchar} three properties have to be shown. The first one,
$\clean{\Rs}$, is immediate since each member of $\Rs$ is the clean variant of
some formula.  The second one, $\ssubst{\Rs}{\pps}{F}$, is easy to derive from
the preconditions and the definition of $\Rs$.  The third one is an
implication that can be shown in the following steps, explained below.
\[
\begin{array}{r@{\hspace{1em}}l}
(1) & \ssubst{\Ts}{\tts}{\Rs}.\\
(2) & \ssubst{\Rs[\Ts]}{\pps}{F}.\\
(3) & \ssubst{\Gs}{\pps}{F}.\\
(4) & \valid F[\Gs].\\
(5) & \lnot F[\Ts] \land \lnot F[\Rs[\Ts]] \entails \lnot F[\Gs].\\
(6) & F[\Ts] \land \lnot F[\Rs[\Ts]] \entails \lnot F[\Ts].\\
(7) & F[\Gs] \entails F[\Rs[\Ts]].\\
(8) & \valid F[\Rs[\Ts]].\\
\end{array}
\]
Let $\Ts$ be a sequence of formulas such that statements (1) and (2), which
are on the left side of the implication to show, do hold.  We derive the right
side of the implication, that is $\valid F[\Rs[\Ts]]$.  Steps~(3) and (4) hold
since $\Gs$ is a particular solution.  Steps~(5) and~(6) can be shown by
induction based on the equivalences~(9) and~(10), respectively, below, which
hold for all $i \in \{0,\ldots,n-1\}$ and follow from
Prop.~\ref{prop-rigo-dist-lemma}.
\[
\begin{array}{r@{\hspace{1em}}c@{\hspace{1em}}l}
(9) &       & \lnot F[G_1 \ldots G_{i} R_{i+1}[\Ts] \ldots R_{n}[\Ts]]\\
& \equiv &
\lnot F[G_1 \ldots G_{i}\,((G_{i+1} \land \lnot F[\Ts]) \lor 
(T_{i+1} \land F[\Ts]))\, R_{i+2}[\Ts] \ldots R_{n}[\Ts]]\\
& \equiv &
(\lnot F[\Ts] \land \lnot F[G_1 \ldots G_{i+1}
                R_{i+2}[\Ts] \ldots R_{n}[\Ts]])\; \lor\\
&& (F[\Ts] \land 
\lnot F[G_1 \ldots G_{i}T_{i+1}
                R_{i+2}[\Ts] \ldots R_{n}[\Ts]]).\\[1ex]
(10)       && \lnot F[T_1 \ldots T_i R_{i+1}[\Ts] \ldots R_{n}[\Ts]]\\
& \equiv &
\lnot F[T_1 \ldots T_{i}\,((G_{i+1} \land \lnot F[\Ts]) \lor 
(T_{i+1} \land F[\Ts]))\, R_{i+2}[\Ts] \ldots R_{n}[\Ts]]\\
& \equiv &
(\lnot F[\Ts] \land \lnot F[T_1 \ldots T_i G_{i+1}
                R_{i+2}[\Ts] \ldots R_{n}[\Ts]])\; \lor\\
&& (F[\Ts] \land 
F[T_1 \ldots T_{i+1} R_{i+2}[\Ts] \ldots R_{n}[\Ts]]).
\end{array}
\]
The required preconditions of Prop.~\ref{prop-rigo-dist-lemma} are justified
there as follows, where $F^\prime$ stands for $F$ under the substitutions
indicated in (9) or (10), that is, the formula matched with the left side of
Prop.~\ref{prop-rigo-dist-lemma}.

\smallskip

\noindent
\begin{tabular}{c@{\hspace{1em}}l@{\hspace{1em}}l}
\textbf{--} & $\ssubst{G_{i+1}}{p_{i+1}}{\lnot F^\prime}$:& Follows from~(3).\\
\textbf{--} &  $\ssubst{T_{i+1}}{p_{i+1}}{\lnot F^\prime}$:&
 Follows from~(1) and~(2).\\
\textbf{--} & $\ssubst{F[\Ts]}{p_{i+1}}{\lnot F^\prime}$:& Follows from (1),
(2) and the precondition\\ && $\ssubst{F[\tts]}{\pps}{F}$.\\
\textbf{--} & $\free{F[\Ts]} \cap \XX = \emptyset$: & Follows from
(1), (2) and the precondition\\ && $\free{F}
\cap \XX = \emptyset$.
\end{tabular}

\smallskip

\noindent
Step~(7) follows from~(6) and~(5) and, finally, step~(8) follows from~(7)
and~(4).

Item~(\ref{stmt-def-repro-main}.) of the definition of \name{reproductive
  solution} follows since for all sequences of formulas $\Hs$ such that
$\ssubst{\Hs}{\tts}{\Rs}$ and $\ssubst{\Hs}{\pps}{F}$
(note that
$\ssubst{\Hs}{\tts}{\Gs}$ is implied by $\ssubst{\Hs}{\tts}{\Rs}$)
it holds that if $\valid F[\Hs]$, then
$\Hs \equiv \Rs[\Hs]$, or, equivalently, but more explicated,
it holds for all $i \in \{1,\ldots,n\}$ that
\[
\begin{array}{r@{\hspace{1em}}l}
& R_i[\Hs]\\
\equiv & (G_i[\Hs] \land \lnot F[\Hs]) \lor (H_i \land F[\Hs])\\
\equiv & (G_i[\Hs] \land \false) \lor (H_i \land \true)\\
\equiv & H_i.
\end{array}
\vspace{-16pt}%
\]
\qed
\end{proof}

The algebraic version \cite[Theorem~2.11]{rudeanu:74} is attributed there and
in most of the later literature to Löwenheim
\cite{loewenheim:1908,loewenheim:1910}, thus known as \name{Löwenheim's
  theorem for Boolean equations}. However, at least the construction for unary
problems appears to be in essence Schröder's \name{rigorose Lösung}
\cite[vol.~3, \S~12]{schroeder:all}.  (Löwenheim remarks in
\cite{loewenheim:1908} that the \name{rigorose Lösung} can be derived as a
special case of his theorem.)  Behmann comments that Schröder's discussion of
\name{rigorose Lösung} starts only in a late chapter of \name{Algebra der
  Logik}
mainly for the reason that only then suitable notation was available
\cite[Footnotes on p.~22f]{beh:50:aufloesungs:phil:1}.
Schröder \cite[vol.~3, p.~168]{schroeder:all} explains his term \name{rigoros}
as adaptation of \name{à la rigueur}, that is, \name{if need be}, because he
does not consider the rigorous solution as a satisfying representation of all
particular solutions. He notes that to detect all particular solutions on the
basis of the \name{rigorose Lösung}, one would have to test all possible
formulas $T$ as parameter value.
As remarked in \cite[p.~382]{martin-nipkow:rings-journal}, \name{Löwenheim's
  theorem} has been rediscovered many times, for example in
\cite{martin-nipkow:rings-early}.

\subunit{Schröder's Reproductive Interpolant} 
\label{sec-sipol}
For \UNRSPs of the
form 
\begin{equation}
\sol{((A \dimp p) \land (p \dimp B))[p]}{t},\end{equation}
the formula \begin{equation}A \lor (B \land
t(\xs)),\end{equation} where $\xs = x_1\ldots x_{\arity{p}}$, is a reproductive
solution.  This construction has been shown by Schröder and is also discussed
in \cite{beh:50:aufloesungs:phil:1}.  For the notion of \name{solution} based
on unsatisfiability instead of validity, the analogous construction applies to
\UNRSPs of the form \begin{equation}\sol{((A \land p) \lor (B \land \lnot
    p))[p]}{t}\end{equation} and yields
\begin{equation}B \lor (\lnot A \land t(\xs)).\end{equation} We call the solution
\defname{interpolant} because with the validity-based notion of
\name{solution} assumed here the unknown~$p$, and thus also the solution,
is ``between'' $A$ and $B$, that is, implied by $A$ and implying $B$.  The
following proposition makes the construction precise and shows its
justification.  The proposition is an adaptation of \cite[Lemma
  2.2]{rudeanu:74}, where \cite[vol.~1, \S~21]{schroeder:all} 
is given as source.
\begin{prop}[Schröder's Reproductive Interpolant]
\label{prop-schroeder-ipol-t1}
Let \[\sol{(F = \forall \yys\, (A(\yys) \imp p(\yys)) \land \forall \yys\,
  (p(\yys) \imp B(\yys)))[p]}{t},\] where $\yys$ is a sequence with the arity
of $p$ as length of distinct individual symbols not in $\XX$, be a \UNRSP that
has a solution.  Let $\xs = x_1\ldots x_{\arity{p}}$. Assume
$\ssubst{A(\xs)}{p}{F}$, $\ssubst{B(\xs)}{p}{F}$ and
$\ssubst{t(\xs)}{p}{F}$. Then the clean variant
of the following formula is a reproductive solution
of that \UNRSP:
\[A(\xs) \lor (B(\xs) \land t(\xs)).\]
\end{prop}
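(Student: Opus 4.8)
The plan is to verify the two clauses of the bidirectional characterisation of \name{reproductive solution} in Prop.~\ref{prop-repro-bidir}. Write $R$ for $A(\xs) \lor (B(\xs) \land t(\xs))$ and let $\Gs$ be its clean variant; since $\Gs \equiv R$ and $\free{\Gs} = \free{R}$, I reason with $R$ below, using $\clean{\Gs}$ only where cleanness is required. In the unary setting a sequence coincides with its sole member, so I write $T$, $H$ for the sole members of $\Ts$, $\Hs$. A preliminary observation drives everything: since the \UNRSP has a solution, there is a formula $S$ with $\valid F[S]$, and chaining the two conjuncts of $F[S]$ gives $\valid \forall \yys\, (A(\yys) \imp B(\yys))$, i.e.\ $A(\xs) \entails B(\xs)$ after renaming $\yys$ to $\xs$.

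First I would show that $R$ is a parametric solution by checking the three clauses of Prop.~\ref{prop-par-altchar}. The clause $\clean{\Gs}$ is immediate by construction. The clause $\ssubst{R}{p}{F}$ follows from the hypotheses $\ssubst{A(\xs)}{p}{F}$, $\ssubst{B(\xs)}{p}{F}$, $\ssubst{t(\xs)}{p}{F}$, because the defining conditions of \substability depend only on $\free{R} \subseteq \free{A(\xs)} \cup \free{B(\xs)} \cup \free{t(\xs)}$ and are preserved under the Boolean combination forming $R$. For the third clause, given any $T$ with $\ssubst{T}{t}{R}$ and $\ssubst{R[\Ts]}{p}{F}$, I would unfold $R[\Ts] = A(\xs) \lor (B(\xs) \land T(\xs))$ and evaluate $F[R[\Ts]]$: its conjunct $\forall \yys\, (A(\yys) \imp (A(\yys) \lor (B(\yys) \land T(\yys))))$ is valid trivially, and its conjunct $\forall \yys\, ((A(\yys) \lor (B(\yys) \land T(\yys))) \imp B(\yys))$ reduces to the trivial $(B \land T) \imp B$ together with $A(\yys) \imp B(\yys)$, which is exactly the preliminary observation. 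Hence $\valid F[R[\Ts]]$, so $R$ is a parametric solution.

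Second I would settle the reproductive equivalence. The implication ``$H \equiv R[\Hs]$ entails $\valid F[\Hs]$'' is furnished directly by Prop.~\ref{prop-par-halfrepro}, now that $R$ is a parametric solution, so no separate work is needed. For the converse, assume $\valid F[\Hs]$; reading off its conjuncts yields $A(\xs) \entails H(\xs)$ and $H(\xs) \entails B(\xs)$. Since $R[\Hs] = A(\xs) \lor (B(\xs) \land H(\xs))$, propositional reasoning gives $H \equiv R[\Hs]$: the inclusion $R[\Hs] \entails H(\xs)$ holds because $A(\xs) \entails H(\xs)$ and $B(\xs) \land H(\xs) \entails H(\xs)$, while $H(\xs) \entails R[\Hs]$ holds because $H(\xs) \entails B(\xs) \land H(\xs)$ (using $H(\xs) \entails B(\xs)$) and $B(\xs) \land H(\xs) \entails A(\xs) \lor (B(\xs) \land H(\xs))$. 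Both clauses of Prop.~\ref{prop-repro-bidir} are thereby met, so $\Gs$ is a reproductive solution of the \UNRSP.

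The computations are short; the delicate point is the syntactic bookkeeping. I would take care that passing from $\valid F[\Hs]$ to the entailments on $\xs$ is sound — the bound symbols $\yys$ are deliberately chosen outside $\XX$ and are renamed to the members $\xs$ of $\XX$ occurring in the solution — and that the instantiations $R[\Ts]$ and $R[\Hs]$ really satisfy $\ssubst{\cdot}{p}{F}$, so that the invocations of Prop.~\ref{prop-par-altchar} and Prop.~\ref{prop-par-halfrepro} are licensed. This tracking of \substability conditions, rather than any genuine semantic difficulty, is what I expect to be the main obstacle.
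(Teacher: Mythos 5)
Your proof is correct and follows essentially the same route as the paper's: verify the parametric-solution clauses of Prop.~\ref{prop-par-altchar}, reduce the validity of $F[R[\Ts]]$ to $\valid \forall \yys\,(A(\yys)\imp B(\yys))$, and establish the reproductive identity $H(\xs) \equiv A(\xs) \lor (B(\xs)\land H(\xs))$ by propositional reasoning from $A(\xs) \entails H(\xs) \entails B(\xs)$. The only cosmetic difference is that you obtain $A \entails B$ directly by chaining the conjuncts of $F[S]$ for a witness solution $S$, whereas the paper goes through $\valid \exists p\, F$ (Prop.~\ref{prop-exists-necessary}) and then eliminates the quantifier via Ackermann's lemma; the two derivations are interchangeable.
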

That $p$ does not occur free in $A$ or in $B$ is ensured by the
preconditions\linebreak
$\ssubst{A}{p}{F}$ and $\ssubst{B}{p}{F}$.  The symbols $\yys$
for the quantified variables indicate that these are independent from the
special meaning of the symbols in~$\XX$.

\begin{proof}[Proposition~\ref{prop-schroeder-ipol-t1}]
 Assume the preconditions of the proposition and let $G[t]$ stand for the
 clean variant of $A(\xs) \lor (B(\xs) \land t(\xs))$.  By
 item~(\ref{stmt-def-repro-para}.) of the definition of \name{reproductive
   solution} (Def.~\ref{def-sol-reproductive}), $G$ is required to be a
 parametric solution for which by Prop.~\ref{prop-par-altchar} three
 properties have to be shown. The first one, $\clean{G}$, is immediate since
 $G$ is a clean variant of some formula.
The second one, $\ssubst{G}{p}{F}$, easily follows from the
preconditions.  The third one is an implication that can be shown in the
following steps, explained below.
\[
\begin{array}{r@{\hspace{1em}}l}
(1) & \ssubst{A(\xs) \lor (B(\xs) \land T(\xs))}{p}{F}.\\
(2) & \valid \exists p\; 
(\forall \yys\, (A(\yys) \imp p(\yys)) \land \forall \yys\,
  (p(\yys) \imp B(\yys))).\\
(3) & \valid \forall \yys\, (A(\yys) \imp B(\yys)).\\
(4) & \valid \forall \yys\, (A(\yys) \imp (A(\yys) \lor (B(\yys) \land
  T(\yys))))\; \land\\
  & \hphantom{\valid} \forall \yys\, 
  ((A(\yys) \lor (B(\yys) \land T(\yys))) \imp B(\yys)).\\ 
 (4) & \valid F[A(\xxs) \lor (B(\xxs) \land T(\xxs))].\\
  \end{array}
\]
Let $T(\xs)$ be a formula such that statement (1), which is on the left side
of the implication to show, does hold.  We derive the right side of the
implication, that is, $\valid F[B(\xs) \lor (A(\xs) \land T(\xs))]$:
Step~(2) follows with Prop.~\ref{prop-exists-necessary} from the precondition
that the considered $\UNRSP$ has a solution.  Step~(3) follows from~(2) by
second-order quantifier elimination, for example with Ackermann's lemma
(Prop.~\ref{prop-ackermann}). The formulas to the right of $\valid$ in both
statements are equivalent.  Step~(4) follows from (3) by logic. Justified
by~(1), we can express~(4) as~(5), the right side of the implication to show.
Item~(\ref{stmt-def-repro-main}.) of the definition of \name{reproductive
  solution} follows since for all formulas~$H(\xs)$ such that
$\ssubst{H(\xs)}{t}{G}$, $\ssubst{H(\xs)}{p}{F}$, it holds that $\valid
F[H(\xs)]$ implies $H(\xs) \equiv G[H(\xs)]$, which can be derived in the
following steps.
\[\begin{array}{r@{\hspace{1em}}ll}
& \valid F[H(\xs)]\\
\mequi & \valid 
\forall \yys\, (A(\yys) \imp H(\yys)) \land \forall \yys\, 
  (H(\yys) \imp B(\yys))\\
\mequi & 
\valid \forall \yys\, (H(\yys) \equi (A(\yys) \lor H(\yys)))\; \tand\\
& \valid \forall \yys\, (H(\yys) \equi (B(\yys) \land H(\yys)))\\
\mimp &
H(\xs) \equiv A(\xs) \lor H(\xs)\; \tand\; H(\xs) \equiv B(\xs) \land H(\xs)\\
\mimp & H(\xs) \equiv A(\xs) \lor (B(\xs) \land H(\xs))\\
\mequi & H(\xs) \equiv G[H(\xs)].
\end{array}
\vspace{-16pt}%
\]
\qed
\end{proof}

As shown by Schröder, the (clean variants of the) following two formulas are
further reproductive solutions in the setting of
Prop.~\ref{prop-schroeder-ipol-t1}.
\begin{equation}
B(\xs) \land
(A(\xs) \lor t(\xs))\end{equation} and
\begin{equation}
\label{eq-sipol-atbt}
(A(\xs) \land \lnot t(\xs)) \lor
(B(\xs) \land t(\xs)).\end{equation} These two formulas and the solution
according to Prop.~\ref{prop-schroeder-ipol-t1} are all equivalent under the
assumption that a solution exists, that is, $\valid \exists p\, F$, which, by
second-order quantifier elimination, is equivalent to 
\begin{equation}
\valid \forall \yys\,
(A(\yys) \imp B(\yys)).
\end{equation}

Any \UNRSP~$\sol{F[p]}{t}$ where $F$ is a propositional formula or, more
generally, where the unknown $p$ is nullary, can be brought into the form
matching Prop.~\ref{prop-schroeder-ipol-t1} by systematically renaming bound
symbols and rewriting $F[p]$ with the equivalence
\begin{equation}
F[p] \equiv (\lnot F[\false] \imp p) \land (p \imp F[\true]).
\end{equation}
For the notion of \name{solution} based on unsatisfiability, the required form
can be obtained with the Shannon expansion
\begin{equation}
F[p] \equiv (F[\true] \land p) \lor (F[\false] \land \lnot p).
\end{equation}

\subunit{From Unary to $n$-ary Reproductive Solutions}
\label{sec-repro-mse}

If the solution of a \RSP is composed as suggested by Prop.~\ref{prop-mse}
from reproductive solutions of unary solution problems, then it is itself a
reproductive solution.

\begin{prop}[Composing a Reproductive Solution from Unary Reproductive Solutions]
\label{prop-rst-to-nary}
Let $\sol{F[\pps = p_1\ldots p_n]}{\tts = t_1\ldots t_n}$ be a \RSP.
If $\Gs[\tts] = G_1[\tts]\ldots G_n[\tts]$ is a sequence of formulas such that
for all $i \in \{1,\ldots,n\}$ it holds that $G_i$ is a reproductive solution
of the \UNRSP \[\sol{(\exists p_{i+1} \ldots \exists p_n\, F[G_1\ldots
    G_{i-1}p_i \ldots p_n])[p_i]}{t_i}\] and $\free{G_i} \cap (\pps \cup
t_{i+1}\ldots t_n) = \emptyset$, then $\Gs$ is a reproductive solution of the
considered \RSP $\sol{F[\pps]}{\tts}$.
\end{prop}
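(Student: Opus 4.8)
The plan is to verify that $\Gs = G_1[\tts]\ldots G_n[\tts]$ satisfies the characterization of \name{reproductive solution} of Prop.~\ref{prop-repro-bidir}: first that $\Gs$ is a parametric solution of $\sol{F[\pps]}{\tts}$, and second the biconditional that for every sequence $\Hs$ with $\ssubst{\Hs}{\tts}{\Gs}$ and $\ssubst{\Hs}{\pps}{F}$ it holds that $\valid F[\Hs]$ if and only if $\Hs \equiv \Gs[\Hs]$. The first part I would settle through the method of successive eliminations. Each $G_i$, being a reproductive solution of its \UNRSP, is in particular a parametric solution of it by item~(\ref{stmt-def-repro-para}) of Def.~\ref{def-sol-reproductive}, hence by Prop.~\ref{prop-para-is-sol-both} satisfies $\clean{G_i}$ and is a particular solution of the underlying \UNSP $(\exists p_{i+1}\ldots\exists p_n\, F[G_1\ldots G_{i-1}p_i\ldots p_n])[p_i]$. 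Together with the hypothesis $\free{G_i}\cap\pps=\emptyset$ this is exactly condition~(\ref{prop-char-mse-inward}) of Prop.~\ref{prop-mse}, which therefore yields that $\Gs$ is a particular solution of $F[\pps]$; since each $\clean{G_i}$ holds, so does $\clean{\Gs}$, and Prop.~\ref{prop-para-is-sol-both} then gives that $\Gs$ is a parametric solution of $\sol{F[\pps]}{\tts}$.

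The reproductive biconditional is the substantial part. Its right-to-left direction, that $\Hs \equiv \Gs[\Hs]$ implies $\valid F[\Hs]$, follows immediately from Prop.~\ref{prop-par-halfrepro} once $\Gs$ is known to be a parametric solution, so only $\valid F[\Hs] \Rightarrow \Hs \equiv \Gs[\Hs]$ remains. I would prove this by induction on $i$, establishing $H_i \equiv G_i[\Hs]$ under the induction hypothesis that $H_j \equiv G_j[\Hs]$ for all $j<i$. The condition $\free{G_i}\cap t_{i+1}\ldots t_n=\emptyset$ ensures that among the parameters $G_i$ depends only on $t_1\ldots t_i$, so $G_i[\Hs]$ is $G_i$ with $t_1\ldots t_i$ replaced by $H_1\ldots H_i$. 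From $\valid F[\Hs]$ and Prop.~\ref{prop-fixing} I obtain $\valid \exists p_{i+1}\ldots\exists p_n\, F[H_1\ldots H_i p_{i+1}\ldots p_n]$; rewriting $G_1\ldots G_{i-1}$ as $H_1\ldots H_{i-1}$ by the induction hypothesis (after instantiating $t_1\ldots t_{i-1}$ by $H_1\ldots H_{i-1}$), this says precisely that $H_i$ is a solution of the $i$-th \UNSP with the earlier parameters instantiated to $H_1\ldots H_{i-1}$. For $i=1$ there are no earlier parameters and the reproductive property of $G_1$ applies at once.

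It then remains, for $i\geq 2$, to apply the reproductive property of $G_i$ to this instantiated problem and read off $H_i \equiv G_i[\Hs]$, and here lies the main obstacle. The \UNRSP solved by $G_i$ carries the earlier parameters $t_1\ldots t_{i-1}$ as free symbols, through $G_1\ldots G_{i-1}$, whereas the validity I have in hand holds only \emph{after} these are instantiated by $H_1\ldots H_{i-1}$; and validity is preserved under substitution only in the direction that does not let me pull the reproductive implication $\valid F_i[K]\Rightarrow K\equiv G_i[K]$ back through the instantiation. The route around this is to observe that instantiating the free, non-parameter symbols $t_1\ldots t_{i-1}$ of a reproductive solution by the formulas $H_1\ldots H_{i-1}$ turns $G_i$ into a reproductive solution of the correspondingly instantiated \UNRSP. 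Semantically this is transparent: reproductivity is, per interpretation, the property of being a retraction onto the set of solutions, and fixing the value of a free symbol merely restricts attention to those interpretations in which the symbol takes that value, so the retraction property, and with it the reproductive implication, is inherited. Applying this instantiated reproductive implication to the solution $H_i$ yields $H_i \equiv G_i[\Hs]$, which completes the induction and the proof. I expect that rendering this transfer rigorous at the level of the formula-based definition of \name{reproductive solution}, rather than purely semantically, will be the delicate point demanding the most care.
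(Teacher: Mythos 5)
Your proof takes essentially the same route as the paper's: the parametric part via Prop.~\ref{prop-mse} and Prop.~\ref{prop-para-is-sol-both}, and the reproductive part by induction on $i$, using Prop.~\ref{prop-fixing} together with the induction hypothesis to show that $H_i$ solves the $i$-th unary problem with $t_1,\ldots,t_{i-1}$ instantiated to $H_1,\ldots,H_{i-1}$, and then invoking the reproductive property of $G_i$ on that instantiated problem. The transfer step you single out as the main obstacle is exactly the step the paper dispatches in one line from the freeness condition $\free{G_i}\cap t_{i+1}\ldots t_n=\emptyset$ (its passage from ``(2) implies (3)'' to ``(4) implies (5)''), so you have correctly located where the argument is thinnest; note only that your per-interpretation ``retraction'' reading asserts somewhat more than Def.~\ref{def-sol-reproductive} literally provides, since that definition constrains only formulas $\Hs$ with $\valid F[\Hs]$, i.e.\ uniform solutions, rather than arbitrary interpretation-wise values, so making the transfer rigorous at the formula level is indeed the point demanding care in both your argument and the paper's.
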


\begin{proof}%
Assume the preconditions and the left side of the proposition.  We show the
two items of the definition of \name{reproductive solution}
(Def.~\ref{def-sol-reproductive}) for $\Gs$.
Item~(\ref{stmt-def-repro-para}.), that is, $\Gs$ is a parametric solution of
$\sol{F[\pps]}{\tts}$, can be derived as follows. Each $G_i$, for $i \in
\{1,\dots,n\}$, is a reproductive solution of the associated \UNRSP. Hence, by
Prop.~\ref{prop-sol-relships} it is a general, hence parametric, hence
particular solution. By Prop.~\ref{prop-mse} it follows that $\Gs$ is a
particular solution of $F[\pps]$.  By Prop.~\ref{prop-para-is-sol-both} it is
then also a parametric solution of $\sol{F[\pps]}{\tts}$.
Item~(\ref{stmt-def-repro-main}.) of the definition of \name{reproductive
  solution} can be shown as follows.
First we note the following statement that was given as precondition:
\vspace{-2ex}
\begin{center}
\begin{tabular}{R{3cm}@{\hspace{1em}}L{12cm}}
(1) & For $i \in \{1,\ldots,n\}$ it holds that
$\free{G_i} \cap t_{i+1}\ldots t_n = \emptyset$.
\end{tabular}
\end{center}
For $i \in \{1,\ldots, n\}$ let
\[F_i[p_i\tts] \eqdef \exists p_{i+1} \ldots \exists p_n\, F[G_1[\tts]\ldots
  G_{i-1}[\tts]p_i \ldots p_n],\] that is, $F_i$ is the formula of the \UNSP
of which $G_i$ is a reproductive solution.  By the definition of
\name{reproductive solution} and the left side of the proposition it holds for
all formulas $H_i$ that if
\vspace{-2ex}
\begin{center}
\begin{tabular}{R{3cm}@{\hspace{1em}}L{12cm}}
(2) & $\ssubst{H_i}{t_i}{G_i}$\\
    & $\ssubst{H_i}{p_i}{F_i[p_i\tts]}, \tand$\\
    & $\valid F_i[H_i\tts],$\\
$\text{then}$ &\\
(3) & $H_i \equiv G_i[t_1\ldots t_{i-1}H_it_{i+1}\ldots t_n].$
\end{tabular}
\end{center}
From this and (1) it follows that all for all sequences of formulas $H_1
\ldots H_i$ it holds that if
\vspace{-2ex}
\begin{center}
\begin{tabular}{R{3cm}@{\hspace{1em}}L{12cm}}
(4) & $\ssubst{H_i}{t_i}{G_i}$\\
    & $\ssubst{H_i}{p_i}{F_i[p_{i}H_1\ldots H_{i-1} t_{i}\ldots t_n]}, \tand$\\
    & $\valid F_i[H_{i}H_1\ldots H_{i-1} t_{i}\ldots t_n],$\\
$\text{then}$ &\\
(5) & $H_i \equiv G_i[H_1\ldots H_it_{i+1}\ldots t_n].$
\end{tabular}
\end{center}
Now let $\Hs \equiv H_1\ldots H_n$ be a sequence of formulas such that
\vspace{-2ex}
\begin{center}
\begin{tabular}{R{3cm}@{\hspace{1em}}L{12cm}}
(6) & $\ssubst{\Hs}{\tts}{\Gs}$\\
    & $\ssubst{\Hs}{\pps}{F}, \tand$\\
    & $\valid F[\Hs].$
\end{tabular}
\end{center}
We prove item~(\ref{stmt-def-repro-main}) of
Def.~\ref{def-sol-reproductive} by showing $\Hs \equiv \Gs[\Hs]$, which is
equivalent to the statement that for all $i \in \{1,\ldots,n\}$ it holds that
$H_i \equiv G_i[\Hs]$, and, because of~(1), to the statement that for all $i
\in \{1,\ldots,n\}$ it holds that $H_i \equiv G_i[H_1\ldots H_i t_{i+1}\ldots
  t_n]$, which matches~(5). We thus can prove $\Hs \equiv \Gs[\Hs]$ by showing
that~(4), which implies~(5), holds for all $i \in \{1, \ldots, n\}$.  The
substitutivity conditions in~(4) follow from the substitutivity conditions
in~(6).  The remaining condition $\valid F_i[H_{i}H_1\ldots H_{i-1}
  t_{i}\ldots t_n]$ can be proven by induction.  As induction hypothesis
assume that for all $j \in \{1, \ldots, i-1\}$ it holds that $H_j \equiv
G_j[\Hs]$.  From $\valid F[\Hs]$ in~(6) it follows by Prop.~\ref{prop-fixing}
that $\valid \exists p_{i+1} \ldots \exists p_n F[H_1\ldots H_i p_{i+1} \ldots
  p_n]$.  With the induction hypothesis it follows that \[\valid \exists
p_{i+1} \ldots \exists p_n\, F[G_1[\Hs]\ldots G_{i-1}[\Hs]H_i p_{i+1}\ldots
  p_n],\] which, given the substitutivity conditions of~(6) and
$\ssubst{\Gs}{\pps}{F}$, which holds since $\Gs$ is a parametric solution,
can be expressed as \[\valid F_i[H_{i}H_1\ldots H_{i-1} t_{i}\ldots t_n],\]
such that all conditions of~(4) are satisfied and $H_i \equiv G_i[\Hs]$ can be
concluded.
\qed
\end{proof}

\noindent
This suggests to compute reproductive solutions of \emph{propositional}
formulas for a $n$-ary \SP by constructing Schröder's reproductive
interpolants for \UNSPs. Since for propositional formulas second-order
quantifier elimination succeeds in general, the construction of Schröder's
reproductive interpolant can then be performed on the basis of conventional
propositional formulas, without second-order quantifiers.

\section{Towards Constructive Solution Techniques}
\label{sec-construction}

On the basis of first-order logic, it seems that so far there is no general
constructive method for the computation of solutions. We discuss various
special cases where a construction is possible. Some of these relate to
applications of Craig interpolation. Recent work by Eberhard, Hetzl and Weller
\cite{eberhard} shows a constructive method for quantifier-free first-order
formulas. A generalization of their technique to relational monadic formulas
is shown, which, however, produces solutions that would be acceptable only
under a relaxed notion of substitutibility.

\subunit{Background: Craig Interpolation, Definability and Independence}
\label{sec-craig}
\label{sec-ipol-independence}

By Craig's interpolation theorem \cite{craig:linear}, if $F$ and $G$ are
first-order formulas such that $F \entails G$, then there exists an a
\defname{Craig interpolant} of $F$ and $G$, that is, a first-order
formula~$H$ such that \begin{equation}\free{H}
 \subseteq \free{F} \cap \free{G}\end{equation} and \begin{equation}F
\entails H \entails G.\end{equation}
\defname{Craig interpolants} can be constructed from proofs of $\valid F \imp
G$, as, for example, shown for tableaux in
\cite{smullyan:book:68,fitting:book}.  Lyndon's interpolation theorem
strengthens Craig's theorem by considering in addition that predicates in the
interpolant~$H$ occur only in polarities in which they occur in both side
formulas, $F$ and $G$.  In fact, practical methods for the construction of
interpolants from proofs typically compute such Craig-Lyndon interpolants.

One of the many applications of Craig interpolation is the construction of a
definiens for a given predicate: Let $F[p q_1\ldots q_k]$ be a first-order
formula such that $\free{F} \cap \XX = \emptyset$ and $p q_1\ldots q_k$ is a
sequence of distinct predicates and let $\xs$ stand for $x_1\ldots
x_{\arity{p}}$. Then $p$ \defname{is explicitly definable in terms of
  $(\free{F} \setminus \{q_1, \ldots, q_k\})$ within~$F$}, that is, there
exists a first-order formula $G$ such that \begin{equation}\free{G} \subseteq
  (\free{F} \setminus \{p, q_1, \ldots, q_k\}) \cup \xs\end{equation}
  and \begin{equation}F \entails p \deq G,\end{equation} if and only if $p$
  \defname{is implicitly definable in terms of $(\free{F} \setminus \{q_1,
    \ldots, q_k\})$ within~$F$}, that is,
\begin{equation}\label{eq-def-implicit}\exists p \exists q_1 \ldots \exists q_k\, (F \land p(\xs)) \entails \lnot
\exists p \exists q_1 \ldots \exists q_k\, (F \land \lnot
p(\xs)).\end{equation} Entailment~(\ref{eq-def-implicit}) holds if and only if the following
first-order formula is valid.
\begin{equation}F \land p(\xs) \imp \lnot (F[p^{\prime}q^\prime_1\ldots q^\prime_k]\land \lnot
p^\prime(\xs)),\end{equation} where $p^\prime q^\prime \ldots q^\prime_k$ is a
sequence of fresh predicates that matches $p q \ldots q_k$.  The
\defname{definientia}~$G$ of $p$ with the stated characteristics are exactly
the Craig interpolants of the two sides of that implication.  Substitutibility
$\ssubst{G}{p}{F}$ can be ensured by presupposing $\clean{F}$ and that no
members of $\XX$ are bound by a quantifier occurrence in $F$.

Another application of Craig interpolation concerns the independence of
formulas from given predicates: Second-order quantification allows to express
that a formula $F[\pps]$ is semantically independent from the set of the
predicates in $\pps$ as \begin{equation}\exists \pps\,F \equiv
  F,\end{equation} which is equivalent to $\exists \pps\, F \entails F$, and
thus, if $\qqs$ is a sequence of fresh predicates that matches $\pps$, also
equivalent to \begin{equation}\valid F[\qqs] \imp F.\end{equation} As observed
in \cite{otto:interpolation:2000}, any interpolant of $F[\qqs]$ and $F$ is
then equivalent to $F$ but its free symbols do not contain members of $\pps$,
that is, it is syntactically independent from $\pps$.  Thus, for a given
first-order formula semantic independence from a set of predicates can be
expressed as first-order validity and, if it holds, an equivalent formula that
is also syntactically independent can be constructed by Craig interpolation.
With Craig-Lyndon interpolation this technique can be generalized to take also
polarity into account, based on encoding of polarity sensitive independence as
shown here for negative polarity: That $F[p]$ is independent from predicate
$p$ in negative polarity but may well depend on $p$ in positive polarity can
be expressed as \begin{equation}\exists q\, (F[q] \land \forall \xs\, (q(\xs)
  \imp p(\xs))),\end{equation} where $\xs = x_1\ldots x_{\arity{p}}$ and $q$
is a fresh predicate with the same arity as $p$.

\subunit{Cases Related to Definability and Interpolation}
\label{sec-solcases}

The following list shows cases where for an $n$-ary \SP $F[\pps = p_1\ldots
  p_n]$ with first-order~$F$ and which has a solution a particular solution
can be constructed. Each of the properties that characterize these cases is
``semantic'' in the sense that if it holds for $F$, then it also holds for any
first-order formula equivalent to $F$. In addition, each property is at least
``semi-decidable'', that is, the set of first-order formulas with the property
is recursively enumerable. Actually, in the considered cases, for each
property a first-order formula can be constructed from $F$ that is valid if
and only if $F$ has the property.
For two of the listed cases, (\ref{solcase-def-nf}.) and
(\ref{solcase-def-nft}.),  the characterizing property implies the existence of
a solution.

\begin{enumerate}
\setlength{\parskip}{1ex}
\item \textit{Each unknown occurs free in $F$ only with a single polarity.}  A
  sequence of $\true$ and $\false$, depending on whether the respective
  unknown occurs positively or negatively, is then a solution.  That $F$ is
  semantically independent of unknowns in certain polarities, that is, is
  equivalent to a formula in which the unknowns do not occur in these
  polarities, can be expressed as first-order validity and a corresponding
  formula that is syntactically independent can be constructed by Craig-Lyndon
  interpolation.

\item \label{solcase-def-f} \textit{Each unknown is definable in the formula.}
  A sequence of definientia, which can be constructed with Craig
  interpolation, is then a solution. Rationale: Let $G_1\ldots G_n$ be
  definientia of $p_1\ldots p_n$, respectively, in $F$.  Under the
  assumption that there exists a solution $\Hs$ of $F[\pps]$ it holds that
  \[\true\; \entails\; F[\Hs]\; \entails\; \exists \pps\, F[\pps]\; \equiv\;
  \exists \pps\,
  (F[\pps] \land \bigwedge_{i=1}^n (p_i \deq G_i))\; \equiv\; F[\Gs].\]

\item \label{solcase-def-nf} \textit{Each unknown is definable in the negated
  formula.}  The sequence of negated definientia, which can be constructed
  with Craig interpolation, is then a solution.  Rationale: It holds in
  general that $p \deq G \entails p \ndeq \lnot G$. Hence, if $G_1\ldots G_n$
  are definientia of $p_1\ldots p_n$ , respectively, in $\lnot F$, then $\lnot
  F \entails \bigwedge_{i = 1}^{n} (p_i \deq G_i) \entails \bigvee_{i = 1}^{n}
  (p_i \deq G_i) \equiv \bigvee_{i = 1}^{n} (p_i \ndeq \lnot G_i)$. Thus
  \[\bigwedge_{i = 1}^{n} (p_i \deq \lnot G_i) \entails F,\] matching the
  characterization of \name{solution} in Prop.~\ref{prop-char-sol-def}.

\item \textit{Each unknown is nullary.} 
  \label{solcase-def-nullary}
  This specializes case~(\ref{solcase-def-nf}.): If a solution exists, then a
  nullary unknown is definable in the negated formula. For nullary
  predicates~$p$ it holds in general that \[p \deq \lnot G \equiv p \ndeq G.\]
  Thus $p \deq \lnot G \entails F$ (which matches
  Prop.~\ref{prop-char-sol-def}) holds if and only if $\lnot F \entails p \deq
  G$.

\item \label{solcase-def-nft} \textit{Each unknown has a ground instance that
  is definable in the negated formula.}  The sequence of negated definientia
  is a solution.  If $p_1(\tts_1) \ldots p_n(\tts_n)$ are the definable ground
  instances, then optionally in each solution component $G_i$, under the
  assumption $\clean{G_i}$, each member $t_{ij}$ of $\tts_i = t_{i1}\ldots
  t_{i\arity{p_i}}$ can be replaced by $x_j$.  The construction of the
  definientia can be performed with Craig interpolation, as described above
  for predicate definientia, except that an instance $p(\tts)$ takes the place
  of $p(\xxs)$. The difficulty is to find suitable instantiations $\tts_1
  \ldots \tts_n$.  A way to avoid guessing might be to let the formula whose
  proof serves as basis for interpolant extraction follow the schema \[\exists
  \ys\, (F \land p(\ys) \imp \lnot (F[p^{\prime}]\land \lnot p^\prime(\ys))),\]
  where $\ys = y_1 \ldots y_{\arity{p}}$, and take the instantiation of $\ys$
  found by the prover.  If the proof involves different instantiations of
  $\ys$ it has to be rejected. Rationale: Similar to
  the case~(\ref{solcase-def-nullary}.) since for ground atoms $p(\tts)$ it
  holds in general that $p(\tts) \equi G \equiv \lnot (p(\tts) \equi \lnot
  G)$.
\end{enumerate}

\noindent
These cases suggest to compute particular solutions based on
Prop.~\ref{prop-mse} by computing solutions for \UNSPs for each unknown, which
is inspected for matching the listed cases or other types of solvable cases,
for example the forms required by Schröder's interpolant or by Ackermann's
lemma. If that fails for an unknown, an attempt with the unknowns re-ordered
is made.  For propositional problems, an interpolating QBF solver would be a
candidate to compute solutions. Encodings of QBF into predicate logic, e.g.,
\cite{qbf:pl}, could possibly be applied for general first-order formulas with
nullary unknowns.

\subunit{The EHW-Combination of \ELIM-Witnesses for Disjuncts}
\label{sec-ehw}

Eberhard, Hetzl and Weller show in \cite{eberhard} that determining the
existence of a Boolean unifier (or, in, our terms, particular solution) for
quantifier-free predicate logic is $\mathrm{\Pi}^P_2$-complete, as for
propositional logic \cite{baader:boolean:98}.  Their proof rests on the
existence of an $\f{EXPTIME}$ function $\f{wit}$ from quantifier-free formulas
to quantifier-free formulas such that $\exists p F[p] \equiv
F[\f{wit}(F[p])]$.  The specification of $\f{wit}(F[p])$ is presented there as
a variation of the DLS algorithm \cite{dls,dls:conradie} for second-order
quantifier elimination: The input is converted to disjunctive normal form and
a specialization of Ackermann's lemma is applied separately to each
disjunct. The results for each disjunct are then combined in a specific way to
yield the overall witness formula.
The following proposition states a generalized variation of this technique that
is applicable also to other classes of inputs, beyond the quantifier-free
case.

\begin{prop}[EHW-Combination of \ELIM-Witnesses for Disjuncts]
\label{prop-ehw}
Let $F[p] = \bigvee_{i=1}^n F_i$ be a formula and let $G_1, \ldots, G_n$ be
formulas such that for $i \in \{1,\ldots n\}$ it holds that
$\ssubst{G_i}{p_i}{F_i}$ and $\exists p\, F_i[p] \equiv F_i[G_i]$.  Assume
that there are no free occurrences of $\XX$ in $F$ and, w.l.o.g, that no
members of $\free{F} \cup \XX$ are bound by a quantifier occurrence in $F$.
Let
\[G(\xs) \eqdef
\bigwedge_{i=1}^n ((\bigwedge_{j=1}^{i-1} \lnot F_j[G_j])
\land F_i[G_i] \imp G_i(\xs)).\]
Then it holds that $\ssubst{G}{p}{F}$ and $\exists p\, F[p] \equiv F[G].$
\end{prop}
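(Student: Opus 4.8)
The plan is to prove the substitutibility claim by routine bookkeeping, to reduce the target equivalence to a statement about the disjuncts via an elimination identity, and then to prove the nontrivial entailment by a model-theoretic case analysis on the nested guards of $G$. First I would check $\ssubst{G}{p}{F}$. The requirement $\free{G} \cap \{p\} = \emptyset$ holds because $p \notin \free{G_i}$ for each $i$ (from $\ssubst{G_i}{p}{F_i}$) and because every free occurrence of $p$ in $F_i$ is replaced when forming the guard $F_i[G_i]$, so no free $p$ survives in $G$. The requirement concerning $\{x_j \mid j > \arity{p}\}$ holds because each guard $F_i[G_i]$ contains no free member of $\XX$ (the hypothesis gives no free $\XX$ in $F$, and the members $x_1\ldots x_{\arity{p}}$ of $G_i$ are instantiated by term arguments when $G_i$ is substituted for $p$), while the only literal members of $\XX$ left in $G$ stem from the consequents $G_i(\xxs)$, which involve at most $x_1\ldots x_{\arity{p}}$. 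The non-capture requirement follows from the stated cleanness assumption together with the per-disjunct non-capture for each $G_i$, since the free occurrences of $p$ in $F = \bigvee_i F_i$ are exactly those in the disjuncts; after renaming bound symbols of $F$ away from $\free{G}$ (permissible as part of the w.l.o.g.\ assumption) no quantifier of $F$ binds a member of $\free{G}$.

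Next I would record the identity
\[
\exists p\, F[p]\; \equiv\; \bigvee_{i=1}^n \exists p\, F_i[p]\; \equiv\; \bigvee_{i=1}^n F_i[G_i],
\]
using distribution of $\exists p$ over $\lor$ and the witness property $\exists p\, F_i[p] \equiv F_i[G_i]$. Since $F[G] = \bigvee_{i=1}^n F_i[G]$, the goal becomes $\bigvee_i F_i[G_i] \equiv \bigvee_i F_i[G]$. The direction $F[G] \entails \exists p\, F$ is then easy: by Prop.~\ref{prop-fixing} each $F_i[G] \entails \exists p\, F_i \equiv F_i[G_i]$, whence $\bigvee_i F_i[G] \entails \bigvee_i F_i[G_i] \equiv \exists p\, F$.

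The hard part will be the converse $\exists p\, F \entails F[G]$, which I would argue in an arbitrary structure $\mathcal{M}$ interpreting all free symbols of $F$ except $p$. If $\mathcal{M} \models \exists p\, F$, then by the identity some guard holds, so let $k$ be the least index with $\mathcal{M} \models F_k[G_k]$. The nested guards then select $G_k$: the antecedent of the $k$-th conjunct of $G$ is true, every conjunct with index $i < k$ is vacuous (its antecedent demands the false $F_i[G_i]$) and every conjunct with index $i > k$ is vacuous (its antecedent demands $\lnot F_k[G_k]$). As the guards contain no free member of $\XX$, their truth is independent of $\xxs$, so $\mathcal{M} \models \forall \xxs\,(G(\xxs) \equi G_k(\xxs))$, i.e.\ $G$ and $G_k$ denote the same relation in $\mathcal{M}$; hence $\mathcal{M} \models F_k[G] \equi F_k[G_k]$, and from $\mathcal{M} \models F_k[G_k]$ we obtain $\mathcal{M} \models F_k[G]$ and therefore $\mathcal{M} \models F[G]$.

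I expect the one genuinely delicate step to be the passage from ``$G$ and $G_k$ denote the same relation in $\mathcal{M}$'' to ``$\mathcal{M} \models F_k[G] \equi F_k[G_k]$'': this is the semantic replacement property for predicate substitution, and its use is legitimate precisely because of the substitutibility verified at the outset and the absence of free members of $\XX$ in the guards, which together guarantee that $F_k[\cdot]$ really behaves as replacement of the relation symbol $p$ by the relation denoted by its argument (so that extensionally equal arguments yield equivalent results). The guard case analysis and the two entailment directions are otherwise routine.
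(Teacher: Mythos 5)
Your proposal is correct and follows essentially the same route as the paper's own proof: distribute $\exists p$ over the disjunction, use the per-disjunct witness property to get $\bigvee_i F_i[G_i]$, pick the least index $k$ whose guard holds in the given interpretation, show the nested guards force $\forall\xxs\,(G(\xxs)\equi G_k(\xxs))$, and conclude $F_k[G]$ by semantic replacement; the easy direction is Prop.~\ref{prop-fixing} in both. Your additional care about the substitutibility bookkeeping and the extensionality step is a welcome elaboration of points the paper leaves implicit, but not a different argument.
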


\noindent
Formulas $G$ and $G_i$ are written as $G_i(\xs)$ and $G(\xs)$ where they occur
as formula constituents instead of substituents to emphasize that $\xs$ may
occur free in them.

\begin{proof}[Proposition~\ref{prop-ehw}]
This proof is an adaptation of the proof of Theorem~2 in \cite{eberhard}.  We
write here \name{$I$ is a model of $F$} symbolically as $I \mods F$.  That
$\ssubst{G}{p}{F}$ follows from the preconditions of the proposition and the
construction of $G$.  The right-to-left direction of the stated equivalence,
that is, \[\bigvee_{i=1}^{n} F_i[G]\; \entails\; \exists p\, \bigvee_{i=1}^{n}
F_i[p],\] then follows from Prop.~\ref{prop-fixing}.  The left-to-right
direction of the equivalence can be show in the following steps, explained
below.
\[
\begin{array}{r@{\hspace{1em}}l}
(1) & I \mods \exists p\, \bigvee_{i=1}^{n} F_i[p].\\
(2) & I \mods \bigvee_{i=1}^{n} \exists p\, F_i[p].\\
(3) & I \mods \bigvee_{i=1}^{n} F_i[G_i].\\
(4) & I \mods (\bigwedge_{j=1}^{k-1} \lnot F_j[G_j]) \land F_k[G_k].\\
(5) & I \mods \forall \xs\, (G(\xs) \equi G_k(\xs)).\\
(6) & I \mods F_k[G].\\
(7) & I \mods \bigvee_{i=1}^{n} F_i[G].
\end{array}
\]
Let $I$ be an interpretation such that (1) holds.  Step~(2) is equivalent
to~(1).  Assume the precondition of the proposition that for all $i \in
\{1,\ldots n\}$ it holds that $\exists p\, F_i[p] \equiv \bigvee_{i=1}^{n}
F_i[G_i]$. Step (3) follows from this and~(1).  By (3) there is a smallest
member $k$ of $\{1,\ldots,n\}$ such that $I \models F_k[G_k]$. This
implies~(4).  The left-to-right direction of the equivalence in~(5) follows
since if $I \mods G(\xs)$ then by~(4) and the definition of $G(\xs)$ it is
immediate that $I \mods G_k(\xs)$.
The right-to-left direction of the equivalence in~(5) can be shown as follows.
Assume $I \mods G_k(\xs)$.  Then $I$ is a model of the $k$th conjunct of
$G(\xs)$ since $G_k(\xs)$ is in the conclusion of that conjunct, and $I$ is a
model of each $j$th conjunct of $G(\xs)$ with $j \neq k$, because the
antecedent of such a conjunct contradicts with~(4).  Step~(6) follows from~(4)
and~(5). Step~(7) follows from~(6).
\qed
\end{proof}

The following proposition is another variation of the EHW-combination of
witnesses for disjuncts. It can be proven similarly to
Prop.~\ref{prop-ehw}.

\begin{prop}[Alternate Variation of EHW-Combination]
\label{prop-disj-witness-general} Let $F[p]$ be a \UNSP and let $G_1, \ldots,
G_n$ be formulas such that for $i \in \{1,\ldots,n\}$ it holds that
$\ssubst{G_i}{p}{F}$ and such that $\exists p\, F \equiv \bigvee_{i=1}^n
F[G_i].$ Assume that there are no free occurrences of $\{x_i\mid i \geq 1\}$
in $F$ and, w.l.o.g, that no members of $\free{F} \cup \xs$ are bound in by a
quantifier occurrence in $F$.  Let
\[G \eqdef
\bigwedge_{i=1}^n ((\bigwedge_{j=1}^{i-1} \lnot F[G_j]) \land F[G_i] \imp G_i).\]
Then $\ssubst{G}{p}{F}$ and $\exists p\, F \equiv F[G]$.
\end{prop}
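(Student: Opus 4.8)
The plan is to imitate the proof of Prop.~\ref{prop-ehw} almost verbatim, exploiting the fact that here the single formula $F$ together with the hypothesis $\exists p\, F \equiv \bigvee_{i=1}^n F[G_i]$ plays the role that the disjunction $\bigvee_i F_i$ with its disjunct-wise witnesses played there. First I would dispose of the easy parts. Substitutibility $\ssubst{G}{p}{F}$ follows from the preconditions $\ssubst{G_i}{p}{F}$ and the shape of $G$: the free symbols of $G$ are drawn from the $\free{G_i}$ and from the guard formulas $F[G_j]$, none of which reintroduce $p$ or high-index members of $\XX$, while the w.l.o.g.\ binding assumption on $F$ rules out capture. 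The right-to-left entailment $F[G] \entails \exists p\, F$ is immediate from Prop.~\ref{prop-fixing}. The real work is therefore the left-to-right entailment $\exists p\, F \entails F[G]$, and by the hypothesis this is the same as $\bigvee_{i=1}^n F[G_i] \entails F[G]$.

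For the left-to-right direction I would argue semantically. Fix an interpretation $I$ with $I \mods \exists p\, F$; by the hypothesis $I \mods \bigvee_{i=1}^n F[G_i]$, so there is a \emph{smallest} index $k$ with $I \mods F[G_k]$, whence $I \mods (\bigwedge_{j=1}^{k-1} \lnot F[G_j]) \land F[G_k]$. The central claim is then $I \mods \forall \xs\, (G \equi G_k)$: the $k$th conjunct of $G$ has a true antecedent and so reduces to $G_k$, while every other conjunct is vacuously true under $I$. Indeed, for $i<k$ the guard contains $F[G_i]$, false by minimality of $k$; for $i>k$ the guard contains $\lnot F[G_k]$, false because $F[G_k]$ holds. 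Hence under $I$ all conjuncts except the $k$th are true for every value of $\xs$, and the $k$th equals $G_k$, giving $G \equi G_k$. Finally, since $I \mods F[G_k]$ and $G$ and $G_k$ denote the same predicate in $I$, substitution semantics yield $I \mods F[G]$, the desired conclusion; here no analogue of the extra disjunction step of Prop.~\ref{prop-ehw} is needed because $F$ is a single formula rather than $\bigvee_i F_i$.

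The step I expect to require the most care is the collapse $I \mods \forall \xs\, (G \equi G_k)$. Its correctness rests on the guards $(\bigwedge_{j=1}^{i-1} \lnot F[G_j]) \land F[G_i]$ being \emph{independent of $\xs$}: because $F$ has no free members of $\XX$ and each $G_j$ is substitutible for $p$, every occurrence of $x_1,\ldots,x_{\arity{p}}$ brought in by $G_j$ is substituted away by the argument terms occurring in $F$, so each $F[G_j]$ is an $\xs$-free sentence whose truth in $I$ is fixed. This is precisely what lets the minimality of $k$ single out one ``active'' conjunct uniformly in $\xs$, while the $G_i$ sitting in the conclusions are free to carry $\xs$. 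Once this is in place, the passage from $\forall \xs\,(G \equi G_k)$ and $F[G_k]$ to $F[G]$ is the standard replacement of equivalent substituents and is routine.
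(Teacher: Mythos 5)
Your proposal is correct and is essentially the argument the paper intends: it states only that Prop.~\ref{prop-disj-witness-general} ``can be proven similarly to Prop.~\ref{prop-ehw}'', and your adaptation (smallest index $k$ with $I \mods F[G_k]$, collapse of $G$ to $G_k$ under $I$ using the $\xs$-freeness of the guards, then replacement of equivalents in $F$) is precisely that similar proof, with the correct observation that the final disjunction step of Prop.~\ref{prop-ehw} is not needed here.
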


Proposition~\ref{prop-disj-witness-general} is also applicable to \UNSPs of
the form handled by Prop.~\ref{prop-ehw}, but for this case leads to a more
clumsy result~$G$: Assume the additional precondition that for all $j \in
\{1,\ldots,n\}$ it holds that $\ssubst{G_j}{p}{\bigvee_{i=1}^{n} F_i}$.  Let
$F[p] \eqdef \bigvee_{i=1}^{n} F_i[p]$.  Then $\exists p F[p] \equiv F_1[G_1]
\lor \ldots \lor F_n[G_n] \equiv F[G_1] \lor \ldots \lor F[G_n]$.

\subunit{Relational Monadic Formulas and Relaxed Substitutibility}

The class of relational monadic formulas with equality, called here \MONE, is
the class of first-order formulas with equality, with unary predicates and
with individual constants but no other functions (without equality it is the
\defname{Löwenheim class}). It is decidable and permits second-order
quantifier elimination, that is, each formula in \MONE extended by predicate
quantification is equivalent to a formula in \MONE.  As shown in
\cite{cw-relmon} it has interesting relationships with $\mathcal{ALC}$.
Behmann \cite{beh:22} gave a decision method for \MONE that performs
second-order quantifier elimination by equivalence-preserving formula
rewriting \cite{cw-relmon,cw-behmann}. Almost three decades later he published
an adaptation of these techniques to the solution problem for
\name{Klassenlogik}
\cite{beh:50:aufloesungs:phil:1,beh:51:aufloesungs:phil:2}, which in essence
seems to be \MONE. It still remains open to assess this and apparently related
works by Löwenheim \cite{loewenheim:1910}.

Under a relaxed notion of substitutibility, the construction of \ELIM-witnesses
for \MONE is possible by joining Behmann's rewriting technique \cite{beh:22}
with the EHW-combination (Prop.~\ref{prop-ehw}).
Let $F[p]$ be a \MONE formula and let $p$ be a unary predicate.  Assume that
$\free{F} \cap \XX = \emptyset$. The reconstruction of Behmann's normalization
shown in the proofs of Lemma~14 and Lemma~16 of \cite{cw-behmann} can be
slightly modified to construct a formula $F^\prime = \bigvee_{i = 1}^{n}
F^{\prime\prime}_i$ that is equivalent to $\exists p\, F$ and such that each
$F^{\prime\prime}_i$ is of the form
\begin{equation}
F^{\prime\prime}_i = C_i \land \exists \uus_i\, (D_i(\uus_i) \land \exists p\,
(\forall y\, (A_i(\uus_iy) \imp p(y)) \land \forall y\, (p(y) \imp
B_i(\uus_iy)))),\end{equation} where $\uus_i$ is a sequence of individual
symbols such that $\uus_i \cap \free{C_i} = \emptyset$, predicate~$p$ has only
the two indicated occurrences and $\free{F^\prime_i} \subseteq \free{\exists
  p\, F}$.  Let
\begin{equation}F^{\prime\prime\prime}_i[p] = C_i \land D_i(\uus_i) \land \forall y\,
(A_i(\uus_iy) \imp p(y)) \land \forall y\, (p(y) \imp
  B_i(\uus_iy)).\end{equation} Then $F^{\prime\prime}_i \equiv \exists \uus_i
\exists p\, F^{\prime\prime\prime}_i \equiv \exists \uus_i
F^{\prime\prime}_i[A(\uus_i x_1)]$, where the last equivalence follows from
Ackermann's lemma (Prop.~\ref{prop-ackermann}).  It holds that
$\ssubst{F^{\prime\prime}_i}{p}{A(\uus_i x_1)}$ but, since the quantified
symbols $\uus$ may occur in $A(\uus_i x_1)$, the substitutibility condition
$\ssubst{\exists \uus\, F^{\prime\prime}_i}{p}{A(\uus_i x_1)}$ does not hold
in general.
The variables $\uus_i$ can be gathered to a single global prefix $\uus$
(assuming w.l.o.g. that none of them occurs free in any of the $C_i$) such
that $F \equiv \exists \uus\, F^{\prime\prime\prime\prime}[p]$ where
$F^{\prime\prime\prime\prime} = \bigvee_{i=1}^n F^{\prime\prime\prime}_i$.  By
Prop.~\ref{prop-ehw} we can construct a formula~$G(\uus)$ such that
$\ssubst{G(\uus)}{p}{F^{\prime\prime\prime\prime}}$ and $\exists p\,
F^{\prime\prime\prime\prime}[p] \equiv
F^{\prime\prime\prime\prime}[G(\uus)]$. This implies $\exists p\, F[p] \equiv
F[G(\uus)]$. However, substitutibility of $G(\uus)$ holds only with respect to
$F^{\prime\prime\prime\prime}$, while $\ssubst{G(\uus)}{p}{F}$ does not hold
in general.
Thus, under a relaxed notion of substitutibility that permits the
existentially quantified $\uus$ in the witness, the EHW-combination can be
applied to construct witnesses for \MONE formulas.

\section{Conclusion}
\label{sec-conclusion}

The \name{solution problem} and second-order quantifier \name{elimination}
were interrelated tools in the early mathematical logic.  Today elimination
has entered automatization with applications in the computation of
circumscription, in modal logics, and for semantic forgetting and modularizing
knowledge bases, in particular for description logics.  Since the solution
problem on the basis of first-order logic is, like first-order validity,
recursively enumerable there seems some hope to adapt techniques from
first-order theorem proving.

The paper makes the relevant scenario accessible from the perspective of
predicate logic and theorem proving.  It shows that a wealth of classical
material on Boolean equation solving can be transferred to predicate logic and
only few essential diverging points crystallize, like the constructability of
witness formulas for quantified predicates, and ``Schröder's reproductive
interpolant'' that does not apply in general to first-order logic.  An
abstracted version of the core property underlying the classical method of
successive eliminations provides a foundation for systematizing and
generalizing algorithms that reduce $n$-ary solution problems to unary
solution problems.  Special cases based on Craig interpolation have been
identified as first steps towards methods for solution construction.

Beyond the presented core framework there seem to be many results from
different communities that are potentially relevant for further
investigation. This includes the vast amount of techniques for equation
solving on the basis of Boolean algebra and its variations, developed over the
last 150 years.  For description logics there are several results on concept
unification, e.g., \cite{baader:narendran:01,baader:morawska:10}.  Variations of
Craig interpolation such as disjunctive interpolation \cite{ruemmer:disjipol}
share with the solution problem at least the objective to find substitution
formulas such that the overall formula becomes valid (or, dually,
unsatisfiable).

Among the issues that immediately suggest themselves for further research are
the parallel between nondeterministic methods with execution paths for each
particular solution and methods that compute a most general solution, the
exploration of formula simplifications and techniques such as definitional
normal forms to make constructions like \name{rigorous solution} and
\name{reproductive interpolant} feasible, and the investigation of the relaxed
notion of substitutibility under which solutions for relational monadic
formulas can be constructed.  The possible characterization of \name{solution}
by an entailment also brings up the question whether Skolemization and
Herbrand's theorem justify some ``instance-based'' technique for computing
solutions that succeeds on large enough quantifier expansions.

\subsection*{Acknowledgments}
The author thanks anonymous reviewers for their helpful comments.
Funded by the Deutsche Forschungsgemeinschaft (DFG, German
Research Foundation) -- Project-ID~264466967 and~457292495.

\markboth{\small \hfill}{\small References \hfill}
\renewcommand{\stitle}{References}
\phantomsection
\addcontentsline{toc}{section}{References}
\bibliographystyle{splncs04} \bibliography{bibelim06short}

\end{document}